\newif\ifcameraready
\newif\ifshowcomment
\keywords{Oracles; Blockchains; Smart Contracts; Transport Layer Security}
\author{Fan Zhang}
\affiliation{Cornell Tech}
\author{Deepak Maram}
\affiliation{Cornell Tech}
\author{Harjasleen Malvai}
\affiliation{Cornell University}
\author{Steven Goldfeder}
\affiliation{Cornell Tech}
\author{Ari Juels}
\affiliation{Cornell Tech}
\DeclareRobustCommand*\cal{\@fontswitch\relax\mathcal}
\crefname{definition}{Def.}{Def.}
\Crefname{definition}{Definition}{Definitions}
\crefname{section}{Sec.}{Sec.}
\Crefname{section}{Section}{Sections}
\crefname{appendix}{App.}{App.}
\Crefname{appendix}{Appendix}{Appendices}
\definecolor{mygreen}{rgb}{0,0.6,0}
\definecolor{mygray}{rgb}{0.5,0.5,0.5}
\definecolor{mymauve}{rgb}{0.58,0,0.82}
\tikzstyle{access} = [draw,semithick,fill=blue!20]
\newtheorem{claim}{Claim}
\colorlet{party}{Brown}
\colorlet{randomness}{Maroon}
\colorlet{protocol}{Black}
\colorlet{string}{BlueViolet}
\colorlet{entry}{NavyBlue}
\colorlet{public}{OliveGreen}
\colorlet{private}{Plum}
\newcommand{\stringlitt}[1]{\textcolor{string}{\text{``#1''}}}
\newcommand{\msgok}{\stringlitt{ok}}
\newcommand{\msgprove}{\stringlitt{prove}}
\newcommand{\msgproof}{\stringlitt{proof}}
\newcommand{\msgfinP}{\stringlitt{proverFinished}}
\newcommand{\msgfinS}{\stringlitt{serverFinished}}
\newcommand{\onrecv}{\textcolor{entry}{\bf On receiving}\xspace}
\newcommand{\oninit}{\textcolor{entry}{\bf On initialization}\xspace}
\renewcommand{\pccomment}[1]{\text{\textcolor{gray}{\scriptsize // #1}}}
\newcommand{\protocol}[3][\columnwidth]{
  \begin{boxedminipage}[t]{#1}
    \begin{center}
      \scriptsize{\textbf{#2}}
      \vspace{1em}
    \end{center}
    \procedure[mode=text, codesize=\footnotesize]{}{
      #3
    }
  \end{boxedminipage}
}
\newcommand{\protocolNoBox}[2]{
  \begin{center}
    \scriptsize{\textbf{#1}}
    \vspace{1em}
  \end{center}
  \procedure[mode=text, codesize=\footnotesize]{}{
    #2
  }
}
\newcommand{\aes}{\ensuremath{\mathsf{AES}}}
\newcommand{\cbc}{\ensuremath{\mathsf{CBC}}}
\newcommand{\gcm}{\ensuremath{\mathsf{GCM}}}
\newcommand{\gf}{\ensuremath{\mathsf{GF}}}
\newcommand{\boldhead}[1]{\vspace{2pt}\noindent\textbf{#1.}}
\newcommand{\VEC}[1]{\ensuremath{\bm{#1}}}
\renewcommand{\mac}{\ensuremath{\mathsf{MAC}}}
\newcommand{\mackey}{\ensuremath{\key^{\text{\mac}}}}
\newcommand{\enckey}{\ensuremath{\key^{\text{\enc}}}}
\renewcommand{\prover}{\textcolor{party}{\mathcal{P}}}
\renewcommand{\verifier}{\textcolor{party}{\mathcal{V}}}
\newcommand{\tlsserver}{\textcolor{party}{\mathcal{S}}}
\newcommand{\PUB}{\textcolor{public}{\theta_p}}
\newcommand{\PRI}{\textcolor{private}{\theta_s}}
\newcommand{\ZKP}[1]{\mathsf{ZK\text{-}PoK}\{#1\}}
\newcommand{\hmac}{\mathsf{HMAC}}
\newcommand{\REV}{\ensuremath{\mathsf{Reveal}}}
\newcommand{\RED}{\ensuremath{\mathsf{Redact}}}
\newcommand{\grammar}{\mathcal{G}}
\newcommand{\cons}{\mathsf{cons}}
\newcommand{\start}{\texttt{\underline{start}}\xspace}
\newcommand{\midd}{\texttt{\underline{middle}}\xspace}
\newcommand{\End}{\texttt{\underline{end}}\xspace}
\newcommand{\idealOracle}{\mathcal{F}_{\text{Oracle}}}
\newcommand{\PRED}{\mathsf{Stmt}}
\newcommand{\query}{\mathsf{Query}}
\newcommand{\PROT}{\mathsf{Prot}}
\newcommand{\mackeyP}{\mackey_{\prover}}
\newcommand{\mackeyV}{\mackey_{\verifier}}
\newcommand{\CERT}{\mathsf{cert}}
\newcommand{\MTA}{\mathsf{MtA}}
\newcommand{\idealtwopc}{\mathcal{F}_\text{2PC}}
\newcommand{\randc}{r_c}
\newcommand{\rands}{r_s}
\newcommand{\protadd}{\mathsf{ECtF}}
\newcommand{\secretP}{s_P}
\newcommand{\secretV}{s_V}
\newcommand{\pubkeyS}{Y_S}
\newcommand{\keyP}{\key_{\prover}}
\newcommand{\keyV}{\key_{\verifier}}
\newcommand{\CBC}{\mathsf{CBC}}
\newcommand{\idealZK}{\mathcal{F}_{\text{ZK}}}
\newcommand{\idealTwoPC}{\mathcal{F}_\text{2PC}}
\newcommand{\decoPROT}{\PROT_{\systemname}}
\newcommand{\gfgcm}{\FF_{2^{128}}}
\newcommand{\INC}{\mathsf{inc}}
\newcommand{\TAG}{\mathsf{Tag}}
\newcommand{\GHASHPOLY}{P_{\VEC{A} \| \VEC{C}  \| \ell_A \| \ell_C}}
\newcommand{\idealPPGCM}{\mathcal{F}_\text{PP}}
\newcommand{\idealAesSameMsg}{\mathcal{F}_\text{AES-EqM}}
\newcommand{\idealAesSameMsgOneOutput}{\mathcal{F}_\text{AES-EqM-Asym}}
\newcommand{\receiver}{\mathsf{receiver}}
\newcommand{\sid}{\ensuremath{\mathsf{sid}}}
\newcommand{\rec}{\mathsf{receiver}}
\newcommand{\vspacecamerareadyonly}[1]{\ifcameraready\vspace{#1}\fi}
\newcommand{\posthandshake}{query execution\xspace}
\newcommand{\Posthandshake}{Query execution\xspace} 
\newcommand{\fanz}[1]{\textsf{\color{red}{[Fan: {#1}]}}}
\newcommand{\ari}[1]{\textsf{\color{blue}{[Ari: {#1}]}}}
\newcommand{\deepak}[1]{\textsf{\color{violet}{[Deepak: {#1}]}}}
\newcommand{\jasleen}[1]{\textsf{\color{cyan}{[Jasleen: {#1}]}}}
\newcommand{\steven}[1]{\textsf{\color{brown}{[Steven: {#1}]}}}
\newcommand{\fanz}[1]{}
\newcommand{\ari}[1]{}
\newcommand{\deepak}[1]{}
\newcommand{\jasleen}[1]{}
\newcommand{\steven}[1]{}
\newcommand{\systemname}{\ensuremath{\mathsf{DECO}}\xspace}
\title{\systemname: Liberating Web Data Using Decentralized Oracles for TLS}
\begin{document}
\ifcameraready
\fancyhead{}
\fi

\begin{abstract}

Thanks to the widespread deployment of TLS, users can access private data over channels with end-to-end confidentiality and integrity. What they cannot do, however, is prove to third parties the {\em provenance} of such data, i.e., that it genuinely came from a particular website.
Existing approaches either introduce undesirable trust assumptions or require server-side modifications.

Users' private data is thus locked up at its point of origin. Users cannot export data in an integrity-protected way to other applications without help and permission from the current data holder.

We propose \systemname (short for \underline{dec}entralized \underline{o}racle) to address the above problems. \systemname allows users to prove that a piece of data accessed via TLS came from a particular website and optionally prove statements about such data in zero-knowledge, keeping the data itself secret. \systemname is the first such system that works without trusted hardware or server-side modifications.

\systemname can liberate private data from centralized web-service silos, making it accessible to a rich spectrum of applications.
To demonstrate the power of \systemname, we implement three applications that are hard to achieve without it: a private financial instrument using smart contracts, converting legacy credentials to anonymous credentials, and verifiable claims against price discrimination.
\end{abstract} 
\maketitle

\section{Introduction}

TLS is a powerful, widely deployed protocol that allows users to access web data over confidential, integrity-protected channels. But TLS has a serious limitation: it doesn't allow a user to prove to third parties that a piece of data she has accessed authentically came from a particular website. As a result, data use is often restricted to its point of origin,
curtailing data portability by users, a right acknowledged by recent regulations such as GDPR~\cite{gdpr}.

Specifically, when a user accesses data online via TLS, she cannot securely {\em export} it, without help (hence permission) from the current data holder.
Vast quantities of private data are thus intentionally or unintentionally locked up in the ``deep web''---the part of the web that isn't publicly accessible.

To understand the problem, suppose Alice wants to prove to Bob that she's over 18.
Currently, age verification services~\cite{age-veri} require users to upload IDs and detailed personal information, which raises privacy concerns.
But various websites, such as company payroll records or DMV websites, in principle store and serve verified birth dates.
Alice could send a screenshot of her birth date from such a site, but this is easily forged. And even if the screenshot could somehow be proven authentic, it would leak information---revealing her exact birth date, not just that she's over 18.

Proposed to prove provenance of online data to smart contracts,
{\em oracles} are a step towards exporting TLS-protected data to other systems with provenance and integrity assurances.
Existing schemes, however, have serious limitations. They either only work with deprecated TLS versions and offer no privacy from the oracle (e.g., TLSNotary~\cite{tlsnotary}) or rely on trusted hardware (e.g., Town Crier~\cite{zhang2016town}), against which various attacks have recently emerged, e.g.,~\cite{DBLP:conf/uss/BulckMWGKPSWYS18}.

Another class of oracle schemes assumes server-side cooperation, mandating that servers install TLS extensions (e.g.,~\cite{ritzdorf2017tls}) or change application-layer logic~(e.g.,~\cite{http-signatures,http-origin-signed-responses}).
Server-facilitated oracle schemes suffer from two fundamental problems.
First, they break legacy compatibility, causing a significant barrier to wide adoption.
Moreover, such solutions only provide \emph{conditional} exportability because the web servers have the sole discretion to determine which data can be exported, and can censor export attempts at will.
A mechanism that allows users to export {\em any} data they have access to would enable a whole host of currently unrealizable applications.

\subsection{\systemname}

To address the above problems, we propose
\systemname, a \underline{dec}entralized \underline{o}racle for TLS.
Unlike oracle schemes that require per-website support, \systemname is source-agnostic and supports {\em any} website running standard TLS. Unlike solutions that rely on websites' participation, \systemname requires no server-side cooperation. Thus a single instance of \systemname could enable {\em anyone} to become an oracle for {\em any} website.

\systemname  makes rich Internet data accessible with authenticity and privacy assurances to a wide range of applications, including
ones that cannot access the Internet such as smart contracts.
\systemname could fundamentally shift today's model of web data dissemination by providing private data delivery with an option for transfer to third parties or public release.
This technical capability highlights potential future legal and regulatory challenges, but also anticipates the creation and delivery of appealing new services. Importantly, \systemname does not require trusted hardware, unlike alternative approaches that could achieve a similar vision, e.g.,~\cite{delegatee,zhang2016town}.

At a high level, the prover commits to a piece of data $D$ and proves to the verifier that $D$ came from a TLS server $S$ and optionally a statement $\pi_D$ about $D$.
E.g., in the example of proving age, the statement $\pi_D$ could be the predicate ``$D=y/m/d$ is Alice's date of birth and the current date - $D$ is at least 18 years.''

Informally, \systemname achieves {\em authenticity}: The verifier is convinced only if the asserted statement about $D$ is true and $D$ is indeed obtained from website $S$. \systemname also provides {\em privacy} in that the verifier only learns the that the statement $\pi_D$ holds for some $D$ obtained from $S$.

\vspacecamerareadyonly{-1mm}
\subsection{Technical challenges}

Designing \systemname with the required security and practical performance, while using legacy-(TLS)-compatible primitives, introduces several important technical challenges. The main challenge stems from the fact that TLS generates symmetric encryption and authentication keys that are {\em shared} by the client (prover in \systemname) and web server. Thus, the client can {\em forge} arbitrary TLS session data, in the sense of signing the data with valid authentication keys.

To address this challenge, \systemname introduces a novel {\em three-party handshake} protocol among the prover,
verifier, and web server that creates an {\em unforgeable commitment} by the prover to the verifier on a piece of TLS session data $D$.
The verifier can check that $D$ is authentically from the TLS server.
From the prover's perspective, the three-party handshake preserves the security of TLS in presence of a malicious verifier.

\boldhead{Efficient selective opening}
After committing to $D$, the prover proves statements about the commitment.
Although arbitrary statements can be supported in theory, we optimize for what are likely to be the most popular applications---revealing only substrings of the response to the verifier. We call such statements {\em selective opening}. Fine-grained selective opening allows users to hide sensitive information and reduces the input length to the subsequent proofs.

A na\"ive solution would involve expensive verifiable decryption of TLS records using generic zero-knowledge proofs (ZKPs),
but we achieve an orders-of-magnitude efficiency improvement by exploiting the TLS record structure.
For example, a direct implementation of verifiable decryption of a TLS record would involve proving correct execution of a circuit of 1024 AES invocations in zero-knowledge, whereas by leveraging the MAC-then-encrypt structure of CBC-HMAC, we achieve the same with only 3 AES invocations.

\boldhead{Context integrity}
Selective opening allows the prover to only reveal a substring $D'$ of the server's response $D$. However, a substring may mean different things depending on when it appears and a malicious prover could cheat by quoting out of context.
Therefore we need to prove not just that $D'$ appears in $D$, but that it appears in the expected context, i.e., $D'$ has {\em context integrity} with respect to $D$.
(Note that this differs from ``contextual integrity'' in privacy theory~\cite{nissenbaum2009privacy}.)

Context-integrity attacks can be thwarted if the session content is structured and can be parsed. Fortunately most web data takes this form (e.g., in JSON or HTML).
A generic solution is to parse the entire session and prove that the revealed part belongs to the necessary branch of a parse tree.
But, under certain constraints that web data generally satisfies, parsing the entire session is not necessary.
We propose a novel {\em two-stage parsing scheme} where the prover pre-processes the session content, and only parses the outcome that is usually much smaller.
We draw from the definition of equivalence of programs, as used in programming language theory, to build a formal framework to reason about the security of two-stage parsing schemes. We provide several practical realizations for specific grammars. Our definitions and constructions generalize to other oracles too.
For example, it could prevent a generic version of the content-hidden attack mentioned in~\cite{ritzdorf2017tls}.

\subsection{Implementation and evaluation}

We designed and implemented \systemname as a complete end-to-end system.
To demonstrate the system's power,
we implemented three applications:
1) a confidentiality-preserving {\em financial instrument} using smart contracts;
2) converting legacy credentials to {\em anonymous credentials}; and
3) verifiable claims against {\em price discrimination}.

Our experiments with these applications show that \systemname is highly efficient. For example, for TLS 1.2 in the WAN setting, online time is 2.85s to perform the three-party handshake and 2.52s for 2PC query execution. It takes 3-13s to generate zero-knowledge proofs for the applications described above. More details are in~\cref{sec:eval}.%

\boldhead{Contributions}
In summary, our contributions are as follows:

\begin{itemize}[leftmargin=*]
\item We introduce {\bf \systemname}, a provably secure decentralized oracle scheme, along with an implementation and performance evaluation. \systemname is the first oracle scheme for modern TLS versions (both 1.2 and 1.3) that doesn't require trusted hardware or server-side modifications. We provide an overview of the protocol in~\cref{sec:overview} and specify the full protocol in~\cref{sec:deco protocols}.

\item {\bf Selective opening}: In~\cref{sec:seletive opening}, we introduce a broad class of statements for TLS records that can be proven efficiently in zero-knowledge. They allow users to open only substrings of a session-data commitment. The optimizations achieve substantial efficiency improvement over generic ZKPs.

\item {\bf Context-integrity attacks and mitigation}: We identify a new class of context-integrity attacks universal to privacy-preserving oracles (e.g.~\cite{ritzdorf2017tls}).
In~\cref{subsec:two stage parsing}, we introduce
our mitigation involving a novel, efficient two-stage parsing scheme, along with a formal security analysis, and several practical realizations.

\item {\bf Security definitions and proofs}:
Oracles are a key part of the smart contract ecosystem, but a coherent security definition has been lacking.
We formalize and strengthen existing oracle schemes and present a formal security definition using an ideal functionality in~\cref{sec:sec def}. We prove the functionality is securely realized by our protocols in~\cref{sec:securityproofs}.

\item {\bf Applications and evaluation}: In~\cref{sec:applications}, we present three representative applications that showcase \systemname's capabilities, and evaluate them in~\cref{sec:eval}.
\item {\bf Legal and compliance considerations}: \systemname can export data from websites without their explicit approval or even awareness. We discuss the resulting legal and compliance issues in~\cref{sec:legal}.
\end{itemize} %
\section{Background}
\label{sec:background}

\subsection{Transport Layer Security (TLS)}

We now provide necessary background on the TLS handshake and record protocols on which \systemname builds.

TLS is a family of protocols that provides privacy and data integrity between two communicating applications.
Roughly speaking, it consists of two protocols: a handshake protocol that sets up the session using asymmetric cryptography, establishing shared client and server keys for the next protocol, the record protocol, in which data is transmitted with confidentiality and integrity protection using symmetric cryptography.

\boldhead{Handshake}
In the handshake protocol, the server and client first agree on a set of cryptographic algorithms (also known as a cipher suite). They then authenticate each other (client authentication optional), and finally securely compute a shared secret to be used for the subsequent record protocol.

\systemname supports the recommended elliptic curve DH key exchange with ephemeral secrets (ECDHE~\cite{ecc-tls}).%

\boldhead{Record protocol} To transmit application-layer data (e.g., HTTP messages) in TLS, the record protocol first fragments the application data $\VEC{D}$ into fixed sized plaintext {\em records} $\VEC{D}=(D_1, \cdots,D_n)$. Each record is usually padded to a multiple of blocks (e.g., 128 bits).
The record protocol then optionally compresses the data, applies a MAC, encrypts, and transmits the result. Received data is decrypted, verified, decompressed, reassembled, and then delivered to higher-level protocols.
The specific cryptographic operations depend on the negotiated ciphersuite.
\systemname supports the AES cipher in two commonly used modes: CBC-HMAC and GCM.
We refer readers to~\cite{RFC5246} for how these primitives are used in TLS.

\boldhead{Differences between TLS 1.2 and 1.3}
Throughout the paper we focus on TLS 1.2 and discuss how to generalize our techniques to TLS 1.3 in \cref{para:tls1.3}. Here we briefly note the major differences between these two TLS versions.
TLS 1.3 removes the support for legacy non-AEAD ciphers. The handshake flow has also been restructured. All handshake messages after the ServerHello are now encrypted. Finally, a different key derivation function is used. For a complete description, see~\cite{RFC8446}.

\subsection{Multi-party computation}
Consider a group of $n$ parties ${\cal P}_1,\ldots,{\cal P}_n$, each of whom holds some secret $s_i$. Secure multi-party computation (MPC) allows them to jointly compute $f(s_i,\cdots,s_n)$ without leaking any information other than the output of $f$, i.e., ${\cal P}_i$ learns nothing about $s_{j\neq i}$.
Security for MPC protocols generally considers an adversary that corrupts $t$ players and attempts to learn the private information of an honest player. Two-party computation (2PC) refers to the special case of $n=2$ and $t=1$. %
We refer the reader to~\cite{lindell2005secure} for a full discussion of the model and formal security definitions.

There are two general approaches to 2PC protocols. Garbled-circuit protocols based on Yao~\cite{yao1982protocols} encode $f$ as a boolean circuit, an approach best-suited for bitwise operations (e.g., SHA-256). Other protocols leverage {\em threshold secret sharing} and are best suited for arithmetic operations. The functions we compute in this paper using 2PC, though, include both bitwise and arithmetic operations. We separate them into two components, and use the optimized garbled-circuit protocol from \cite{DBLP:conf/ccs/WangRK17} for the bitwise operations and the secret-sharing based $\MTA$ protocol from \cite{gennaro2018fast} for the arithmetic operations.

\section{Overview}
\label{sec:overview}

In this section we state the problem we try to solve with  \systemname and present a high-level overview of its architecture.

\subsection{Problem statement: Decentralized oracles}

Broadly, we investigate protocols for building ``oracles,'' i.e., entities that can prove provenance and properties of online data.
The goal is to allow a prover $\prover$ to prove to a verifier $\verifier$ that a piece of data came from a particular website $\tlsserver$ and optionally prove statements about such data in zero-knowledge, keeping the data itself secret.
Accessing the data may require private input (e.g., a password) from $\prover$ and such private information should be kept secret from $\verifier$ as well.

We focus on servers running TLS, the most widely deployed security protocol suite on the Internet. However, TLS alone does not prove data provenance. Although TLS uses public-key signatures for authentication, it uses symmetric-key primitives to protect the integrity and confidentiality of exchanged messages, using a shared session key established at the beginning of each session. Hence $\prover$, who knows this symmetric key, cannot prove statements about cryptographically authenticated TLS data to a third party.

A web server itself could assume the role of an oracle, e.g., by simply signing  data. However, server-facilitated oracles would not only incur a high adoption cost, but also put users at a disadvantage: the web server could impose arbitrary constraints on the oracle capability.
We are interested in a scheme where anyone can prove provenance of any data she can access, without needing to rely on a single, central point of control, such as the web server providing the data.

We tackle these challenges by introducing {\em decentralized oracles} that don't rely on trusted hardware or cooperation from web servers. The problem is much more challenging than for previous oracles, as it precludes solutions that require servers to modify their code or deploy new software, e.g.,~\cite{ritzdorf2017tls}, or use of prediction markets, e.g.,~\cite{adler2018astraea,peterson2015augur}, while at the same time going beyond these previous approaches by supporting proofs on arbitrary predicates over data.
Another approach, introduced in~\cite{zhang2016town}, is to use trusted execution environments (TEEs) such as Intel SGX. The downside is that recent attacks~\cite{DBLP:conf/uss/BulckMWGKPSWYS18} may deter some users from trusting TEEs.

\boldhead{Authenticated data feeds for smart contracts}
\label{para:adf}
An important application of oracle protocols is to construct authenticated data feeds (ADFs, as coined in~\cite{zhang2016town}), i.e., data with verifiable provenance and correctness, for smart contracts.
Protocols such as~\cite{zhang2016town} generate ADFs by signing TLS data using a key kept secret in a TEE. However, the security of this approach relies on that of TEEs. Using multiple TEEs could help achieve stronger integrity, but not privacy. If a single TEE is broken, TLS session content, including user credentials, can leak from the broken TEE.%

\systemname operates in a different model. Since smart contracts can't participate in 2PC protocols, they must rely on oracle nodes to participate as $\verifier$ on their behalf. Therefore we envision \systemname being deployed in a decentralized oracle network similar to~\cite{SCCL:2017}, where a set of independently operated oracles are available for smart contracts to use.
Note that oracles running \systemname are trusted only for integrity, not for privacy. Smart contracts can further hedge against integrity failures by querying multiple oracles and requiring, e.g., majority agreement, as already supported in~\cite{SCCL:2017}.
We emphasize that \systemname's privacy is preserved even all oracles  are compromised.
Thus \systemname enables users to provide ADFs derived from private data to smart contracts while hiding private data from oracles.

\subsection{Notation and definitions}
\label{sec:sec def}

We use $\prover$ to denote the prover, $\verifier$ the verifier and $\tlsserver$ the TLS server.
We use letters in boldface  (e.g., $\VEC{M}$) to denote vectors and $M_i$ to denote the $i$th element in $\VEC{M}$. %

\begin{figure}
\centering
\protocol{Functionality $\idealOracle$ between $\tlsserver, \prover$ and $\verifier$}{
\textbf{Input:}
The prover $\prover$ holds some private input $\PRI$. The verifier $\verifier$ holds a query template $\query$ and a statement $\PRED$. \\
\textbf{Functionality:}
\begin{itemize}[leftmargin=*]
    \item If at any point during the session, a message $(\sid, \rec, m)$ with $\rec \in \{\tlsserver, \prover, \verifier\}$ is received from $\adv$, forward $(\sid, m)$ to $\receiver$ and forward any responses to $\adv$.
    \item Upon receiving input $(\sid, \query, \PRED)$ from $\verifier$, send $(\sid, \query, \PRED)$ to $\prover$. Wait for $\prover$ to reply with $\msgok$ and $\PRI$.
    \item Compute $Q=\query(\PRI)$ and send $(\sid, Q)$ to $\tlsserver$ and record its response $(\sid, R)$. Send $(\sid, |Q|, |R|)$ to $\adv$.
    \item Send $(\sid, Q, R)$ to $\prover$ and (\sid, $\PRED(R), \tlsserver)$ to $\verifier$.
\end{itemize}
}
\caption{The oracle functionality.}
\label{fig:ideal oracleUC}
\end{figure}

We model the essential properties of an oracle using an ideal functionality $\idealOracle$ in~\cref{fig:ideal oracleUC}. To separate parallel runs of $\idealOracle$, all messages are tagged with a unique \emph{session id} denoted $\sid$. We refer readers to~\cite{uc} for details of ideal protocol execution.%

$\idealOracle$ accepts a secret parameter $\PRI$ (e.g., a password) from $\prover$,  a query template $\query$ and a statement $\PRED$ from $\verifier$.
A query template is a function that takes $\prover$'s secret $\PRI$ and returns a complete query, which contains public parameters specified by $\verifier$.
An example query template would be $\query(\PRI)$ = ``stock price of GOOG on Jan 1st, 2020 with API key = $\PRI$''.
The prover $\prover$ can later prove that the query sent to the server is well-formed, i.e., built from the template, without revealing the secret.
The statement $\PRED$ is a function that $\verifier$ wishes to evaluate on the server's response. Following the previous example, as the response $R$ is a number, the following statement would compare it with a threshold: $\PRED(R)$ = ``$R > \$1,000$''.

After $\prover$ acknowledges the query template and the statement (by sending $\msgok$ and $\PRI$), $\idealOracle$ retrieves a response $R$ from $\tlsserver$ using a query built from the template. We assume an honest server, so $R$ is the ground truth. $\idealOracle$ sends $\PRED(R)$ and the data source to $\verifier$. %

As stated in~\cref{def:oracle protocol}, we are interested in decentralized oracles that don't require any server-side modifications or cooperation, i.e., $\tlsserver$ follows the unmodified TLS protocol.  %

\begin{definition}
A decentralized oracle protocol for TLS is a three-party protocol $\PROT=(\PROT_{\tlsserver}, \PROT_{\prover}, \PROT_{\verifier})$
such that 1) $\PROT$ realizes $\idealOracle$ and 2) $\PROT_{\tlsserver}$ is the standard TLS, possibly along with an application-layer protocol.
\label{def:oracle protocol}
\end{definition}

\boldhead{Adversarial model and security properties} We consider a static, malicious network adversary $\adv$. Corrupted parties may deviate arbitrarily from the protocol and reveal their states to $\adv$. As a network adversary, $\adv$ learns the message length from $\idealOracle$ since TLS is not length-hiding.
We assume $\prover$ and $\verifier$  choose and agree on an appropriate query (e.g., it should be idempotent for most applications) and statement according to the application-layer protocol run by $\tlsserver$.

For a given query $Q$, denote the server's honest response by $\tlsserver(Q)$. We require that security holds when either $\prover$ or $\verifier$ is corrupted.
The functionality $\idealOracle$ reflects the following security guarantees:

\begin{itemize}[leftmargin=*]
\item \emph{Prover-integrity:} A malicious $\prover$ cannot forge content provenance, nor can she cause $\tlsserver$ to accept invalid queries or respond incorrectly to valid ones. Specifically, if the verifier inputs $(\query, \PRED)$ and outputs $(b,\tlsserver)$, then $\prover$ must have sent $Q=\query(\PRI)$ to $\tlsserver$ in a TLS session, receiving  response $R=\tlsserver(Q)$ such that $b=\PRED(R)$.
\item \emph{Verifier-integrity:} A malicious $\verifier$ cannot cause $\prover$ to receive incorrect responses. Specifically, if $\prover$ outputs $(Q,R)$ then $R$ must be the server's response to query $Q$ submitted by $\prover$, i.e., $R=\tlsserver(Q)$.
\item \emph{Privacy:} A malicious  $\verifier$ learns only public information $(\query, \tlsserver)$ and the evaluation of $\PRED(R)$.
\end{itemize}

\subsection{A strawman protocol}
\label{sec:strawman}

We focus on two widely used representative TLS cipher suites: CBC-HMAC and AES-GCM.
Our technique generalizes to other ciphers (e.g., Chacha20-Poly1305, etc.) as well.
Throughout this section we use CBC-HMAC to illustrate the ideas, with discussion of GCM deferred to later sections.

TLS uses separate keys for each direction of communication. Unless explicitly specified, we don't distinguish between the two and use $\enckey$ and $\mackey$ to denote session keys for both directions.

In presenting our design of \systemname, we start with a strawman protocol and incrementally build up to the full protocol.

\newcommand{\SESSION}{\VEC{\hat M}}

\boldhead{A strawman protocol} A strawman protocol that realizes $\idealOracle$ between $(\prover, \verifier)$ is as follows. $\prover$ queries the server $\tlsserver$ and records all messages sent to and received from the server in $\VEC{\hat Q}=(\hat Q_1,\dots, \hat Q_n)$ and $\VEC{\hat R}=(\hat R_1, \dots, \hat R_n)$, respectively. Let $\SESSION=(\VEC{\hat Q}, \VEC{\hat R})$ and $(\mackey, \enckey)$ be the session keys.

She then proves in zero-knowledge that 1) each $\hat R_i$ decrypts to $R_i \| \sigma_i$, a plaintext record and a MAC tag; 2) each MAC tag $\sigma_i$ for $R_i$ verifies against $\mackey$; and 3) the desired statement evaluates to $b$ on the response, i.e., $b=\PRED(\VEC{R})$.
Using the now standard notation introduced in~\cite{camenisch1997efficient}, $\prover$ computes
\begin{multline*}
p_r = \ZKP{\enckey, \VEC{R}: \forall i\in[n], \dec(\enckey, \hat R_i)=R_i \|\sigma_i \\
\land \mathsf{Verify}(\mackey, \sigma_i, R_i)=1 \land \PRED(\VEC{R})=b }.
\label{eq:zkp}
\end{multline*}

She also proves that $\VEC{Q}$ is well-formed as $\VEC{Q}=\query(\PRI)$ similarly in a proof $p_q$ and sends $(p_q, p_r, \mackey, \SESSION, b)$ to $\verifier$.

Given that $\SESSION$ is an authentic transcript of the TLS session, the prover-integrity property seems to hold. Intuitively, CBC-HMAC ciphertexts bind to the underlying plaintexts, thus $\SESSION$ can be treated as secure commitments~\cite{DBLP:conf/crypto/GrubbsLR17} to the session data. That is, a given $\SESSION$ can only be opened (i.e., decrypted and MAC checked) to a unique message.
The binding property prevents $\prover$ from opening $\SESSION$ to a different message other than the original session with the server.

Unfortunately, this intuition is flawed.
The strawman protocol fails completely because it {\em cannot} ensure the authenticity of $\SESSION$. The prover $\prover$ has the session keys, and thus she can include the encryption of arbitrary messages in $\SESSION$. %

Moreover, the zero-knowledge proofs that $\prover$ needs to construct involve decrypting and hashing the entire transcript, which can be prohibitively expensive. For the protocol to be practical, we need to significantly reduce the cost.

\subsection{Overview of \systemname}
\label{sec:overview of deco}

The critical failing of our strawman approach is that $\prover$ learns the session key before she commits to the session. One key idea in \systemname is to withhold the MAC key from $\prover$ until {\em after} she commits.
The TLS session between $\prover$ and $\tlsserver$ must still provide confidentiality and integrity. Moreover, the protocol must not degrade performance below the requirements of TLS (e.g., triggering a timeout).

As shown in~\cref{fig:overview}, \systemname is a three-phase protocol. The first phase is a novel {\bf three-party handshake} protocol in which the prover $\prover$, the verifier $\verifier$, and the TLS server $\tlsserver$ establish session keys that are {\em secret-shared between $\prover$ and $\verifier$}. After the handshake is a {\bf \posthandshake} phase during which $\prover$ accesses the server following the standard TLS protocol, but with help from $\verifier$. After $\prover$ commits to the query and response, $\verifier$ reveals her key share. Finally, $\prover$ proves statements about the response in a {\bf proof generation} phase.

\begin{figure}[t]
\centering
\resizebox{1.0\columnwidth}{!}{
\begin{tikzpicture}
\tikzstyle{line} = [thick,>=latex];
\tikzstyle{comment} = [text centered];
\tikzstyle{protocol} = [rounded corners,inner sep=2mm, draw=black!50, fill=black!10,font=\sffamily\Large];
\tikzstyle{party} = [rounded corners, inner sep=2mm, minimum height=.8cm,rectangle];

\node[party] (prover) at (5, 0) {Prover $\prover$};
\node[party] (verifier) at (10, 0) {Verifier $\verifier$};
\node[party] (server) at (0, 0) {Server $\tlsserver$};
\draw[dashed] (server.south) -- (0, -6);
\draw[dashed] (prover.south) -- (5, -6);
\draw[dashed] (verifier.south) -- (10, -6);
\scoped[yshift=-1.2cm]{
\node[comment,protocol,minimum width=11cm] (tph) at (5, 0) {Three-party handshake (\cref{sec:handshake})};
\node[fill=white,comment,anchor=south] (ks) at (0,-1.4) {Session keys $\key$};
\node[fill=white,anchor=south] (kp) at (5,-1.4) {$\key_{\prover}$};
\node[fill=white,anchor=south] (kv) at (10,-1.4) {$\key_{\verifier}$};
\draw[->,line] (tph.south -| ks) -- (ks);
\draw[->,line] (tph.south -| kp) -- (kp);
\draw[->,line] (tph.south -| kv) -- (kv);
}%
\scoped[yshift=-3.5cm]{
\node[comment,protocol, minimum width=6cm] (post-hs) at (7.5,0) {\Posthandshake (\cref{sec:post handshake})};
\draw[->, line] (kp) -- (post-hs.north -| kp);
\draw[->, line] (kv) -- (post-hs.north -| kv);
\draw[<->,line] (post-hs.west) -- node[midway,above,comment]{Send $Q=\query(\PRI)$} node[midway,below,comment] {Receive response $R$} (0, 0);
}%
\scoped[yshift=-4.5cm]{
\draw[->,line] (5, 0) -- node[midway,above,comment]{commit to $(Q,R)$} (10, 0);
\draw[<-,line] (5, -.6) -- node[midway,above,comment]{$\key_{\verifier}$} (10, -.6);
\node (verify) at (3, -.6) {verify $R$ using $\key_{\prover}$ and $\key_{\verifier}$};
\node[comment,protocol, minimum width=6cm] (pgen) at (7.5,-1.4) {\sf Proof generation (\cref{sec:proof gen})};
\node[anchor=east] (kpri) at (4.2,-1.2) {$\PRI$};
\node[anchor=east] (kp) at (4.2,-1.6) {$\key_{\prover}$};
\draw[->, line] (kpri.east) -- (pgen.west |- kpri.east);
\draw[->, line] (kp.east) -- (pgen.west |- kp.east);
}%

\end{tikzpicture}
}
\caption{An overview of the workflow in \systemname.
\ifcameraready\else{\normalfont The protocol has three phases: a {\bf three-party handshake} phase to establish session keys in a special format to achieve unforgeability, a {\bf \posthandshake} phase where $\prover$ queries the server for data using a query built from the template with her private parameters $\PRI$, and finally a {\bf proof generation} phase in which $\prover$ proves that the query is well-formed and the response satisfies the desired condition.}\fi}
\label{fig:overview}
\end{figure}

\subsubsection{Three-party handshake}

Essentially, $\prover$ and $\verifier$ {\em jointly} act as a TLS client. They negotiate a shared session key with $\tlsserver$ in a secret-shared form. We emphasize that this phase, like the rest of \systemname, is completely transparent to $\tlsserver$, requiring no server-side modifications.

For the CBC-HMAC cipher suite, at the end of the three-party handshake, $\prover$ and $\verifier$ receive $\mackeyP$ and $\mackeyV$ respectively, while $\tlsserver$ receives $\mackey=\mackeyP + \mackeyV$. As with the standard handshake, both $\prover$ and $\tlsserver$ get the encryption key $\enckey$.

Three-party handshake can make the aforementioned session-data commitment unforgeable as follows. At the end of the session, $\prover$ first commits to the session in $\SESSION$ as before, then $\verifier$ reveals her share $\mackeyV$.
From $\verifier$'s perspective, the three-party handshake protocol ensures that a fresh MAC key (for each direction) is used for every session, despite the influence of a potential malicious prover, and that the keys are unknown to $\prover$ until she commits.
Without knowledge of the MAC key, $\prover$ cannot forge or tamper with session data before committing to it. The unforgeability of the session-data commitment in \systemname thus reduces to the unforgeability of the MAC scheme used in TLS.

Other ciphersuites such as GCM can be supported similarly. In GCM, a single key (for each direction) is used for both encryption and MAC. The handshake protocol similarly secret-shares the key between $\prover$ and $\verifier$.
The handshake protocol are presented in~\cref{sec:handshake}.

\subsubsection{\Posthandshake}
Since the session keys are secret-shared, as noted, $\prover$ and $\verifier$ execute an interactive protocol to construct a TLS message encrypting the query. $\prover$ then sends the message to $\tlsserver$ as a standard TLS client. For CBC-HMAC, they compute the MAC tag of the query, while for GCM they perform authenticated encryption. Note that the query is private to $\prover$ and should not be leaked to $\verifier$.
Generic 2PC would be expensive for large queries, so we instead introduce custom 2PC protocols that are orders-of-magnitude more efficient than generic solutions, as presented in \cref{sec:post handshake}.

As explained previously, $\prover$ commits to the session data  $\SESSION$ before receiving $\verifier$'s key share, making the commitment unforgeable. Then $\prover$ can verify the integrity of the response, and prove statements about it, which we present now.

\vspacecamerareadyonly{-1mm}
\subsubsection{Proof generation}
\label{subsub:proof_gen}

With unforgeable commitments, if $\prover$ opens the commitment $\SESSION$ completely (i.e., reveals the encryption key) then $\verifier$ could easily verify the authenticity of $\SESSION$ by checking MACs on the decryption.

Revealing the encryption key for $\SESSION$, however, would breach privacy: it would reveal {\em all} session data exchanged between $\prover$ and $\tlsserver$.
In theory, $\prover$ could instead prove any statement $\PRED$ over $\SESSION$ in zero knowledge (i.e., without revealing the encryption key). Generic zero-knowledge proof techniques, though, would be prohibitively expensive for many natural choices of $\PRED$.

\systemname instead introduces two techniques to support efficient proofs for a broad, general class of statement, namely {\em selective opening} of a TLS session transcript. Selective opening involves either {\em revealing} a substring to $\verifier$ or {\em redacting}, i.e., excising, a substring, concealing it from $\verifier$. %

As an example, \cref{fig:example} shows a simplified JSON bank statement for Bob. Suppose Bob ($\prover$) wants to reveal his checking account balance to $\verifier$. Revealing the decryption key for his TLS session would be undesirable: it would {\em also} reveal the entire statement, including his transactions.
Instead, using techniques we introduce, Bob can efficiently reveal only the substring in lines 5-7. Alternatively, if he doesn't mind revealing his savings account balance, he might redact his transactions after line 7.

The two selective opening modes, revealing and redacting substrings, are useful privacy protection mechanisms. They can also serve as pre-processing for a subsequent zero-knowledge proof. For example, Bob might wish to prove that he has an account with a balance larger than \$1000, without revealing the actual balance. He would then prove in zero knowledge a predicate (``balance > \$1000'') over the substring that includes his checking account balance.

Selective opening {\em alone}, however, is not enough for many applications. This is because the {\em context} of a substring affects its meaning. Without what we call {\em context integrity}, $\prover$ could cheat and reveal a substring that falsely appears to prove a claim to $\verifier$. For example, Bob might not have a balance above \$1000. After viewing his bank statement, though, he might in the same TLS session post a message to customer service with the substring {\tt "balance": \$5000} and then view his pending messages (in a form of reflection attack). He could then reveal this substring to fool $\verifier$.

Various sanitization heuristics on prover-supplied inputs to $\verifier$, e.g., truncating session transcripts, could potentially prevent some such attacks, but, like other forms of web application input sanitization, are fragile and prone to attack~\cite{scholte2011quo}.

Instead, we introduce a rigorous technique by which session data are explicitly but confidentially parsed. We call this technique {\em zero-knowledge two-stage parsing}. The idea is that $\prover$ parses $\SESSION$ locally in a first stage and then proves to $\verifier$ a statement in zero knowledge about constraints on a resulting substring. For example, in our banking example, if bank-supplied key-value stores are always escaped with a distinguished character $\lambda$, then Bob could prove a correct balance by extracting via local parsing and revealing to $\verifier$ a substring {\tt "balance": \$5000} preceded by $\lambda$. We show for a very common class of web API grammars (unique keys) that this two-phase approach yields much more efficient proofs than more generic techniques.

\Cref{sec:proof gen} gives more details on proof generation in \systemname.

\begin{figure}
\begin{lstlisting}
    {"name": "Bob",
    "savings a/c": {
        "balance": $5000
    },
    "checking a/c": {
        "balance": $2000
    },
    "transactions": {...}}
\end{lstlisting}
\caption{Example bank statement to demonstrate selective opening and context-integrity attacks.}
\label{fig:example}
\end{figure}
\lstset{numbers=none} %
\section{The \systemname protocol}
\label{sec:deco protocols}

We now specify the full \systemname protocol, which consists of a three-party handshake in \cref{sec:handshake}, followed by 2PC protocols for \posthandshake in \cref{sec:post handshake}, and a proof generation phase. We prove its security in~\cref{sec:full protocol}.

\subsection{Three-party handshake}
\label{sec:handshake}

The goal of the three-party handshake (3P-HS) is to secret-share between the prover $\prover$ and verifier $\verifier$ the session keys used in a TLS session with server $\tlsserver$, in a way that is completely transparent to $\tlsserver$.
We first focus on CBC-HMAC for exposition, then adapt the protocol to support GCM.%

As with the standard TLS handshake, 3P-HS is two-step: first, $\prover$ and $\verifier$ compute additive shares of a secret \(Z\in EC(\FF_p)\) shared with the server through a TLS-compatible key exchange protocol.
ECDHE is the recommended and the focus here;
second, $\prover$ and $\verifier$ derive secret-shared session keys by securely evaluating the TLS-PRF~\cite{RFC5246} with their shares of $Z$ as inputs.
The full protocol is specified in~\cref{fig:3party-hs-protocol}.
Below we give text descriptions so formal specifications are not required for understanding.

\subsubsection{Step 1: key exchange}
Let $EC(\FF_p)$ denote the EC group used in ECDHE and $G$ its generator.

The prover $\prover$ initiates the handshake by sending a regular TLS handshake request and a random nonce $\randc$ to $\tlsserver$ (in the ClientHello message).
On receiving a certificate, the server nonce $\rands$, and a signed ephemeral DH public key $\pubkeyS=s_S\cdot G$ from $\tlsserver$ (in the ServerHello and ServerKeyExchange messages), $\prover$ checks the certificate and the signature and forwards them to $\verifier$.
After performing the same check, $\verifier$ samples a secret $\secretV$ and sends her part of the DH public key $Y_V=\secretV \cdot G$ to $\prover$, who then samples another secret $\secretP$ and sends the combined DH public key $Y_P=\secretP \cdot G + Y_V$ to $\tlsserver$.

Since the server $\tlsserver$ runs the standard TLS, $\tlsserver$ will compute a DH secret as $Z=s_S \cdot Y_P$.
$\prover$ (and $\verifier$) computes its share of $Z$ as $Z_P=s_P \cdot Y_S$ (and $Z_V=s_V \cdot Y_{S}$). Note that $Z=Z_P + Z_V$ where $+$ is the group operation of  $EC(\FF_p)$. Assuming the discrete logarithm problem is hard in the chosen group, $Z$ is unknown to either party.

\subsubsection{Step 2: key derivation}
Now that $\prover$ and $\verifier$ have established additive shares of $Z$ (in the form of {\em EC points}), they proceed to derive session keys by evaluating the TLS-PRF~\cite{RFC5246} keyed with the $x$ coordinate of $Z$.

A technical challenge here is to harmonize arithmetic operations (i.e., addition in $EC(\FF_p)$) with bitwise operations (i.e., TLS-PRF) in 2PC.
It is well-known that boolean circuits are not well-suited for arithmetic in large fields.
As a concrete estimate, an EC Point addition resulting in just the $x$ coordinate involves 4 subtractions, one modular inversion, and 2 modular multiplications.
An estimate of the AND complexity based on the highly optimized circuits of~\cite{demmler2015automated} results in over 900,000 AND gates just for the subtractions, multiplications, and modular reductions---not even including inversion, which would require running the Extended Euclidean algorithm inside a circuit.

Due to the prohibitive cost of adding EC points in a boolean circuit, $\prover$ and $\verifier$ convert the additive shares of an EC point in $EC(\FF_p)$ to additive shares of its $x$-coordinate in $\FF_p$, using the $\protadd$ protocol presented below.
Then the boolean circuit just involves adding two numbers in $\FF_p$, which can be done with only $\sim\!3|p|$ AND gates, that is $\sim\!768$ AND gates in our implementation where $p$ is 256-bit. %

\boldhead{$\protadd$: Converting shares in $EC(\FF_p)$ to shares in $\FF_p$}
The inputs to an $\protadd$ protocol are two EC points $P_1, P_2 \in EC(\FF_p)$, denoted $P_i=(x_i, y_i)$.
Suppose $(x_s,y_s)=P_1 \star P_2$ where $\star$ is the EC group operation, the output of the protocol is $\alpha, \beta \in \FF_p$ such that $\alpha + \beta =x_s$. Specifically, for the curve we consider,
$x_s=\lambda^2 - x_1 - x_2$ where $\lambda=\nicefrac{(y_2 -y_1)}{(x_2 - x_1 )}$.
Shares of the $y_s$ can be computed similarly but we omit that since TLS only uses the $x_s$.

$\protadd$ uses a Multiplicative-to-Additive ($\MTA$) share-conversion protocol as a building block. We use $\alpha, \beta := \MTA(a, b)$ to denote a run of $\MTA$ between Alice and Bob with inputs $a$ and $b$ respectively. At the end of the run, Alice and Bob receive $\alpha$ and $\beta$ such that $a \cdot b = \alpha + \beta$.
The protocol can be generalized to handle vector inputs without increasing the communication complexity. Namely for vectors $\VEC{a}, \VEC{b} \in \FF_p^n$, if $\alpha, \beta := \MTA(\VEC{a}, \VEC{b})$, then $\langle \VEC{a}, \VEC{b} \rangle = \alpha + \beta$.
See, e.g., ~\cite{gennaro2018fast} for a Paillier~\cite{DBLP:conf/eurocrypt/Paillier99}-based construction. 

Now we specify the protocol of $\protadd$.
$\protadd$ has two main ingredients.
Let $[a]$ denote a 2-out-of-2 sharing of $a$, i.e., $[a]=(a_1, a_2)$ such that party $i$ has $a_i$ for $i \in \set{1,2}$ while $a=a_1 + a_2$.
The first ingredient is share inversion: given $[a]$, compute $[a^{-1}]$. As shown in \cite{gennaro2018fast}, we can use the inversion protocol of Bar-Ilan and Beaver \cite{bar1989non} together with $\MTA$ as follows: party $i$ samples a random value $r_i$ and executes $\MTA$ to compute $\delta_1, \delta_2 :=\MTA((a_1, r_1), (r_2, a_2))$. Note that $\delta_1 + \delta_2 = a_1\cdot r_2 + a_2 \cdot r_1$. Party $i$ publishes $v_i = \delta_i + a_i \cdot r_i$ and thus both parties learn $v=v_1 + v_2$. Finally, party $i$ outputs $\beta_i = r_i \cdot v^{-1}$. The protocol computes a correct sharing of $a^{-1}$ because $\beta_1 + \beta_2 = a^{-1}$. Moreover, the protocol doesn't leak $a$ to any party assuming $\MTA$ is secure. In fact, party $i$'s view consists of $(a_1+a_2)(r_1+r_2)$, which is uniformly random since $r_i$ is uniformly random.

The second ingredient is share multiplication: compute $[ab]$ given $[a], [b]$. $[ab]$ can be computed using $\MTA$ as follows: parties execute $\MTA$ to compute $\alpha_1, \alpha_2$ such that $\alpha_1 + \alpha_2 = a_1\cdot b_2 + a_2\cdot b_1$. Then, party $i$ outputs $m_i = \alpha_i + a_i \cdot b_i$. The security and correctness of the protocol can be argued similarly as above.

Combining these two ingredients, \cref{fig:ecadd} in the Appendix presents the $\protadd$ protocol, with communication complexity 8 ciphertexts.

\boldhead{Secure evaluation of the TLS-PRF}
Having computed shares of the $x$-coordinate of $Z$, the so called premaster secret in TLS, in $\protadd$, $\prover$ and $\verifier$ evaluate the TLS-PRF in 2PC to derive session keys. Beginning with the SHA-256 circuit of~\cite{Campanelli:2017:ZCP:3133956.3134060}, we hand-optimized the TLS handshake circuit resulting in a circuit with total AND complexity of 779,213.

\boldhead{Adapting to support GCM}
For GCM, a single key (for each direction) is used for both encryption and MAC. Adapting the above protocol to support GCM in TLS 1.2 is straightforward. The first step would remain identical, while output of the second step needs to be truncated, as GCM keys are shorter.
\boldhead{Adapting to TLS 1.3}
\label{para:tls1.3}
The specification of TLS 1.3~\cite{RFC8446} has been recently published.
To support TLS 1.3, the 3P-HS protocol must be adapted to a new handshake flow and a different key derivation circuit.
Notably, all handshake messages after the ServerHello are now {\em encrypted}. A naïve strategy would be to decrypt them in 2PC, which would be costly as certificates are usually large.
However, thanks to the key independence property of TLS 1.3~\cite{tls13handshake}, we can construct a 3P-HS protocol of similar complexity to that for TLS 1.2, as outlined in~\cref{app:tls13details}. %
\subsection{\Posthandshake}
\label{sec:post handshake}

\newcommand{\macstoc}{2PC-MAC$_{StC}$}
\newcommand{\macctos}{2PC-MAC$_{CtS}$}

After the handshake, the prover $\prover$ sends her query $Q$ to the server $\tlsserver$ as a standard TLS client, but with help from the verifier $\verifier$. %
Specifically, since session keys are secret-shared, the two parties need to interact and execute a 2PC protocol to construct TLS records encrypting $Q$.
Although generic 2PC would in theory suffice, it would be expensive for large queries.
We instead introduce custom 2PC protocols that are orders-of-magnitude more efficient.

We first focus on one-round sessions where $\prover$ sends all queries to $\tlsserver$ before receiving any response.
Most applications of \systemname, e.g., proving provenance of content retrieved via HTTP, are one-round. Extending \systemname to multi-round sessions is discussed in~\cref{app:ext}.

\subsubsection{CBC-HMAC}
\label{sec:hmac trick}
Recall that $\prover$ and $\verifier$ hold shares of the MAC key, while $\prover$ holds the encryption key. To construct TLS records encrypting $Q$---potentially private to $\prover$, the two parties first run a 2PC protocol to compute the HMAC tag $\tau$ of $Q$, and then $\prover$ encrypts $Q \| \tau$ locally and sends the ciphertext to $\tlsserver$.

\newcommand{\ipad}{\mathsf{ipad}}
\newcommand{\opad}{\mathsf{opad}}
\newcommand{\ipadhash}{s_0}
\newcommand{\opadhash}{h_o}
\newcommand{\initstate}{\mathsf{IV}}

Let $\hash$ denote SHA-256.
Recall that the HMAC of message $m$ with key $\key$ is
\ifcameraready
$\hmac(\key, m) = \hash((\key \xor \opad) \,\|\, \underbrace{\hash((\key \xor \ipad) \,\|\, m)}_\text{inner hash}).$
\else
\[
\hmac_\hash(\key, m) = \hash((\key \xor \opad) \,\|\, \underbrace{\hash((\key \xor \ipad) \,\|\, m)}_\text{inner hash}).
\]
\fi

A direct 2PC implementation would be expensive for large queries, as it requires hashing the entire query in 2PC to compute the inner hash.
The key idea in our optimization is to make the computation of the inner hash local to $\prover$ (i.e., without 2PC).
If $\prover$ knew $\key\oplus \ipad$, she could compute the inner hash. We cannot, though, simply give $\key \oplus \ipad$ to $\prover$, as she could then learn $\key$ and forge MACs.

Our optimization exploits the Merkle–Damg{\aa}rd structure in SHA-256. Suppose $m_1$ and $m_2$ are two correctly sized blocks. Then $\hash(m_1 \| m_2)$ is computed as $f_\hash(f_\hash(\initstate, m_1), m_2)$ where $f_\hash$ denotes the one-way compression function of $\hash$, and $\initstate$ the initial vector.

After the three-party handshake, $\prover$ and $\verifier$ execute a simple 2PC protocol to compute $\ipadhash = f_\hash(\initstate, \mackey \xor \ipad)$, and reveal it to $\prover$. To compute the inner hash of a message $m$, $\prover$ just uses $s_0$ as the IV to compute a hash of $m$.
Revealing $\ipadhash$ does not reveal $\mackey$, as $f_\hash$ is assumed to be one-way.
To compute $\hmac(\key, m)$ then involves computing the outer hash in 2PC on the inner hash, a much shorter message. Thus, we manage to reduce the amount of 2PC computation to a few blocks regardless of query length, as opposed to up to 256 SHA-2 blocks in each record with generic 2PC.
The protocol is formally specified in~\cref{fig:2pc-hmac}.

\subsubsection{AES-GCM}
\label{sec:posths-gcm}

For GCM, $\prover$ and $\verifier$ perform authenticated encryption of $Q$. 2PC-AES is straightforward with optimized circuits (e.g.,~\cite{aescircuit}), but computing tags for large queries is expensive as it involves evaluating long polynomials in a large field {\em for each record}.
Our optimized protocol makes polynomial evaluation local via precompution. We refer readers to~\cref{app:gcm posths} for details. Since 2PC-GCM involves not only tag creation but also AES encryption, it incurs higher computational cost and latency than CBC-HMAC. 

In~\cref{sec:alternative protocol}, we present a highly efficient alternative protocol that avoids post-handshake 2PC protocols altogether, with additional trust assumptions.\fanz{new pls review!}

\subsection{Full protocol}
\label{sec:full protocol}

\begin{figure}
    \centering
\protocol{$\decoPROT$}{
$\PROT_{\tlsserver}$: follow the standard TLS protocol. \\[1mm]
$\PROT_{\prover}$ and $\PROT_{\verifier}$:
\begin{itemize}[leftmargin=*]
\item $\verifier$ sends $(\sid, \query$, $\PRED$) to $\prover$, where $\query$ is the query template and $\PRED$ the statement to be proven over the response to $\prover$.
\item $\prover$ examines them and chooses whether to proceed. If so, $\prover$ starts the handshake.
\item ({\bf 3P-HS}) $\prover, \verifier$ execute the three-party handshake protocol. $\prover$ gets the encryption key $\enckey$ and a share of the MAC key $\mackeyP$, while $\verifier$ gets the other share $\mackeyV$.
\item ({\bf Query}) $\prover$ computes a query using the template $Q=\query(\PRI)$. $\prover$ invokes 2PC-HMAC with $\verifier$ to compute a tag $\tau$. $\prover$ sends $(\sid, \hat Q =\enc(\enckey, Q\|\tau)$) to $\tlsserver$.
\item ({\bf Commit and verify}) After receiving a response $(\sid, \hat R)$ from $\tlsserver$, $\prover$ sends $(\sid, \hat Q,\hat R, \mackeyP)$ to $\verifier$ as a commitment to the session data. After receiving $(\sid, \mackeyV)$ from $\verifier$, $\prover$ computes $\mackey = \mackeyV + \mackeyP$, decrypts $R\|\tau=\dec(\enckey, \hat R)$, and verifies $\tau$ against $\mackey$.%
\item ({\bf Proof gen})  Let $b=\PRED(R)$, $x=(\enckey, \PRI, Q, R)$ and $w=(\hat Q, \hat R, \mackey, b)$. $\prover$ sends $(\sid, \msgprove,x,w)$ to $\idealZK$ and outputs $(Q,R)$. If $\verifier$ receives $(\sid, \msgproof, 1, (\hat Q, \hat R, \hat \key ^{\sf MAC}, b))$ from $\idealZK$, $\verifier$ checks if $\hat \key ^{\sf MAC} =\mackeyP+\mackeyV$. If so, $\verifier$ outputs $(\sid, b, \tlsserver)$.
\end{itemize}
}
    \caption{The \systemname protocol. We only show the CBC-HMAC variant for clarify, while the GCM variant is described in~\cref{sec:full protocol}.}
    \label{fig:fullprotocolUC}
\end{figure}

After querying the server and receiving a response, $\prover$ commits to the session by sending the ciphertexts to $\verifier$, and receives $\verifier$'s MAC key share. Then $\prover$ can verify the integrity of the response, and prove statements about it.
\Cref{fig:fullprotocolUC} specifies the full \systemname protocol for CBC-HMAC (the protocol for GCM is similar and described later).

For clarity, we abstract away the details of zero-knowledge proofs in an ideal functionality $\idealZK$ like that in~\cite{zkfunc}. On receiving $(\msgprove, x,w)$ from $\prover$, where $x$ and $w$ are private and public witnesses respectively, $\idealZK$ sends $w$ and the relationship $\pi(x,w)\in\bin$ (defined below) to $\verifier$. Specifically, for CBC-HMAC, $x,w,\pi$ are defined as follows:
$x=(\enckey, \PRI, Q, R)$ and $w=(\hat Q, \hat R, \mackey, b)$. The relationship $\pi(x,w)$ outputs $1$ if and only if (1) $\hat Q$ (and $\hat R$) is the CBC-HMAC ciphertext of $Q$ (and $R$) under key $\enckey, \mackey$; (2) $\query(\PRI)=Q$; and (3) $\PRED(R)=b$. Otherwise it outputs $0$.

Assuming functionalities for secure 2PC and ZKPs, it can be shown that $\decoPROT$ UC-securely realizes $\idealOracle$ for malicious adversaries, as stated in Theorem~\ref{thm:mainUC}. We provide a simulation-based proof (sketch) in~\cref{sec:securityproofs}.

\begin{theorem}[Security of $\decoPROT$]\label{thm:mainUC}
Assuming the discrete log problem is hard in the group used in the three-party handshake, and that $f$ (the compression function of SHA-256) is an random oracle,
$\PROT_\text{\systemname}$ UC-securely realizes $\idealOracle$ in the $(\idealTwoPC,\idealZK)$-hybrid world,  against a static malicious adversary with abort.
\end{theorem}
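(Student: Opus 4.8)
The plan is to give a standard UC simulation argument: for each corruption pattern I construct an ideal-world adversary $\mathsf{Sim}$ that, interacting only with $\idealOracle$, reproduces the real-world view of any environment $\mathcal{Z}$ up to negligible distinguishing advantage. Because $\PROT_{\tlsserver}$ is unmodified honest TLS and $\idealOracle$ assumes an honest $\tlsserver$, it suffices to treat two cases, a malicious $\prover$ and a malicious $\verifier$ (the case where neither is corrupted is immediate, and corrupting both gives $\adv$ no more leverage against privacy than the malicious-verifier case). Throughout I work in the $(\idealTwoPC,\idealZK)$-hybrid world, so $\mathsf{Sim}$ plays both ideal functionalities: it reads the inputs that corrupted parties feed them and programs their outputs, which is exactly what makes witness extraction straight-line and avoids rewinding.

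\emph{Case 1 (malicious $\prover$), establishing prover-integrity.} Here $\mathsf{Sim}$ emulates the honest $\verifier$ and relays server messages through $\idealOracle$'s $\adv$-interface. During the three-party handshake $\mathsf{Sim}$ runs the honest-$\verifier$ code, contributing fresh $\secretV$ and $\mackeyV$; after the $\prover$'s commitment it extracts the witness $x=(\enckey,\PRI,Q,R)$, $w=(\hat Q,\hat R,\mackey,b)$ from the $\msgprove$ call to $\idealZK$, forwards $\PRI$ (and $\msgok$) to $\idealOracle$, and lets $\idealOracle$ fetch the genuine response. I then argue the real and ideal $\verifier$ outputs agree except with negligible probability. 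Conditioned on $\idealZK$ returning $1$, the extracted $\hat Q,\hat R$ are CBC-HMAC ciphertexts whose tags verify under $\mackey=\mackeyP+\mackeyV$; but $\prover$ committed to $\SESSION=(\hat Q,\hat R)$ before $\verifier$ released $\mackeyV$, and by discrete-log hardness of the handshake group together with the secrecy of $\protadd$ and the TLS-PRF 2PC, $\mackeyV$ (hence $\mackey$) is uniform and independent of the $\prover$'s view at commit time. Any accepting $\hat R$ decrypting to an $R$ other than the server's true response therefore yields an $\hmac$ forgery, which the random-oracle model for $f$ rules out. Hence $R=\tlsserver(Q)$ and $b=\PRED(R)$, matching the ideal output.

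\emph{Case 2 (malicious $\verifier$), establishing verifier-integrity and privacy.} Now $\mathsf{Sim}$ emulates the honest $\prover$ and $\tlsserver$ and receives from $\idealOracle$ only $(\query,\tlsserver)$, the bit $\PRED(R)$, and the lengths $(|Q|,|R|)$ leaked to $\adv$. It runs the honest-$\prover$ side of the handshake (DL hardness makes $\prover$'s share indistinguishable from real), answers the 2PC calls through $\idealTwoPC$ with outputs of the correct length, and sends as commitment a pair $(\hat Q,\hat R)$ of dummy ciphertexts of the correct lengths together with $\mackeyP$; since $\verifier$ never learns $\enckey$, these are indistinguishable from the real ciphertexts (under the standard assumption that the record-layer cipher is semantically secure) and reveal nothing beyond $(|Q|,|R|)$. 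Finally $\mathsf{Sim}$ feeds $\idealZK$ the bit $b=\PRED(R)$ so that the emulated $\idealZK$ hands $\verifier$ exactly $(w,1)$. Privacy follows because every element of $\verifier$'s view is public, length-only, or a ciphertext under a key it does not hold. For verifier-integrity I observe that a malicious $\verifier$ can influence $\prover$'s output only through its choice of $\mackeyV$: an inconsistent share makes the honest $\prover$'s MAC check fail and $\prover$ abort, so $\prover$ never outputs $(Q,R)$ with $R\neq\tlsserver(Q)$.

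The main obstacle is the unforgeability argument in Case 1, which must compose three facts in a single hybrid without circularity: (i) DL hardness, ensuring that $Z$ and thus the server-to-client MAC share $\mackeyV$ stay hidden from $\prover$ until commitment; (ii) the simulation-security of $\protadd$ and the TLS-PRF 2PC, ensuring that the partial information $\prover$ legitimately obtains (e.g.\ $s_0=f(\initstate,\mackey\xor\ipad)$ for local inner-hash computation) does not leak the raw key; and (iii) the random-oracle property of $f$, which reduces the residual task to a genuine $\hmac$ forgery and bounds its probability. Making this quantitative requires care about the precise transcript point at which $\mackeyV$ is released and about the fact that $\prover$ chooses $\mackeyP$ adversarially; the argument fixes the freshly sampled $\verifier$ randomness and then treats $f$ as a random oracle relative to the already-committed transcript, so that forging a tag on any message not actually returned by the honest server is negligible.
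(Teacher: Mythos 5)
Your overall strategy is the same as the paper's: a simulation argument over the two single-corruption cases, with straight-line witness extraction from the $\msgprove$ message to $\idealZK$ in the malicious-$\prover$ case, key secrecy until commitment (DL hardness plus security of the 2PC sub-protocols) driving prover integrity, and dummy ciphertexts of the leaked lengths $(|Q|,|R|)$ in the malicious-$\verifier$ case. Your malicious-$\verifier$ half matches the paper's simulator almost step for step; the paper additionally simulates timing with a dummy HMAC computation and splits the bad-$\mackeyV$ analysis according to whether the inconsistent share is sent at commit time or fed into 2PC-HMAC, but these are refinements of the argument you give, and your explicit flagging of record-layer ciphertext indistinguishability as an extra assumption is if anything more careful than the paper.

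There is, however, a genuine gap in your prover-integrity argument. You reduce every cheating strategy to ``an $\hmac$ forgery, which the random-oracle model for $f$ rules out.'' That reduction is only sound for tags fabricated \emph{before} $\verifier$ reveals $\mackeyV$. But the ZK witness $x=(\enckey,\PRI,Q,R)$ is chosen by the malicious $\prover$ \emph{after} it has received $\mackeyV$, i.e., after it knows the full MAC key and can compute valid tags on arbitrary messages; at that point ``forgery'' is no longer a meaningful notion. The residual attack is not forgery but \emph{re-opening}: the prover may supply a different encryption key $\enckey'$ in the witness so that the already-committed, genuine server ciphertext $\hat R$ decrypts under $\enckey'$ to some $R'\|\sigma'$ with $\sigma'=\hmac(\mackey,R')$ and $R'\neq R$. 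The protocol pins the MAC key (the verifier checks $\hat\key^{\sf MAC}=\mackeyP+\mackeyV$) but never pins $\enckey$, so nothing in your argument excludes this. The paper closes exactly this case by invoking the \emph{receiver-binding} (committing) property of CBC-HMAC due to Grubbs et al., a property distinct from unforgeability; the distinction is not pedantic, since the paper also observes that GCM is \emph{not} committing and must therefore add an explicit commitment to $\keyP$ before the key reveal. Your proof needs the corresponding case split: (i) while $\mackey$ is hidden, tags inside the committed ciphertexts are unforgeable (your argument), and (ii) after $\mackey$ is revealed, the committed ciphertexts admit no alternative opening under any key pair passing the verifier's key check --- a binding property of CBC-HMAC that does hold when $f$ is modeled as a random oracle, but must be stated and argued separately.
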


The protocol for GCM has a similar flow. We've specified the GCM variants of the three-party handshake and query construction protocols.
Unlike CBC-HMAC, GCM is not committing~\cite{DBLP:conf/crypto/GrubbsLR17}: for a given ciphertext $C$ encrypted with key $\key$, one knowing $\key$ can efficiently find $\key'\neq \key$ that decrypts $C$ to a different plaintext while passing the integrity check.
To prevent such attacks, we require $\prover$ to commit to her key share $\keyP$ before learning $\verifier$'s key share.
In the proof generation phase, in addition to proving statements about $Q$ and $R$, $\prover$ needs to prove that the session keys used to decrypt $\hat Q$ and $\hat R$ are valid against the commitment to $\keyP$. Proof of the security of the GCM variant is like that for CBC-HMAC.
\section{Proof generation}
\newcommand{\ciphertext}{\VEC{\hat M}}
\newcommand{\plaintext}{\VEC{M}}
\newcommand{\encrecord}{\hat \record} %
\newcommand{\record}{\mathsf{rec}}
\newcommand{\block}{B}

\label{sec:proof gen}

Recall that the prover $\prover$ commits to the ciphertext $\ciphertext$ of a TLS session and proves to $\verifier$ that the plaintext  $\plaintext$ satisfies certain properties.
Without loss of generality, we assume $\ciphertext$ and $\plaintext$ contain only one TLS record, and henceforth call them the {\em ciphertext record} and the {\em plaintext record}. Multi-record sessions can be handled by repeating the protocol for each record.

Proving only the provenance of $\plaintext$ is easy: just reveal the encryption keys. But this sacrifices privacy. Alternatively, $\prover$ could prove any statement about $\plaintext$ using general zero-knowledge techniques. But such proofs are often expensive.

In this section, we present two classes of statements optimized for what are likely to be the most popular applications: revealing only a substring of the response while proving its provenance (\cref{sec:seletive opening}), or further proving that the revealed substring appears in a context expected by $\verifier$ (\cref{subsec:two stage parsing}).

\subsection{Selective opening}
\label{sec:seletive opening}

We introduce {\em selective opening}, techniques that allow $\prover$ to efficiently {\em reveal} or {\em redact} substrings in the plaintext. %
Suppose the plaintext record is composed of chunks $\plaintext=(B_1,\cdots,B_n)$ (details of chunking are discussed shortly). Selective opening allows $\prover$ to prove that the $i$th chunk of  $\plaintext$ is $B_i$, without revealing the rest of $\plaintext$; we refer to this as $\REV$ mode. It can also prove that $\plaintext_{-i}$ is the same as $\plaintext$ but with the chunks removed. We call this $\RED$ mode.
Both modes are simple, but useful for practical privacy goals.
The granularity of selective opening depends on the cipher suite, which we now discuss.

\newcommand{\macblocks}{\hat{\mathbf{\sigma}}}

\subsubsection{CBC-HMAC}
Recall that for proof generation, $\prover$ holds both the encryption and MAC keys $\enckey$ and $\mackey$, while $\verifier$ only has the MAC key $\mackey$.
Our performance analysis assumes a ciphersuite with SHA-256 and AES-128, which matches our implementation, but the techniques are applicable to other parameters.
Recall that MAC-then-encrypt is used: a plaintext record $\plaintext$ contains up to 1024 AES blocks of data and 3 blocks of MAC tag $\sigma$, which we denote as  $\plaintext=(B_1,\dots,B_{1024}, \sigma)$ where $\sigma=(B_{1025}, B_{1026}, B_{1027})$. $\ciphertext$ is a CBC encryption of $\plaintext$, consisting of the same number of blocks: $\ciphertext = (\hat{B}_1,\dots,\hat{B}_{1024}, \macblocks)$ where $\macblocks = (\hat{B}_{1025}, \hat{B}_{1026}, \hat{B}_{1027})$.

\boldhead{Revealing a TLS record}
A na\"ive way to prove that $\ciphertext$ encrypts $\plaintext$ without revealing $\enckey$ is to prove correct encryption of each AES block in ZKP. However, this would require up to 1027 invocations of AES in ZKP, resulting in impractical performance.

Leveraging the MAC-then-encrypt structure, the same can be done using only 3 invocations of AES in ZKP. The idea is to prove that the last few blocks of $\ciphertext$ encrypt a tag $\sigma$ and reveal the plaintext directly. Specifically, $\prover$ computes $\pi_\sigma = \ZKP{\enckey: \macblocks =\cbc(\enckey, \sigma)}$ and sends $(\plaintext, \pi_\sigma)$ to $\verifier$. Then $\verifier$ verifies $\pi$ and checks the MAC tag over $\plaintext$ (note that $\verifier$ knows the MAC key.)
Its security relies on the collision-resistance of the underlying hash function in HMAC, i.e., $\prover$ cannot find $\plaintext'\neq\plaintext$ with the same tag $\sigma$.

\boldhead{Revealing a record with redacted blocks}
Suppose the $i$th block contains sensitive information that $\prover$ wants to redact.
A direct strategy is to prove that $\VEC{B}_{i-}=(B_1,\cdots,B_{i-1})$ and $\VEC{B}_{i+} = (B_{i+1},\cdots, B_n)$ form the prefix and suffix of the plaintext encrypted by $\ciphertext$, by computing $\pi_\sigma$ (see above) and $\ZKP{B_i: \sigma= \hmac(\mackey, \VEC{B}_{i-} \| B_i \| \VEC{B}_{i+})}.$ This is expensive though as it would involve  $3$ AES and $256$ SHA-256 compression in ZKP.

Leveraging the Merkle-Damgård structure of SHA-256 (c.f.~\cref{sec:hmac trick}), several optimization is possible.
Let $f$ denote the compression function of SHA-256,
and $s_{i-1}$ the state after applying $f$ on $\VEC{B}_{i-}$.
First, if both $s_{i-1}$ and $s_i$ can be revealed, e.g., when $B_i$ contains high-entropy data such as API keys, the above goal can be achieved using just 1 SHA-256 in ZKP. 
To do so, $\prover$ computes $\pi = \ZKP{ B_i: f(s_{i-1}, B_i)=s_i}$ and sends $(\pi, s_{i-1}, s_i, \VEC{B}_{i-}, \VEC{B}_{i+})$ to $\verifier$, who then 1) checks $s_{i-1}$ by recomputing it from $\VEC{B}_{i-}$; 2) verifies $\pi$; and 3) checks the MAC tag $\sigma$ by recomputing it from $s_i$ and $\VEC{B}_{i+}$.
Assuming $B_i$ is high entropy, revealing $s_{i-1}$ and $s_i$ doesn't leak $B_i$ since $f$ is one-way.

On the other hand, if both $s_{i-1}$ and $s_i$ cannot be revealed to $\verifier$ (e.g., when brute-force attacks against $B_i$ is feasible), we can still reduce the cost by having $\prover$ redact a prefix (or suffix) of the record containing the block $B_i$. The cost incurred then is $256-i$ SHA-2 hashes in ZKP. We relegate the details to~\cref{sec: cbc hmac tricks}.
Generally ZKP cost is proportional to record sizes so TLS fragmentation can also lower the cost by a constant factor.

\vspace{-2mm}
\subsubsection{GCM}
Unlike CBC-HMAC, revealing a block is very efficient in GCM.
First, $\prover$ reveals $\aes(\key, IV)$ and $\aes(\key, 0)$, with proofs of correctness in ZK, to allow $\verifier$ to verify the integrity of the ciphertext. Then, to reveal the $i$th block, $\prover$ just reveals the encryption of the $i$th counter $C_i = \aes(\key, \INC^i(IV))$ with a correctness proof. $\verifier$ can decrypt the $i$th block as $\hat B_i \oplus C_i$. $IV$ is the public initial vector for the session, and $\INC^i(IV)$ denotes incrementing $IV$ for $i$ times (the exact format of $\INC$ is immaterial.)
To reveal a TLS record, $\prover$ repeat the above protocol for each block. We defer details to~\cref{sec:proof gen gcm}.

\ifcameraready
\else
In summary, CBC-HMAC allows efficient selective revealing at the TLS record-level and redaction at block level in \systemname, while GCM allows efficient revealing at block level.
Selective opening can also serve as pre-processing to reduce the input length for a subsequent zero-knowledge proof, which we will illustrate in~\cref{sec:applications} with concrete applications.
\fi
\subsection{Context integrity by two-stage parsing}
\label{subsec:two stage parsing}

For many applications, the verifier $\verifier$ may need to verify that the revealed substring appears in the right context. We refer to this property as {\em context integrity}.
In this section we present techniques for $\verifier$ to specify contexts and for $\prover$ to prove context integrity efficiently.

For ease of exposition, our description below focuses on the revealing mode, i.e., $\prover$ reveals a substring of the server's response to $\verifier$. We discuss how redaction works in~\cref{sec:keyvalue}.

\subsubsection{Specification of contexts}
\newcommand{\partgrammar}{{\grammar'}}
\newcommand{\trans}{{\sf Trans}}
\newcommand{\opening}{{R_{\text{open}}}}
\newcommand{\parser}[1]{\mathsf{Parser}_{#1}}
\newcommand{\AUX}{{\mathsf{aux}}}
\newcommand{\CTX}{\mathcal{C}}
\newcommand{\ctx}{{\sf CTX}}

Our techniques for specifying contexts assume that the TLS-protected data sent to and from a given server $\tlsserver$ has a well-defined context-free grammar $\grammar$, known to both $\prover$ and $\verifier$.
In a slight abuse of notation, we let $\grammar$ denote both a grammar and the language it specifies. Thus, $R\in\grammar$ denotes a string $R$ in the language given by $\grammar$.
We assume that $\grammar$ is {\em unambiguous}, i.e., every $R\in \grammar$ has a unique associated parse-tree $T_R$.
JSON and HTML are examples of two widely used languages that satisfy these requirements, and are our focus here.

When $\prover$ then presents a substring $\opening$ of some response $R$ from $\tlsserver$, we say that $\opening$ has {\em context integrity} if $\opening$ is produced in a certain way expected by $\verifier$.
Specifically, $\verifier$ specifies a set $S$ of positions in which she might expect to see a valid substring $\opening$ in $R$.
In our definition,
$S$ is a set of paths
from the root in a parse-tree defined by $\grammar$ to internal nodes. Thus $s \in S$, which we call a {\em permissible path}, is a sequence of non-terminals.
Let $\rho_R$ denote the root of $T_R$ (the parse-tree of $R$ in $\grammar$).
We say that a string $\opening$ has context-integrity with respect to $(R, S)$ if $T_R$ has a subtree whose leaves \emph{yield} (i.e. concatenate to form) the string $\opening$, and that there is a path $s \in S$ from $\rho_R$ to the root of the said subtree. %

Formally, we define context integrity in terms of a predicate ${\sf \ctx_{\grammar}}$ in~\cref{def:context integ}. At a high level, our definition is reminiscent of the production-induced context in~\cite{DBLP:conf/ccs/SaxenaML11}.

\begin{definition}
Given a grammar $\grammar$ on TLS responses, $R\in\grammar$, a substring $\opening$ of $R$, a set $S$ of permissible paths, we define a {\em context function} $\ctx_\grammar$ as a boolean function such that $\ctx_\grammar: (S, R, \opening) \mapsto \true$ iff $\exists$ a sub-tree $T_{\opening}$ of $T_R$ with a path $s \in S$ from $\rho_{T_R}$ to $\rho_{T_{\opening}}$ and $T_\opening$ yields $\opening$.
$\opening$ is said to have {\em context integrity} with respect to $(R, S)$ if $\ctx_\grammar(S, R, \opening) = \true$.
\label{def:context integ}
\end{definition}

As an example, consider the JSON string $J$ in~\cref{fig:example}. JSON contains (roughly) the following rules:

{\small\ttfamily
\begin{tabular}{ll}
\textbf{Start} $\to$ object & \textbf{object} $\to$~\{~pairs~\} \\
\textbf{pair} $\to$ ~``key''~:~value & \textbf{pairs} $\to$ pair | pair, pairs\\
\textbf{key} $\to$ chars &  \textbf{value} $\to$ chars | object
\end{tabular}
}

In that example, $\verifier$ was interested in learning the derivation of the {\tt pair} $p_{\text{\tt balance}}$ with {\tt key ``balance''} in the {\tt object} given by the {\tt value} of the {\tt pair} $p_{\text{\tt checking}}$ with {\tt key ``checking a/c''}. Each of these non-terminals is the label for a node in the parse-tree $T_J$. The path from the root {\tt Start} of $T_J$ to $p_{\text{\tt checking}}$ requires traversing a sequence of nodes of the form {\tt Start $\to$ object $\to$ pairs$*$ $\to$ $p_{\text{\tt checking}}$}, where {\tt pairs$*$} denotes a sequence of zero or more {\tt pairs}. So $S$ is the set of such sequences and $\opening$ is the string {\tt ``checking a/c'': \{``balance'': \$2000\}}.  %

\subsubsection{Two-stage parsing}

Generally, proving $\opening$ has context integrity, i.e., $\ctx_\grammar(S, R, \opening)=\true$, without directly revealing $R$ would be expensive, since computing $\ctx_\grammar$ may require computing $T_R$ for a potentially long string $R$.
However, we observed that under certain assumptions that TLS-protected data generally satisfies, much of the overhead can be removed by having $\prover$  {\em pre-process} $R$ by applying a transformation $\trans$ agreed upon by $\prover$ and $\verifier$, and prove that $\opening$ has context integrity with respect to $R'$ (a usually much shorter string) and $S'$ (a set of permissible paths specified by $\verifier$ based on $S$ and $\trans$).

Based on this observation, we introduce a {\em two-stage parsing scheme} for efficiently computing $\opening$ and proving $\ctx_\grammar(S, R, \opening)=\true$.
Suppose $\prover$ and $\verifier$ agree upon $\grammar$, the grammar used by the web server, and a transformation $\trans$. Let $\partgrammar$ be the grammar of strings $\trans(R)$ for all $R\in \grammar$.
Based on $\trans$, $\verifier$ specifies permissible paths $S'$ and a constraint-checking function $\cons_{\grammar,\partgrammar}$.
In the first stage, $\prover$: (1) computes a substring $\opening$ of $R$ by parsing $R$ (such that $\ctx_\grammar(S, R, \opening)=\true$) (2) computes another string $R' = \trans(R)$.
In the second stage, $\prover$ proves to $\verifier$ in zero-knowledge that (1) $\cons_{\grammar, \partgrammar}(R, R')=\true$ and (2)  $\ctx_{\partgrammar}(S', R', \opening) = \true$.
Note that in addition to public parameters $\grammar, \partgrammar, S, S',\trans, \cons_{\grammar, \partgrammar}$, the verifier only sees a commitment to $R$, and finally, $\opening$.

This protocol makes the zero-knowledge computation significantly less expensive by deferring actual parsing to a non-verifiable computation. In other words, the computation of $\ctx_{\partgrammar}(S', R', \opening)$ and $\cons_{\grammar, \partgrammar}(R, R')$ can be much more efficient than that of $\ctx_\grammar(S,\allowbreak R, \opening)$.

We formalize the correctness condition for the two-stage parsing in an operational semantics rule in~\cref{def:correctness}.
Here, $\langle f, \sigma \rangle$ denotes applying a function $f$ on input $\sigma$, while $\frac{P}{C}$ denotes that if the premise $P$ is true, then the conclusion $C$ is true.

\begin{definition}
\label{def:correctness}
Given a grammar $\grammar$, a context function and permissible paths $\ctx_\grammar(S,\, \cdot\, , \, \cdot\, )$, a transformation $\trans$, a grammar $\partgrammar=\{R': R'=\trans(R), R\in\grammar\}$ with context function and permissible paths $\ctx_\partgrammar(S',\, \cdot\, ,\, \cdot\, )$ and a function $\cons_{\grammar,\partgrammar}$, we say $(\cons_{\grammar, \partgrammar}, S')$ are correct w.r.t. $S$, if for all $(R,R',\opening)$ such that $R \in \grammar$, booleans ${\sf b}$
the following rule holds:
\[
\inference{\langle \cons_{\grammar,\partgrammar}, (R,R') \rangle \Rightarrow \true \text{\,\,\,}
\langle \ctx_\partgrammar, (S', R', \opening) \rangle \Rightarrow {\sf b}}{\langle \ctx_{\grammar}, (S, R, \opening) \rangle \Rightarrow {\sf b}}.
\]
\end{definition}

Below, we focus on a grammar that most \systemname applications use, and present concrete constructions of two-stage parsing schemes.

\subsubsection{\systemname focus: Key-value grammars}\label{sec:keyvalue}

A broad class of data formats, such as JSON, have a notion of key-value pairs. Thus, they are our focus in the current version of \systemname.

A key-value grammar $\grammar$ produces key-value pairs according to the rule, ``{\tt pair $\to$ \start key \midd value \End}'', where \start, \midd and \End are delimitors.
For such grammars, an array of optimizations can greatly reduce the complexity for proving  context. We discuss a few such optimizations below, with formal specification relegated to~\cref{app:keyvalue}.

\boldhead{Revelation for a globally unique key}
For a key-value grammar $\grammar$, set of paths $S$, if for an $R\in \grammar$, a substring $\opening$ satisfying context-integrity requires that $\opening$ is parsed as a key-value pair with a globally unique key {\tt K} (formally defined in~\cref{sec:uniqueKeyDef}), $\opening$ simply needs to be a substring of $R$ and correctly be parsed as a {\tt pair}.
Specifically, $\trans(R)$  outputs a substring $R'$ of $R$ containing the desired key, i.e., a substring of the form ``{\tt \start K \midd value \End}'' and $\prover$ can output $\opening=R'$. $\partgrammar$ can be defined by the rule {\tt S$_\partgrammar$ $\to$ pair} where {\tt S$_\partgrammar$} is the start symbol in the production rules for $\partgrammar$. Then (1) $\cons_{\grammar, \partgrammar}(R, R')$ checks that $R'$ is a substring of $R$ and (2) for $S'=$\{{\tt S$_{\partgrammar}$}\}, $\ctx_\partgrammar(S', R', \opening)$ checks that (a) $R'\in \partgrammar$ and (b) $\opening=R'$.
Globally unique keys arise in~\cref{sec:ageProofApp} when selectively opening the response for {\tt age}. %

\boldhead{Redaction in key-value grammars}
Thus far, our description of two-stage parsing assumes the $\REV$ mode in which $\prover$ reveals a substring $\opening$ of $R$ to $\verifier$ and proves that $\opening$ has context integrity with respect to the set of permissible paths specified by $\verifier$. In the $\RED$ mode, the process is similar, but instead of revealing $\opening$ in the clear, $\prover$ generates a commitment to $\opening$ using techniques from~\cref{sec:seletive opening} and reveals $R$, with $\opening$ removed, for e.g. by replacing its position with a dummy character.

\newcommand{\redactedstring}{\ensuremath{R_\text{redact}}}

\section{Applications}
\label{sec:applications}

\systemname can be used for any oracle-based application. To showcase its versatility, we have implemented and evaluated three applications that leverage its various capabilities: 1) a confidential financial instrument realized by smart contracts; 2) converting legacy credentials to anonymous credentials; and 3) privacy-preserving price discrimination reporting.
Due to lack of space, we only present concrete implementation details for the first application, and refer readers to~\cref{app: app details} for others. Evaluation results are presented  in~\cref{sec:app eval}.
\subsection{Confidential financial instruments}
\label{sec:financialIntrumentsApp}

\newcommand{\contract}{\mathcal{SC}}
\newcommand{\id}{\mathsf{ID}}
\newcommand{\oracle}{\ensuremath{\mathcal{O}}\xspace}
\newcommand{\assetname}{\mathsf{N}}
\newcommand{\assetprice}{\mathsf{P}}
\newcommand{\assetday}{\mathsf{D}}
\newcommand{\commit}{\mathsf{com}}
\newcommand{\comm}{\mathcal{C}}
\newcommand{\randomness}{r}

Financial derivatives are among the most commonly cited smart contract applications~\cite{cftc_2018,openlawofficial_2018}, and exemplify the need for authenticated data feeds (e.g., stock prices).
For example, one popular financial instrument that is easy to implement in a smart contract is a {\em binary option}~\cite{binaryoption}. This is a contract between two parties betting on whether, at a designated future time, e.g., the close of day $\assetday$, the price $\assetprice^*$ of some asset $\assetname$ will equal or exceed a predetermined target price $\assetprice$, i.e., $\assetprice^* \geq \assetprice$. A smart contract implementing this binary option can call an oracle $\oracle$ to determine the outcome.

In principle, \oracle can conceal the underlying asset $\assetname$ and target price $\assetprice$ for a binary option on chain. It simply accepts the option details off chain, and reports only a bit specifying the outcome $\PRED := \assetprice^* \geq?~\assetprice$. This approach is introduced in~\cite{Mixicles:2019}, where it is referred to as a {\em Mixicle}.

A limitation of a basic Mixicle construction is that \oracle itself learns the details of the financial instrument. Prior to \systemname, only oracle services that use TEE (e.g.,~\cite{zhang2016town}) could conceal queries from \oracle.
We now show how \systemname can support execution of the binary option {\em without \oracle learning the details of the financial instrument, i.e., $\assetname$ or $\assetprice$}\footnote{The predicate direction $\geq?$ or $\leq?$ can be randomized. Concealing winner and loser identities and payment amounts is discussed in~\cite{Mixicles:2019}. Additional steps can be taken to conceal other metadata, e.g., the exact settlement time.}.

The idea is that the option winner plays the role of $\prover$, and obtains a signed result of $\PRED$ from $\oracle$, which plays the role of $\verifier$. We now describe the protocol and its implementation.

\boldhead{Protocol}
Let $\{\sk_\oracle, \pk_\oracle\}$ denote the oracles' key pair.
In our scheme, a binary option is specified by an asset name $\assetname$, threshold price $\assetprice$, and settlement date $\assetday$.
We denote the commitment of a message $M$ by $\comm_M = \commit(M, r_M)$ with a witness $r_M$.
\Cref{fig:confidentialbinaryoption} shows the workflow steps in a confidential binary option:

\textit{1) Setup:} Alice and Bob agree on the binary option $\{\assetname, \assetprice, \assetday\}$ and create a smart contract $\contract$ with identifier $\id_\contract$, The contract contains $\pk_\oracle$, addresses of the parties, and commitments to the option $\{\comm_\assetname, \comm_\assetprice, \comm_\assetday\}$ with  witnesses known to both parties.
They also agree on public parameters $\PUB$ (e.g., the URL to retrieve asset prices).

\textit{2) Settlement:} Suppose Alice wins the bet. To claim the payout, she uses \systemname to generate a ZK proof that the current asset price retrieved matches her position. Alice and $\oracle$ execute the \systemname protocol (with $\oracle$ acting as the verifier) to retrieve the asset price from $\PUB$ (the target URL). We assume the response contains $(\assetname^*,\assetprice^*, \assetday^*$).  In addition to the ZK proof in \systemname to prove origin $\PUB$, Alice proves
\ifcameraready
knowledge of $(\assetprice, \assetname^*, \assetprice^*, \assetday^*, r_\assetname, r_\assetprice, r_\assetday)$ such that $(\assetprice \leq \assetprice^*)\,\land\, \comm_\assetname = \commit(\assetname^*, r_\assetname) \,\land\, \comm_\assetprice = \commit(\assetprice, r_\assetprice) \,\land\, \comm_\assetday = \commit(\assetday^*, r_\assetday)$.
\else
the following statement:
\begin{multline*}
\ZKP{\assetprice, \assetname^*, \assetprice^*, \assetday^*, r_\assetname, r_\assetprice, r_\assetday: (\assetprice \leq \assetprice^*)\,\land\, \\ \comm_\assetname = \commit(\assetname^*, r_\assetname) \,\land\, \comm_\assetprice = \commit(\assetprice, r_\assetprice) \,\land\, \comm_\assetday = \commit(\assetday^*, r_\assetday)}.
\end{multline*}
\fi

Upon successful proof verification, the oracle returns a signed statement with the contract ID, $S=\sig(\sk_\oracle, \id_\contract)$.

\textit{3) Payout:} Alice provides the signed statement $S$ to the contract, which verifies the signature and pays the winning party.

Alice and Bob need to trust $\oracle$ for integrity, but not for privacy. They can further hedge against integrity failure by using multiple oracles, as explained in~\cref{para:adf}. Decentralizing trust over oracles is a standard and already deployed technique~\cite{SCCL:2017}. We emphasize that \systemname ensures privacy even if all the oracles are malicious.%

\begin{figure}
\centering
\resizebox{\columnwidth}{!}{
\begin{tikzpicture}
\tikzstyle{line} = [thick];
\tikzstyle{comment} = [text centered,text width=2.8cm, font=\footnotesize];
\tikzstyle{protocol} = [inner sep=2mm, draw=black!50, fill=black!10,font=\footnotesize];
\tikzstyle{party} = [rounded corners, inner sep=2mm, minimum height=.8cm,rectangle];

\node[party] (alice) at (0, 0) {Alice};
\node[party] (bob) at (6, 0) {Bob};
\node[party] (oracle) at (2, -2) {Oracle $\oracle$};
\node[party] (contract) at (4, -2) {Contract $\contract$};
\draw[dashed] (alice.south) -- (0, -6);
\draw[dashed] (bob.south) -- (6, -2);
\draw[dashed] (oracle.south) -- (2, -4.2);
\draw[dashed] (contract.south) -- (4, -6);
\scoped[yshift=-1cm]{
\draw[<->,line] (0, 0) -- node[above,comment]{1. Set up contract $\contract$, shared randomness $r_\assetname, r_\assetprice, r_\assetday$} (6, 0);
}%

\node[comment,protocol] (tph) at (4.4, -2.7) {$\id_\contract, \pk_\oracle, \{\comm_\assetname, \comm_\assetprice, \comm_\assetday\}$};

\scoped[yshift=-4cm]{
\draw[->,line] (0, 0) -- node[above,comment]{2. ZKP using \systemname} (2, 0);
}

\scoped[yshift=-4.2cm]{
\draw[<-,line] (0, 0) -- node[below,comment]{$S = \sig(\sk_\oracle, \id_\contract)$} (2, 0);
}

\scoped[yshift=-5.5cm]{
\draw[->,line] (0, 0) -- node[above,comment]{3. Send $S$} (4, 0);
}

\scoped[yshift=-5.7cm]{
\draw[<-,line] (0, 0) -- node[below,comment]{Receive payout} (4, 0);
}

\node[inner sep=2pt, draw, thick] (deco) at (9.3, -3.5)
    {
    \begin{minipage}{.65\columnwidth}
     \lstinputlisting[frame=none]{listings/app1.txt}
    \end{minipage}
   };

\draw[->, >=latex, blue!10!white, line width=15pt] (2.5, -4) -- (7, -4);
\end{tikzpicture}
}
\caption{Two parties Alice and Bob execute a confidential binary option.
Alice uses \systemname to access a stock price API and convince $\oracle$ she has won. Examples of request and response are shown to the right. Text in red is sensitive information to be redacted.
}
\label{fig:confidentialbinaryoption}
\label{fig:stockprice}
\end{figure}

\boldhead{Implementation details} \Cref{fig:stockprice} shows the request and response of a stock price API.
Let $\hat R$ and $R$ denote the response ciphertext  and the plaintext respectively. To settle an option, $\prover$ proves to $\verifier$ that $R$ contains evidence that he won the option, using the two-stage parsing scheme introduced in~\cref{subsec:two stage parsing}.
In the first stage, $\prover$ parses $R$ locally and identifies the smallest substring of $R$ that can convince $\verifier$. E.g., for stock prices, $R_\text{price}=\texttt{"05. price": "1157.7500"}$ suffices. In the second stage, $\prover$ proves knowledge of $(R_{\text{price}}, \assetprice, r_{\assetprice})$ in ZK such that 1) $R_\text{price}$ is a substring of the decryption of $\hat R$; 2) $R_\text{price}$ starts with \texttt{"05. price"}; 3) the subsequent characters form a floating point number $\assetprice^*$ and that $\assetprice^* \geq~\assetprice$; 4) $\commit(\assetprice, r_{\assetprice})=\comm_{\assetprice}$.

This two-stage parsing is secure assuming the keys are unique and the key \texttt{"05. price"} is followed by the price, making the grammar of this response a \emph{key-value grammar with unique keys}, as discussed in \cref{subsec:two stage parsing}. Similarly, $\prover$ proves that the stock name and date in $R$ match the commitments. With the CBC-HMAC ciphersuite, the zero-knowledge proof circuit involves redacting an entire record (408 bytes), computing commitments, and string processing.

\vspacecamerareadyonly{-2mm}
\subsection{Legacy credentials to anonymous credentials: Age proof}\label{sec:ageProofApp}

User credentials are often inaccessible outside a service provider's environment. Some providers offer third-party API access via OAuth tokens, but such tokens reveal user identifiers. \systemname allows users holding credentials in existing systems (what we call {\em legacy credentials}) to prove statements about them to third parties (verifiers) {\em anonymously}. Thus, \systemname is the first system that allows users to convert {\em any} web-based legacy credential into an anonymous credential without server-side support~\cite{ritzdorf2017tls} or trusted hardware~\cite{zhang2016town}.

We showcase an example where a student proves her/his age is over 18 using credentials (demographic details) stored on a University website. A student can provide this proof of age to any third party, such as a state issuing a driver's license or a hospital seeking consent for a medical test. We implement this example using the AES-GCM cipher suite and two-stage parsing (See~\cref{fig:age}) with optimizations based on unique keys as in~\cref{subsec:two stage parsing}.

\subsection{Price discrimination}
\label{sec:priceDisc}

Price discrimination refers to selling the same product or service at different prices to different buyers. Ubiquitous consumer tracking enables online shopping and booking websites to employ sophisticated price discrimination~\cite{useem_2017}, e.g., adjusting prices based on customer zip codes~\cite{howe_2017}. Price discrimination can lead to economic efficiency~\cite{odlyzko2003privacy}, and is thus widely permissible under existing laws.

In the U.S., however, the FTC forbids price discrimination if it results in competitive injury~\cite{ftc2017}, while new privacy-focused laws in Europe, such as the GDPR, are bringing renewed focus to the legality of the practice~\cite{borgesius2017online}. Consumers in any case generally dislike being subjected to price discrimination. Currently, however, there is no trustworthy way for users to report online price discrimination.

\systemname allows a buyer to make a verifiable claim about perceived price discrimination by proving the advertised price of a good is higher than a threshold, while hiding sensitive information such as name and address. We implement this example using the AES-GCM cipher suite for the TLS session and reveal 24 AES blocks containing necessary order details and the request URL (See~\cref{fig:shoppingorder}).

\section{Implementation and Evaluation}
\label{sec:eval}

In this section, we discuss implementation details and evaluation results for \systemname and our three applications.

\subsection{\systemname protocols}

We implemented the three-party handshake protocol (3P-HS) for TLS 1.2 and \posthandshake protocols (2PC-HMAC and 2PC-GCM) in about 4700 lines of C++ code.
We built a hand-optimized TLS-PRF circuit with total AND complexity of 779,213.
We also used variants of the AES circuit from~\cite{aescircuit}.
Our implementation uses Relic~\cite{relic} for the Paillier cryptosystem and the EMP toolkit~\cite{emp-toolkit} for the maliciously secure 2PC protocol of~\cite{DBLP:conf/ccs/WangRK17}.

We integrated the three-party handshake and 2PC-HMAC protocols with mbedTLS~\cite{mbedTLS}, a popular  TLS implementation, to build an end-to-end system.
2PC-GCM can be integrated to TLS similarly with more engineering effort. We evaluated the performance of 2PC-GCM separately. The performance impact of integration should be negligible. We did not implement 3P-HS for TLS 1.3, but we conjecture the performance should be comparable to that for TLS 1.2, since the circuit complexity is similar (c.f.~\cref{para:tls1.3}).

\begin{table}
\centering
\caption{Run time (in ms) of 3P-HS and \posthandshake protocols.}
\label{tab:2pc_costs}
\resizebox{\columnwidth}{!}{
\begin{tabular}{llll|ll}
\toprule
 & & \multicolumn{2}{c}{LAN} & \multicolumn{2}{c}{WAN}\\
 & & Online & Offline & Online & Offline \\
\midrule
3P-Handshake   & TLS 1.2 only & 368.5 (0.6) & 1668 (4) & 2850 (20) & 10290 (10) \\
2PC-HMAC       & TLS 1.2 only & 133.8 (0.5) & 164.9 (0.4) & 2520 (20) & 3191 (8) \\
2PC-GCM (256B) & 1.2 and 1.3  & 36.65 (0.02) & 392 (8) & 1208.5 (0.2) & 12010 (70) \\
2PC-GCM (512B) & 1.2 and 1.3  & 53.0 (0.5) & 610 (10) & 2345 (1) & 12520 (70) \\
2PC-GCM (1KB)  & 1.2 and 1.3  & 101.9 (0.5) & 830 (20) & 4567 (4) & 14300 (200) \\
2PC-GCM (2KB)  & 1.2 and 1.3  & 204.7 (0.9) & 1480 (30) & 9093.5 (0.9) & 18500 (200) \\
\bottomrule
\end{tabular}
}
\end{table}

\boldhead{Evaluation} We evaluated the performance of \systemname in both the LAN and WAN settings. Both the prover and verifier run on a \texttt{c5.2xlarge} AWS node with 8 vCPU cores and 16GB of RAM. We located the two nodes in the same region (but different availability zones) for the LAN setting, but in two distinct data centers (in Ohio and Oregon) in the WAN setting. The round-trip time between two nodes in the LAN and WAN is about 1ms and 67ms, respectively, and the bandwidth is about 1Gbps.

\Cref{tab:2pc_costs} summarizes the runtime of \systemname protocols during a TLS session.
50 samples were used to compute the mean and standard error of the mean (in parenthesis).
The MPC protocol we used relies on offline preprocessing to improve performance.
Since the offline phase is input- and target-independent, it can be done prior to the TLS session. Only the online phase is on the critical path.

As shown in~\cref{tab:2pc_costs}, \systemname protocols are very efficient in the LAN setting. It takes 0.37 seconds to finish the three-party handshake. For \posthandshake, 2PC-HMAC is efficient (0.13s per record) as it only involves one SHA-2 evaluation in 2PC, regardless of record size.
2PC-GCM is generally more expensive and the cost depends on the query length, as it involves 2PC-AES over the entire query.
We evaluated its performance with queries ranging from 256B to 2KB, the typical sizes seen in HTTP GET requests~\cite{spdy-whitepaper}.
In the LAN setting, the performance is efficient and comparable to 2PC-HMAC.

In the WAN setting, the runtime is dominated by the network latency because MPC involves many rounds of communication. Nonetheless, the performance is still acceptable, given that \systemname is likely to see only periodic use for most applications we consider.

\vspacecamerareadyonly{-2mm}
\subsection{Proof generation}
\label{sec:app eval}

We instantiated zero-knowledge proofs with a standard proof system~\cite{ben2014succinct} in libsnark~\cite{libsnark}.
We have devised efficiently provable statement templates, but users of \systemname need to adapt them to their specific applications.
SNARK compilers enable such adaptation in a high-level language, concealing low-level details from developers. We used xjsnark~\cite{kosba2018xjsnark} and its Java-like high-level language to build statement templates and libsnark compatible circuits.

Our rationale in choosing libsnark is its relatively mature tooling support.
The proofs generated by libsnark are constant-size and very efficient to verify, the downside being the per-circuit trusted setup.
With more effort, \systemname can be adapted to use, e.g., Bulletproofs~\cite{DBLP:conf/sp/BunzBBPWM18}, which requires no trusted setup but has large proofs and verification time.

\newcommand{\evaluate}[1]{\FPeval{\result}{round(#1,2)}\result}
\begin{table}[t]
\footnotesize
\caption{Costs of generating and verifying ZKPs in proof-generation phase of \systemname for applications in~\cref{sec:applications}.}
\label{tab:evaluation}
\centering
\begin{tabular}{rrrr}
\toprule
 & Binary Option & Age Proof & Price Discrimination \\
\midrule
prover time & 12.97 $\pm$ 0.04s  & 3.67 $\pm$ 0.02s & 12.68 $\pm$ 0.02s\\
verifier time & \evaluate{0.0035 + 0.0035 + 0.0038}s & \evaluate{0.0033 + 0.0039}s & \evaluate{0.0209 * 2 + 0.0039 * 2}s\\
proof size & 861B & 574B & 1722B \\
\#~constraints & 617k & 164k & 535k \\
memory & 1.78GB & 0.69GB & 0.92GB \\
\bottomrule
\end{tabular}
\end{table}

\boldhead{Evaluation} We measure five performance metrics for each example---prover time (the time to generate the proofs), verifier time (the time to verify proofs), proof size, number of arithmetic constraints in the circuit, and the peak memory usage during proof generation.

\Cref{tab:evaluation} summarizes the results.
50 samples were used to compute the mean and its standard error.
Through the use of efficient statement templates and two-stage parsing, \systemname achieves very practical prover performance.
Since libsnark optimizes for low verification overhead, the verifier time is negligible. The number of constraints (and prover time) is highest for the binary option application due to the extra string parsing routines. We use multiple proofs in each application to reduce peak memory usage. For the most complex application, the memory usage is 1.78GB. As libsnark proofs are of a constant size 287B, the proof sizes shown are multiples of that. %

\vspacecamerareadyonly{-2mm}
\subsection{End-to-end performance}

\systemname end-to-end performance depends on the available TLS ciphersuites, the size of private data, and the complexity of application-specific proofs.
Here we present the end-to-end performance of the most complex application of the three we implemented---the binary option.
It takes about~\evaluate{12.97 + 0.37 + 0.13 + 0.30}s to finish the protocol, which includes the time taken to generate unforgeable commitments (\evaluate{0.37 + 0.13}s), to run the first stage of two-stage parsing (0.30s), and to generate zero-knowledge proofs (12.97s). These numbers are computed in the LAN setting; in the WAN setting, MPC protocols are more time-consuming (\evaluate{2.85 + 2.52}s), pushing the end-to-end time up to~\evaluate{12.97 + 2.85 + 2.52 + 0.3}s.

In comparison, Town Crier uses TEEs to execute a similar application in about 0.6s~\cite[Table~I]{zhang2016town}, i.e., around 20x faster than \systemname, \emph{but with added trust assumptions}. Since \systemname is likely to be used only periodically for most applications, its overhead in achieving cryptographic-strength security assurances seems reasonable.
\section{Legal and Compliance Issues}
\label{sec:legal}

Although users can already retrieve their data from websites,
\systemname allows users to export the data {\em with integrity proofs} without their explicit approval or even awareness.
We now briefly discuss the resulting legal and compliance considerations.

Critically, however, \emph{\systemname users cannot unilaterally export data} to a third party with integrity assurance, but rely on oracles as verifiers for this purpose. While \systemname keeps user data private, oracles learn what websites and types of data a user accesses. Thus oracles can enforce appropriate data use, e.g., denying transactions that may result in copyright infringement.

Both users and oracles bear legal responsibility for the data they access. Recent case law on the Computer Fraud and Abuse Act (CFAA), however, shows a shift away from criminalization of web scraping~\cite{sellars2018twenty}, and federal courts have ruled that violating websites’ terms of service is not a criminal act \emph{per se}~\cite{EFF:2010,Khoury:2018}. Users and oracles that violate website terms of service, e.g., ``click wrap'' terms, instead risk \emph{civil} penalties~\cite{ABA:2019}. \systemname compliance with a given site’s terms of service is a site- and application-specific question.

Oracles have an incentive to establish themselves as trustworthy within smart-contract and other ecosystems. We expect that reputable oracles will provide users with menus of the particular attestations they issue and the target websites they permit, vetting these options to maximize security and minimize liability and perhaps informing or cooperating with target servers.

The legal, performance, and compliance implications of incorrect attestations based on incorrect (and potentially subverted) data are also important. Internet services today have complex, multi-site data dependencies, though, so these issues aren’t specific to \systemname. Oracle services already rely on multiple data sources to help ensure  correctness~\cite{SCCL:2017}. Oracle services in general could ultimately spawn infrastructure like that for certificates, including online checking and revocation capabilities~\cite{myers1999x} and different tiers of security~\cite{biddle2009browser}.

\vspacecamerareadyonly{-2mm}
\section{Related Work}
\label{sec:related}

\boldhead{Application-layer data-provenance}
Signing content at the application layer is a way to prove data provenance.
For example, \cite{http-origin-signed-responses,http-signatures} aim to retrofit signing capabilities into HTTP. 
Application-layer solutions, however, suffer from poor modularity and reusability, as they are application-specific.
They also require application-layer key management, violating the principle of layer separation in that cryptographic keys are no longer confined to the TLS layer.

Cinderella~\cite{delignat2016cinderella} uses verifiable computation to convert X.509 certificates into other credential types. Its main drawback is that few users possess certificates.
Open ID Connect~\cite{openid} providers can issue signed claims about users. However, adoption is still sparse and claims are limited to basic info such as names and email addresses.

\boldhead{Server-facilitated TLS-layer solutions}
Several proposed TLS-layer data-provenance proofs~\cite{tlssign,tlsevidence,ritzdorf2017tls} require server-side modifications. TLS-N~\cite{ritzdorf2017tls} is a TLS 1.3 extension that enables a server to sign the session using the existing PKI, and also supports chunk-level redaction for privacy. We refer readers to~\cite{ritzdorf2017tls} and references therein for a survey of TLS-layer solutions. Server-facilitated solutions suffer from high adoption cost, as they involve modification to security-critical server code. Moreover, they only benefit users when server administrators are able to and choose to cooperate. %

\boldhead{Smart contract oracles} Oracles~\cite{buterin2014next,SCCL:2017,zhang2016town} relay authenticated data from, e.g., websites, to smart contracts.
TLSNotary~\cite{tlsnotary}, used by Provable~\cite{Provable:2019}, allows a third party auditor to attest to a TLS connection between a server and a client, but relies on deprecated TLS versions (1.1 or lower).
Town Crier~\cite{zhang2016town} is an oracle service that uses TEEs (e.g., Intel SGX) for publicly verifiable evidence of TLS sessions and privacy-preserving computation on session data. While flexible and efficient, it relies on TEEs, which some users may reject given recently reported vulnerabilities, e.g.,~\cite{DBLP:conf/uss/BulckMWGKPSWYS18}.

\boldhead{Selective opening with context integrity}
Selective opening, i.e., decrypting part of a ciphertext to a third party while proving its integrity, has been studied previously.
Sanitizable signatures~\cite{ateniese2005sanitizable,brzuska2009security,steinfeld2001content, miyazaki2005digitally} allow a signed document to be selectively revealed. TLS-N~\cite{ritzdorf2017tls} allows ``chunk-level'' redacting of TLS records.
These works, however, consider a weaker adversarial model than \systemname. They fail to address the critical property of context integrity.
\systemname enforces proofs of context integrity in the rigorous sense of~\cref{subsec:two stage parsing}, using a novel two-stage parsing scheme that achieves efficiency by greatly reducing the length of the input to the zero-knowledge proof. %
\ifcameraready
\else
\section{Conclusion}
We have introduced \systemname, a privacy-preserving, decentralized oracle scheme for modern TLS versions that requires no trusted hardware or server-side modifications.
\systemname allows users to efficiently prove provenance and fine-grained statements about session content. We also identified context-integrity attacks that are universal to privacy-preserving oracles and provided efficient mitigation in a novel two-stage parsing scheme.
We formalized decentralized oracles in an ideal functionality, providing the first such rigorous security definition.
\systemname can liberate private data from centralized web-service silos, making it accessible to a rich spectrum of applications. We demonstrated \systemname's practicality through a fully functional implementation along with three example applications.
\fi

\vspace{-1mm}
\section*{Acknowledgements} This work was funded by NSF grants CNS-1514163, CNS-1564102, CNS-1704615, and CNS-1933655, and ARO grant W911NF16-1-0145.

\vspace{1mm}
\noindent{\em Personal financial interests:} Ari Juels is a technical advisor to Chainlink Smartcontract LLC and Soluna. 
\bibliographystyle{ACM-Reference-Format}
\bibliography{biblio}


\begin{thebibliography}{80}


\ifx \showCODEN    \undefined \def \showCODEN     #1{\unskip}     \fi
\ifx \showDOI      \undefined \def \showDOI       #1{#1}\fi
\ifx \showISBNx    \undefined \def \showISBNx     #1{\unskip}     \fi
\ifx \showISBNxiii \undefined \def \showISBNxiii  #1{\unskip}     \fi
\ifx \showISSN     \undefined \def \showISSN      #1{\unskip}     \fi
\ifx \showLCCN     \undefined \def \showLCCN      #1{\unskip}     \fi
\ifx \shownote     \undefined \def \shownote      #1{#1}          \fi
\ifx \showarticletitle \undefined \def \showarticletitle #1{#1}   \fi
\ifx \showURL      \undefined \def \showURL       {\relax}        \fi
\providecommand\bibfield[2]{#2}
\providecommand\bibinfo[2]{#2}
\providecommand\natexlab[1]{#1}
\providecommand\showeprint[2][]{arXiv:#2}

\bibitem[\protect\citeauthoryear{??}{age}{[n.d.]}]%
        {age-veri}
 \bibinfo{year}{[n.d.]}\natexlab{}.
\newblock \bibinfo{title}{Age Checker}.
\newblock \bibinfo{howpublished}{\url{https://agechecker.net}}.
\newblock


\bibitem[\protect\citeauthoryear{??}{Tho}{[n.d.]}]%
        {ThousandEyes}
 \bibinfo{year}{[n.d.]}\natexlab{}.
\newblock \bibinfo{title}{{Best BGP Route Network Monitoring Solution {$\vert$}
  ThousandEyes}}.
\newblock
\newblock
\urldef\tempurl%
\url{https://www.thousandeyes.com/solutions/bgp-and-route-monitoring}
\showURL{%
\tempurl}


\bibitem[\protect\citeauthoryear{??}{bgp}{[n.d.]a}]%
        {bgpmon}
 \bibinfo{year}{[n.d.]}\natexlab{a}.
\newblock \bibinfo{title}{{BGPmon {$\vert$} BGPmon}}.
\newblock
\newblock
\urldef\tempurl%
\url{https://bgpmon.net}
\showURL{%
\tempurl}


\bibitem[\protect\citeauthoryear{??}{bgp}{[n.d.]b}]%
        {bgpstream}
 \bibinfo{year}{[n.d.]}\natexlab{b}.
\newblock \bibinfo{title}{{BGPStream}}.
\newblock
\newblock
\urldef\tempurl%
\url{https://bgpstream.com}
\showURL{%
\tempurl}


\bibitem[\protect\citeauthoryear{??}{lib}{[n.d.]}]%
        {libsnark}
 \bibinfo{year}{[n.d.]}\natexlab{}.
\newblock \bibinfo{title}{libsnark}.
\newblock
\newblock
\newblock
\shownote{\url{https://github.com/scipr-lab/libsnark}.}


\bibitem[\protect\citeauthoryear{??}{ope}{[n.d.]}]%
        {openid}
 \bibinfo{year}{[n.d.]}\natexlab{}.
\newblock \bibinfo{title}{Open ID Connect}.
\newblock
\newblock
\urldef\tempurl%
\url{https://openid.net/connect}
\showURL{%
\tempurl}


\bibitem[\protect\citeauthoryear{??}{tls}{[n.d.]}]%
        {tlsnotary}
 \bibinfo{year}{[n.d.]}\natexlab{}.
\newblock \bibinfo{title}{{TLSN}otary}.
\newblock
\newblock
\newblock
\shownote{\url{https://tlsnotary.org/}.}


\bibitem[\protect\citeauthoryear{??}{gdp}{2014}]%
        {gdpr}
 \bibinfo{year}{2014}\natexlab{}.
\newblock \bibinfo{title}{Art. 20, {GDPR}, {R}ight to data portability}.
\newblock
\newblock
\newblock
\shownote{\url{https://gdpr-info.eu/art-20-gdpr/}.}


\bibitem[\protect\citeauthoryear{??}{bin}{2019}]%
        {binaryoption}
 \bibinfo{year}{2019}\natexlab{}.
\newblock \bibinfo{title}{Binary option}.
\newblock
\newblock
\newblock
\shownote{\url{https://en.wikipedia.org/wiki/Binary_option}.}


\bibitem[\protect\citeauthoryear{??}{Pro}{2019}]%
        {Provable:2019}
 \bibinfo{year}{2019}\natexlab{}.
\newblock \bibinfo{title}{Provable blockchain oracle}.
\newblock \bibinfo{howpublished}{\url{http://provable.xyz}}.
\newblock


\bibitem[\protect\citeauthoryear{??}{aes}{2019}]%
        {aescircuit}
 \bibinfo{year}{Aug 2019}\natexlab{}.
\newblock \bibinfo{title}{({Bristol Format}) Circuits of Basic Functions
  Suitable For {MPC}}.
\newblock
\newblock
\newblock
\shownote{\url{https://homes.esat.kuleuven.be/~nsmart/MPC/old-circuits.html}.}


\bibitem[\protect\citeauthoryear{Adler, Berryhill, Veneris, Poulos, Veira, and
  Kastania}{Adler et~al\mbox{.}}{2018}]%
        {adler2018astraea}
\bibfield{author}{\bibinfo{person}{John Adler}, \bibinfo{person}{Ryan
  Berryhill}, \bibinfo{person}{Andreas~G. Veneris}, \bibinfo{person}{Zissis
  Poulos}, \bibinfo{person}{Neil Veira}, {and} \bibinfo{person}{Anastasia
  Kastania}.} \bibinfo{year}{2018}\natexlab{}.
\newblock \showarticletitle{Astraea: {A} Decentralized Blockchain Oracle}. In
  \bibinfo{booktitle}{\emph{{IEEE} iThings/GreenCom/CPSCom/SmartData}}.
\newblock


\bibitem[\protect\citeauthoryear{Aranha and Gouv\^{e}a}{Aranha and
  Gouv\^{e}a}{[n.d.]}]%
        {relic}
\bibfield{author}{\bibinfo{person}{D.~F. Aranha} {and}
  \bibinfo{person}{C.~P.~L. Gouv\^{e}a}.} \bibinfo{year}{[n.d.]}\natexlab{}.
\newblock \bibinfo{title}{{RELIC is an Efficient LIbrary for Cryptography}}.
\newblock \bibinfo{howpublished}{\url{https://github.com/relic-toolkit/relic}}.
\newblock


\bibitem[\protect\citeauthoryear{{ARM}}{{ARM}}{2019}]%
        {mbedTLS}
\bibfield{author}{\bibinfo{person}{{ARM}}.} \bibinfo{year}{2019}\natexlab{}.
\newblock \bibinfo{title}{mbed{TLS}}.
\newblock \bibinfo{howpublished}{\url{https://github.com/ARMmbed/mbedtls}}.
\newblock


\bibitem[\protect\citeauthoryear{Association}{Association}{[n.d.]}]%
        {ABA:2019}
\bibfield{author}{\bibinfo{person}{{}American~Bar Association}.}
  \bibinfo{year}{[n.d.]}\natexlab{}.
\newblock
\newblock


\bibitem[\protect\citeauthoryear{Ateniese, Chou, De~Medeiros, and
  Tsudik}{Ateniese et~al\mbox{.}}{2005}]%
        {ateniese2005sanitizable}
\bibfield{author}{\bibinfo{person}{Giuseppe Ateniese},
  \bibinfo{person}{Daniel~H Chou}, \bibinfo{person}{Breno De~Medeiros}, {and}
  \bibinfo{person}{Gene Tsudik}.} \bibinfo{year}{2005}\natexlab{}.
\newblock \showarticletitle{Sanitizable signatures}. In
  \bibinfo{booktitle}{\emph{ESORICS}}.
\newblock


\bibitem[\protect\citeauthoryear{Bar-Ilan and Beaver}{Bar-Ilan and
  Beaver}{1989}]%
        {bar1989non}
\bibfield{author}{\bibinfo{person}{Judit Bar-Ilan} {and}
  \bibinfo{person}{Donald Beaver}.} \bibinfo{year}{1989}\natexlab{}.
\newblock \showarticletitle{Non-cryptographic fault-tolerant computing in
  constant number of rounds of interaction}. In \bibinfo{booktitle}{\emph{ACM
  PODC}}.
\newblock


\bibitem[\protect\citeauthoryear{Ben{-}Sasson, Chiesa, Tromer, and
  Virza}{Ben{-}Sasson et~al\mbox{.}}{2014}]%
        {ben2014succinct}
\bibfield{author}{\bibinfo{person}{Eli Ben{-}Sasson},
  \bibinfo{person}{Alessandro Chiesa}, \bibinfo{person}{Eran Tromer}, {and}
  \bibinfo{person}{Madars Virza}.} \bibinfo{year}{2014}\natexlab{}.
\newblock \showarticletitle{Succinct Non-Interactive Zero Knowledge for a von
  Neumann Architecture}. In \bibinfo{booktitle}{\emph{{USENIX} Security}}.
\newblock


\bibitem[\protect\citeauthoryear{Biddle, Van~Oorschot, Patrick, Sobey, and
  Whalen}{Biddle et~al\mbox{.}}{2009}]%
        {biddle2009browser}
\bibfield{author}{\bibinfo{person}{Robert Biddle}, \bibinfo{person}{Paul~C
  Van~Oorschot}, \bibinfo{person}{Andrew~S Patrick}, \bibinfo{person}{Jennifer
  Sobey}, {and} \bibinfo{person}{Tara Whalen}.}
  \bibinfo{year}{2009}\natexlab{}.
\newblock \showarticletitle{Browser interfaces and extended validation SSL
  certificates: an empirical study}. In \bibinfo{booktitle}{\emph{ACM workshop
  on Cloud computing security}}. \bibinfo{pages}{19--30}.
\newblock


\bibitem[\protect\citeauthoryear{Blake-Wilson, Bolyard, Gupta, Hawk, and
  Moeller}{Blake-Wilson et~al\mbox{.}}{2006}]%
        {ecc-tls}
\bibfield{author}{\bibinfo{person}{S. Blake-Wilson}, \bibinfo{person}{N.
  Bolyard}, \bibinfo{person}{V. Gupta}, \bibinfo{person}{C. Hawk}, {and}
  \bibinfo{person}{B. Moeller}.} \bibinfo{year}{2006}\natexlab{}.
\newblock \bibinfo{booktitle}{\emph{Elliptic Curve Cryptography ({ECC}) Cipher
  Suites for Transport Layer Security ({TLS})}}.
\newblock \bibinfo{type}{RFC} 4492.
\newblock


\bibitem[\protect\citeauthoryear{Borgesius and Poort}{Borgesius and
  Poort}{2017}]%
        {borgesius2017online}
\bibfield{author}{\bibinfo{person}{Frederik~Zuiderveen Borgesius} {and}
  \bibinfo{person}{Joost Poort}.} \bibinfo{year}{2017}\natexlab{}.
\newblock \showarticletitle{Online price discrimination and EU data privacy
  law}.
\newblock \bibinfo{journal}{\emph{Journal of consumer policy}}
  (\bibinfo{year}{2017}).
\newblock


\bibitem[\protect\citeauthoryear{Brown and Housle}{Brown and Housle}{2007}]%
        {tlsevidence}
\bibfield{author}{\bibinfo{person}{Mark Brown} {and} \bibinfo{person}{Russ
  Housle}.} \bibinfo{year}{2007}\natexlab{}.
\newblock \bibinfo{title}{Transport Layer Security ({TLS}) Evidence
  Extensions}.
\newblock
\newblock
\newblock
\shownote{\url{https://tools.ietf.org/html/draft-housley-evidence-extns-01}.}


\bibitem[\protect\citeauthoryear{Brzuska, Fischlin, Freudenreich, Lehmann,
  Page, Schelbert, Schr{\"o}der, and Volk}{Brzuska et~al\mbox{.}}{2009}]%
        {brzuska2009security}
\bibfield{author}{\bibinfo{person}{Christina Brzuska}, \bibinfo{person}{Marc
  Fischlin}, \bibinfo{person}{Tobias Freudenreich}, \bibinfo{person}{Anja
  Lehmann}, \bibinfo{person}{Marcus Page}, \bibinfo{person}{Jakob Schelbert},
  \bibinfo{person}{Dominique Schr{\"o}der}, {and} \bibinfo{person}{Florian
  Volk}.} \bibinfo{year}{2009}\natexlab{}.
\newblock \showarticletitle{Security of sanitizable signatures revisited}. In
  \bibinfo{booktitle}{\emph{PKC}}. Springer.
\newblock


\bibitem[\protect\citeauthoryear{Bulck, Minkin, Weisse, Genkin, Kasikci,
  Piessens, Silberstein, Wenisch, Yarom, and Strackx}{Bulck
  et~al\mbox{.}}{2018}]%
        {DBLP:conf/uss/BulckMWGKPSWYS18}
\bibfield{author}{\bibinfo{person}{Jo~Van Bulck}, \bibinfo{person}{Marina
  Minkin}, \bibinfo{person}{Ofir Weisse}, \bibinfo{person}{Daniel Genkin},
  \bibinfo{person}{Baris Kasikci}, \bibinfo{person}{Frank Piessens},
  \bibinfo{person}{Mark Silberstein}, \bibinfo{person}{Thomas~F. Wenisch},
  \bibinfo{person}{Yuval Yarom}, {and} \bibinfo{person}{Raoul Strackx}.}
  \bibinfo{year}{2018}\natexlab{}.
\newblock \showarticletitle{Foreshadow: Extracting the Keys to the Intel {SGX}
  Kingdom with Transient Out-of-Order Execution}. In
  \bibinfo{booktitle}{\emph{{USENIX} Security}}.
\newblock


\bibitem[\protect\citeauthoryear{B{\"{u}}nz, Bootle, Boneh, Poelstra, Wuille,
  and Maxwell}{B{\"{u}}nz et~al\mbox{.}}{2018}]%
        {DBLP:conf/sp/BunzBBPWM18}
\bibfield{author}{\bibinfo{person}{Benedikt B{\"{u}}nz},
  \bibinfo{person}{Jonathan Bootle}, \bibinfo{person}{Dan Boneh},
  \bibinfo{person}{Andrew Poelstra}, \bibinfo{person}{Pieter Wuille}, {and}
  \bibinfo{person}{Gregory Maxwell}.} \bibinfo{year}{2018}\natexlab{}.
\newblock \showarticletitle{Bulletproofs: Short Proofs for Confidential
  Transactions and More}. In \bibinfo{booktitle}{\emph{{IEEE} S\&P}}.
\newblock


\bibitem[\protect\citeauthoryear{Buterin et~al\mbox{.}}{Buterin
  et~al\mbox{.}}{2014}]%
        {buterin2014next}
\bibfield{author}{\bibinfo{person}{Vitalik Buterin} {et~al\mbox{.}}}
  \bibinfo{year}{2014}\natexlab{}.
\newblock \showarticletitle{A next-generation smart contract and decentralized
  application platform}.
\newblock \bibinfo{journal}{\emph{white paper}}  \bibinfo{volume}{3}
  (\bibinfo{year}{2014}), \bibinfo{pages}{37}.
\newblock


\bibitem[\protect\citeauthoryear{Butler, Farley, McDaniel, and Rexford}{Butler
  et~al\mbox{.}}{2010}]%
        {butler2010bgpsurvey}
\bibfield{author}{\bibinfo{person}{Kevin R.~B. Butler},
  \bibinfo{person}{Toni~R. Farley}, \bibinfo{person}{Patrick~D. McDaniel},
  {and} \bibinfo{person}{Jennifer Rexford}.} \bibinfo{year}{2010}\natexlab{}.
\newblock \showarticletitle{A Survey of {BGP} Security Issues and Solutions}.
\newblock \bibinfo{journal}{\emph{Proc. IEEE}} \bibinfo{volume}{98},
  \bibinfo{number}{1} (\bibinfo{year}{2010}), \bibinfo{pages}{100--122}.
\newblock


\bibitem[\protect\citeauthoryear{Camenisch and Stadler}{Camenisch and
  Stadler}{1997}]%
        {camenisch1997efficient}
\bibfield{author}{\bibinfo{person}{Jan Camenisch} {and} \bibinfo{person}{Markus
  Stadler}.} \bibinfo{year}{1997}\natexlab{}.
\newblock \showarticletitle{Efficient group signature schemes for large
  groups}. In \bibinfo{booktitle}{\emph{Annual International Cryptology
  Conference}}.
\newblock


\bibitem[\protect\citeauthoryear{Campanelli, Gennaro, Goldfeder, and
  Nizzardo}{Campanelli et~al\mbox{.}}{2017}]%
        {Campanelli:2017:ZCP:3133956.3134060}
\bibfield{author}{\bibinfo{person}{Matteo Campanelli}, \bibinfo{person}{Rosario
  Gennaro}, \bibinfo{person}{Steven Goldfeder}, {and} \bibinfo{person}{Luca
  Nizzardo}.} \bibinfo{year}{2017}\natexlab{}.
\newblock \showarticletitle{Zero-Knowledge Contingent Payments Revisited:
  Attacks and Payments for Services}. In \bibinfo{booktitle}{\emph{ACM CCS}}.
\newblock


\bibitem[\protect\citeauthoryear{Canetti}{Canetti}{2000}]%
        {uc}
\bibfield{author}{\bibinfo{person}{Ran Canetti}.}
  \bibinfo{year}{2000}\natexlab{}.
\newblock \bibinfo{title}{Universally Composable Security: A New Paradigm for
  Cryptographic Protocols}.
\newblock \bibinfo{howpublished}{Cryptology ePrint Archive, Report 2000/067}.
\newblock
\newblock
\shownote{\url{https://eprint.iacr.org/2000/067}.}


\bibitem[\protect\citeauthoryear{Cavage and Sporny}{Cavage and Sporny}{2019}]%
        {http-signatures}
\bibfield{author}{\bibinfo{person}{Mark Cavage} {and} \bibinfo{person}{Manu
  Sporny}.} \bibinfo{year}{2019}\natexlab{}.
\newblock \bibinfo{booktitle}{\emph{Signing {HTTP} Messages}}.
\newblock \bibinfo{type}{Internet-Draft} draft-cavage-http-signatures-11.
\newblock


\bibitem[\protect\citeauthoryear{CFTC}{CFTC}{2018}]%
        {cftc_2018}
\bibfield{author}{\bibinfo{person}{CFTC}.} \bibinfo{year}{2018}\natexlab{}.
\newblock \bibinfo{title}{A Primer on Smart Contrats}.
\newblock
  \bibinfo{howpublished}{\url{https://www.cftc.gov/sites/default/files/2018-11/LabCFTC_PrimerSmartContracts112718.pdf}}.
\newblock


\bibitem[\protect\citeauthoryear{Delignat-Lavaud, Fournet, Kohlweiss, and
  Parno}{Delignat-Lavaud et~al\mbox{.}}{2016}]%
        {delignat2016cinderella}
\bibfield{author}{\bibinfo{person}{Antoine Delignat-Lavaud},
  \bibinfo{person}{C{\'e}dric Fournet}, \bibinfo{person}{Markulf Kohlweiss},
  {and} \bibinfo{person}{Bryan Parno}.} \bibinfo{year}{2016}\natexlab{}.
\newblock \showarticletitle{Cinderella: Turning shabby X. 509 certificates into
  elegant anonymous credentials with the magic of verifiable computation}. In
  \bibinfo{booktitle}{\emph{IEEE S\&P}}.
\newblock


\bibitem[\protect\citeauthoryear{Demmler, Dessouky, Koushanfar, Sadeghi,
  Schneider, and Zeitouni}{Demmler et~al\mbox{.}}{2015}]%
        {demmler2015automated}
\bibfield{author}{\bibinfo{person}{Daniel Demmler}, \bibinfo{person}{Ghada
  Dessouky}, \bibinfo{person}{Farinaz Koushanfar}, \bibinfo{person}{Ahmad-Reza
  Sadeghi}, \bibinfo{person}{Thomas Schneider}, {and} \bibinfo{person}{Shaza
  Zeitouni}.} \bibinfo{year}{2015}\natexlab{}.
\newblock \showarticletitle{Automated synthesis of optimized circuits for
  secure computation}. In \bibinfo{booktitle}{\emph{ACM CCS}}.
\newblock


\bibitem[\protect\citeauthoryear{Deshpande, Thottan, Ho, and Sikdar}{Deshpande
  et~al\mbox{.}}{2009}]%
        {deshpande2009online}
\bibfield{author}{\bibinfo{person}{Shivani Deshpande}, \bibinfo{person}{Marina
  Thottan}, \bibinfo{person}{Tin~Kam Ho}, {and} \bibinfo{person}{Biplab
  Sikdar}.} \bibinfo{year}{2009}\natexlab{}.
\newblock \showarticletitle{An online mechanism for BGP instability detection
  and analysis}.
\newblock \bibinfo{journal}{\emph{IEEE transactions on Computers}}
  \bibinfo{volume}{58}, \bibinfo{number}{11} (\bibinfo{year}{2009}),
  \bibinfo{pages}{1470--1484}.
\newblock


\bibitem[\protect\citeauthoryear{Dierks and Rescorla}{Dierks and
  Rescorla}{2008}]%
        {RFC5246}
\bibfield{author}{\bibinfo{person}{T. Dierks} {and} \bibinfo{person}{E.
  Rescorla}.} \bibinfo{year}{2008}\natexlab{}.
\newblock \bibinfo{booktitle}{\emph{The Transport Layer Security ({TLS})
  Protocol Version 1.2}}.
\newblock \bibinfo{type}{RFC} 5246.
\newblock


\bibitem[\protect\citeauthoryear{Dowling, Fischlin, G{\"{u}}nther, and
  Stebila}{Dowling et~al\mbox{.}}{2015}]%
        {tls13handshake}
\bibfield{author}{\bibinfo{person}{Benjamin Dowling}, \bibinfo{person}{Marc
  Fischlin}, \bibinfo{person}{Felix G{\"{u}}nther}, {and}
  \bibinfo{person}{Douglas Stebila}.} \bibinfo{year}{2015}\natexlab{}.
\newblock \showarticletitle{A Cryptographic Analysis of the {TLS} 1.3 Handshake
  Protocol Candidates}. In \bibinfo{booktitle}{\emph{{ACM} Conference on
  Computer and Communications Security}}. \bibinfo{publisher}{{ACM}},
  \bibinfo{pages}{1197--1210}.
\newblock


\bibitem[\protect\citeauthoryear{Dworkin}{Dworkin}{2007}]%
        {dworkin2007sp}
\bibfield{author}{\bibinfo{person}{Morris~J Dworkin}.}
  \bibinfo{year}{2007}\natexlab{}.
\newblock \bibinfo{booktitle}{\emph{{SP} 800-38d. Recommendation for block
  cipher modes of operation: Galois/counter mode ({GCM}) and {GMAC}}}.
\newblock \bibinfo{type}{{T}echnical {R}eport}.
\newblock


\bibitem[\protect\citeauthoryear{Ellis, Juels, and Nazarov}{Ellis
  et~al\mbox{.}}{2017}]%
        {SCCL:2017}
\bibfield{author}{\bibinfo{person}{Steve Ellis}, \bibinfo{person}{Ari Juels},
  {and} \bibinfo{person}{Sergey Nazarov}.} \bibinfo{year}{4
  Sept.~2017}\natexlab{}.
\newblock \bibinfo{title}{ChainLink: A Decentralized Oracle Network}.
\newblock
  \bibinfo{howpublished}{\url{https://link.smartcontract.com/whitepaper}}.
\newblock


\bibitem[\protect\citeauthoryear{FTC}{FTC}{2017}]%
        {ftc2017}
\bibfield{author}{\bibinfo{person}{FTC}.} \bibinfo{year}{2017}\natexlab{}.
\newblock \bibinfo{title}{Price Discrimination: Robinson-Patman Violations}.
\newblock
  \bibinfo{howpublished}{\url{https://www.ftc.gov/tips-advice/competition-guidance/guide-antitrust-laws/price-discrimination-robinson-patman}}.
\newblock


\bibitem[\protect\citeauthoryear{Gennaro and Goldfeder}{Gennaro and
  Goldfeder}{2018}]%
        {gennaro2018fast}
\bibfield{author}{\bibinfo{person}{Rosario Gennaro} {and}
  \bibinfo{person}{Steven Goldfeder}.} \bibinfo{year}{2018}\natexlab{}.
\newblock \showarticletitle{Fast multiparty threshold {ECDSA} with fast
  trustless setup}. In \bibinfo{booktitle}{\emph{ACM CCS}}.
\newblock


\bibitem[\protect\citeauthoryear{Grubbs, Lu, and Ristenpart}{Grubbs
  et~al\mbox{.}}{2017}]%
        {DBLP:conf/crypto/GrubbsLR17}
\bibfield{author}{\bibinfo{person}{Paul Grubbs}, \bibinfo{person}{Jiahui Lu},
  {and} \bibinfo{person}{Thomas Ristenpart}.} \bibinfo{year}{2017}\natexlab{}.
\newblock \showarticletitle{Message Franking via Committing Authenticated
  Encryption}. In \bibinfo{booktitle}{\emph{{CRYPTO}}}.
\newblock


\bibitem[\protect\citeauthoryear{Grune}{Grune}{2010}]%
        {grune2007parsing}
\bibfield{author}{\bibinfo{person}{Dick Grune}.}
  \bibinfo{year}{2010}\natexlab{}.
\newblock \bibinfo{booktitle}{\emph{Parsing Techniques: A Practical Guide}
  (\bibinfo{edition}{2nd} ed.)}.
\newblock \bibinfo{publisher}{Springer}.
\newblock


\bibitem[\protect\citeauthoryear{Hajjeh and Badra}{Hajjeh and Badra}{2017}]%
        {tlssign}
\bibfield{author}{\bibinfo{person}{Ibrahim Hajjeh} {and}
  \bibinfo{person}{Mohamad Badra}.} \bibinfo{year}{2017}\natexlab{}.
\newblock \bibinfo{title}{{TLS} Sign}.
\newblock
\newblock
\newblock
\shownote{\url{https://tools.ietf.org/html/draft-hajjeh-tls-sign-04}.}


\bibitem[\protect\citeauthoryear{Hazay and Lindell}{Hazay and Lindell}{2010}]%
        {zkfunc}
\bibfield{author}{\bibinfo{person}{Carmit Hazay} {and} \bibinfo{person}{Yehuda
  Lindell}.} \bibinfo{year}{2010}\natexlab{}.
\newblock \bibinfo{title}{A Note on Zero-Knowledge Proofs of Knowledge and the
  {ZKPOK} Ideal Functionality}.
\newblock \bibinfo{howpublished}{Cryptology ePrint Archive, Report 2010/552}.
\newblock
\newblock
\shownote{\url{https://eprint.iacr.org/2010/552}.}


\bibitem[\protect\citeauthoryear{Hofmann}{Hofmann}{2010}]%
        {EFF:2010}
\bibfield{author}{\bibinfo{person}{Marcia Hofmann}.} \bibinfo{year}{21 July
  2010}\natexlab{}.
\newblock \bibinfo{title}{Court: Violating Terms of Service Is Not a Crime, But
  Bypassing Technical Barriers Might Be}.
\newblock \bibinfo{howpublished}{Electronic Frontier Foundation (EFF) News
  Update}.
\newblock


\bibitem[\protect\citeauthoryear{Howe}{Howe}{2017}]%
        {howe_2017}
\bibfield{author}{\bibinfo{person}{Neil Howe}.}
  \bibinfo{year}{2017}\natexlab{}.
\newblock \showarticletitle{A Special Price Just for You}.
\newblock \bibinfo{journal}{\emph{Forbes}} (\bibinfo{date}{Nov}
  \bibinfo{year}{2017}).
\newblock
\newblock
\shownote{\url{https://www.forbes.com/sites/neilhowe/2017/11/17/a-special-price-just-for-you/}.}


\bibitem[\protect\citeauthoryear{Juels, Breidenbach, Coventry, Nazarov, and
  Ellis}{Juels et~al\mbox{.}}{2019}]%
        {Mixicles:2019}
\bibfield{author}{\bibinfo{person}{Ari Juels}, \bibinfo{person}{Lorenz
  Breidenbach}, \bibinfo{person}{Alex Coventry}, \bibinfo{person}{Sergey
  Nazarov}, {and} \bibinfo{person}{Steve Ellis}.}
  \bibinfo{year}{2019}\natexlab{}.
\newblock \bibinfo{title}{Mixicles: Private Decentralized Finance Made Simple}.
\newblock
\newblock
\newblock
\shownote{Chainlink whitepaper.}


\bibitem[\protect\citeauthoryear{Khoury}{Khoury}{2018}]%
        {Khoury:2018}
\bibfield{author}{\bibinfo{person}{George Khoury}.} \bibinfo{year}{24
  Jan.~2018}\natexlab{}.
\newblock \bibinfo{title}{Violation of a Website's Terms of Service is Not
  Criminal}.
\newblock \bibinfo{howpublished}{Findlaw blog post}.
\newblock


\bibitem[\protect\citeauthoryear{Kosba, Papamanthou, and Shi}{Kosba
  et~al\mbox{.}}{2018}]%
        {kosba2018xjsnark}
\bibfield{author}{\bibinfo{person}{Ahmed~E. Kosba},
  \bibinfo{person}{Charalampos Papamanthou}, {and} \bibinfo{person}{Elaine
  Shi}.} \bibinfo{year}{2018}\natexlab{}.
\newblock \showarticletitle{xJsnark: {A} Framework for Efficient Verifiable
  Computation}. In \bibinfo{booktitle}{\emph{{IEEE} S\&P}}.
\newblock


\bibitem[\protect\citeauthoryear{Kr{\"{u}}gel, Mutz, Robertson, and
  Valeur}{Kr{\"{u}}gel et~al\mbox{.}}{2003}]%
        {kruegel2003topology}
\bibfield{author}{\bibinfo{person}{Christopher Kr{\"{u}}gel},
  \bibinfo{person}{Darren Mutz}, \bibinfo{person}{William~K. Robertson}, {and}
  \bibinfo{person}{Fredrik Valeur}.} \bibinfo{year}{2003}\natexlab{}.
\newblock \showarticletitle{Topology-Based Detection of Anomalous {BGP}
  Messages}. In \bibinfo{booktitle}{\emph{{RAID}}}
  \emph{(\bibinfo{series}{Lecture Notes in Computer Science},
  Vol.~\bibinfo{volume}{2820})}. \bibinfo{publisher}{Springer},
  \bibinfo{pages}{17--35}.
\newblock


\bibitem[\protect\citeauthoryear{Lindell}{Lindell}{2005}]%
        {lindell2005secure}
\bibfield{author}{\bibinfo{person}{Yehida Lindell}.}
  \bibinfo{year}{2005}\natexlab{}.
\newblock \showarticletitle{Secure multiparty computation for privacy
  preserving data mining}.
\newblock In \bibinfo{booktitle}{\emph{Encyclopedia of Data Warehousing and
  Mining}}. \bibinfo{publisher}{IGI Global}.
\newblock


\bibitem[\protect\citeauthoryear{Maram, Malvai, Zhang, Jean-Louis, Frolov,
  Kell, Lobban, Moy, Juels, and Miller}{Maram et~al\mbox{.}}{2020}]%
        {candid}
\bibfield{author}{\bibinfo{person}{Deepak Maram}, \bibinfo{person}{Harjasleen
  Malvai}, \bibinfo{person}{Fan Zhang}, \bibinfo{person}{Nerla Jean-Louis},
  \bibinfo{person}{Alexander Frolov}, \bibinfo{person}{Tyler Kell},
  \bibinfo{person}{Tyrone Lobban}, \bibinfo{person}{Christine Moy},
  \bibinfo{person}{Ari Juels}, {and} \bibinfo{person}{Andrew Miller}.}
  \bibinfo{year}{2020}\natexlab{}.
\newblock \bibinfo{title}{CanDID: Can-Do Decentralized Identity with Legacy
  Compatibility, Sybil-Resistance, and Accountability}.
\newblock \bibinfo{howpublished}{Cryptology ePrint Archive, Report 2020/934}.
\newblock
\newblock
\shownote{\url{https://eprint.iacr.org/2020/934}.}


\bibitem[\protect\citeauthoryear{Matetic, Schneider, Miller, Juels, and
  Capkun}{Matetic et~al\mbox{.}}{2018}]%
        {delegatee}
\bibfield{author}{\bibinfo{person}{Sinisa Matetic}, \bibinfo{person}{Moritz
  Schneider}, \bibinfo{person}{Andrew Miller}, \bibinfo{person}{Ari Juels},
  {and} \bibinfo{person}{Srdjan Capkun}.} \bibinfo{year}{2018}\natexlab{}.
\newblock \showarticletitle{DelegaTEE: Brokered Delegation Using Trusted
  Execution Environments}. In \bibinfo{booktitle}{\emph{USENIX Security}}.
\newblock


\bibitem[\protect\citeauthoryear{Miyazaki, Iwamura, Matsumoto, Sasaki,
  Yoshiura, Tezuka, and Imai}{Miyazaki et~al\mbox{.}}{2005}]%
        {miyazaki2005digitally}
\bibfield{author}{\bibinfo{person}{Kunihiko Miyazaki}, \bibinfo{person}{Mitsuru
  Iwamura}, \bibinfo{person}{Tsutomu Matsumoto}, \bibinfo{person}{Ry{\^{o}}ichi
  Sasaki}, \bibinfo{person}{Hiroshi Yoshiura}, \bibinfo{person}{Satoru Tezuka},
  {and} \bibinfo{person}{Hideki Imai}.} \bibinfo{year}{2005}\natexlab{}.
\newblock \showarticletitle{Digitally Signed Document Sanitizing Scheme with
  Disclosure Condition Control}.
\newblock \bibinfo{journal}{\emph{{IEICE} Transactions}}
  (\bibinfo{year}{2005}).
\newblock


\bibitem[\protect\citeauthoryear{Myers, Ankney, Malpani, Galperin, and
  Adams}{Myers et~al\mbox{.}}{1999}]%
        {myers1999x}
\bibfield{author}{\bibinfo{person}{Michael Myers}, \bibinfo{person}{Rich
  Ankney}, \bibinfo{person}{Ambarish Malpani}, \bibinfo{person}{Slava
  Galperin}, {and} \bibinfo{person}{Carlisle Adams}.}
  \bibinfo{year}{1999}\natexlab{}.
\newblock \bibinfo{booktitle}{\emph{X. 509 Internet public key infrastructure
  online certificate status protocol-OCSP}}.
\newblock \bibinfo{type}{{T}echnical {R}eport}. \bibinfo{institution}{RFC
  2560}.
\newblock


\bibitem[\protect\citeauthoryear{Nissenbaum}{Nissenbaum}{2009}]%
        {nissenbaum2009privacy}
\bibfield{author}{\bibinfo{person}{Helen Nissenbaum}.}
  \bibinfo{year}{2009}\natexlab{}.
\newblock \bibinfo{booktitle}{\emph{Privacy in context: Technology, policy, and
  the integrity of social life}}.
\newblock \bibinfo{publisher}{Stanford University Press}.
\newblock


\bibitem[\protect\citeauthoryear{Oded}{Oded}{2009}]%
        {simulation}
\bibfield{author}{\bibinfo{person}{Goldreich Oded}.}
  \bibinfo{year}{2009}\natexlab{}.
\newblock \bibinfo{booktitle}{\emph{Foundations of Cryptography: Volume 2,
  Basic Applications} (\bibinfo{edition}{1st} ed.)}.
\newblock \bibinfo{publisher}{Cambridge University Press}.
\newblock
\showISBNx{052111991X, 9780521119917}


\bibitem[\protect\citeauthoryear{Odlyzko}{Odlyzko}{2003}]%
        {odlyzko2003privacy}
\bibfield{author}{\bibinfo{person}{Andrew Odlyzko}.}
  \bibinfo{year}{2003}\natexlab{}.
\newblock \showarticletitle{Privacy, economics, and price discrimination on the
  Internet}. In \bibinfo{booktitle}{\emph{5th international conference on
  Electronic commerce}}.
\newblock


\bibitem[\protect\citeauthoryear{@OpenLawOfficial}{@OpenLawOfficial}{2018}]%
        {openlawofficial_2018}
\bibfield{author}{\bibinfo{person}{@OpenLawOfficial}.}
  \bibinfo{year}{2018}\natexlab{}.
\newblock \bibinfo{title}{The Future of Derivatives: An End-to-End, Legally
  Enforceable Option Contract Powered by Ethereum}.
\newblock
\newblock


\bibitem[\protect\citeauthoryear{Paillier}{Paillier}{1999}]%
        {DBLP:conf/eurocrypt/Paillier99}
\bibfield{author}{\bibinfo{person}{Pascal Paillier}.}
  \bibinfo{year}{1999}\natexlab{}.
\newblock \showarticletitle{Public-Key Cryptosystems Based on Composite Degree
  Residuosity Classes}. In \bibinfo{booktitle}{\emph{{EUROCRYPT}}}.
\newblock


\bibitem[\protect\citeauthoryear{Peterson and Krug}{Peterson and Krug}{2015}]%
        {peterson2015augur}
\bibfield{author}{\bibinfo{person}{Jack Peterson} {and} \bibinfo{person}{Joseph
  Krug}.} \bibinfo{year}{2015}\natexlab{}.
\newblock \showarticletitle{Augur: a decentralized, open-source platform for
  prediction markets}.
\newblock \bibinfo{journal}{\emph{arXiv:1501.01042}} (\bibinfo{year}{2015}).
\newblock


\bibitem[\protect\citeauthoryear{Projects}{Projects}{[n.d.]}]%
        {spdy-whitepaper}
\bibfield{author}{\bibinfo{person}{The~Chromium Projects}.}
  \bibinfo{year}{[n.d.]}\natexlab{}.
\newblock \bibinfo{title}{The {SPDY} whitepaper}.
\newblock
  \bibinfo{howpublished}{\url{https://dev.chromium.org/spdy/spdy-whitepaper}}.
\newblock


\bibitem[\protect\citeauthoryear{Rescorla}{Rescorla}{2018}]%
        {RFC8446}
\bibfield{author}{\bibinfo{person}{E. Rescorla}.}
  \bibinfo{year}{2018}\natexlab{}.
\newblock \bibinfo{booktitle}{\emph{The Transport Layer Security ({TLS})
  Protocol Version 1.3}}.
\newblock \bibinfo{type}{RFC} 8446.
\newblock


\bibitem[\protect\citeauthoryear{Ritzdorf, W{\"u}st, Gervais, Felley, and
  {\v{C}}apkun}{Ritzdorf et~al\mbox{.}}{2018}]%
        {ritzdorf2017tls}
\bibfield{author}{\bibinfo{person}{Hubert Ritzdorf}, \bibinfo{person}{Karl
  W{\"u}st}, \bibinfo{person}{Arthur Gervais}, \bibinfo{person}{Guillaume
  Felley}, {and} \bibinfo{person}{Srdjan {\v{C}}apkun}.}
  \bibinfo{year}{2018}\natexlab{}.
\newblock \showarticletitle{{TLS}-{N}: Non-repudiation over {TLS} Enabling
  Ubiquitous Content Signing.}. In \bibinfo{booktitle}{\emph{NDSS}}.
\newblock


\bibitem[\protect\citeauthoryear{Salowey, Choudhury, and McGrew}{Salowey
  et~al\mbox{.}}{2008}]%
        {gcm-in-tls}
\bibfield{author}{\bibinfo{person}{J. Salowey}, \bibinfo{person}{A. Choudhury},
  {and} \bibinfo{person}{D. McGrew}.} \bibinfo{year}{2008}\natexlab{}.
\newblock \bibinfo{booktitle}{\emph{{AES} {G}alois Counter Mode ({GCM}) Cipher
  Suites for {TLS}}}.
\newblock \bibinfo{type}{RFC} 5288.
\newblock


\bibitem[\protect\citeauthoryear{Saxena, Molnar, and Livshits}{Saxena
  et~al\mbox{.}}{2011}]%
        {DBLP:conf/ccs/SaxenaML11}
\bibfield{author}{\bibinfo{person}{Prateek Saxena}, \bibinfo{person}{David
  Molnar}, {and} \bibinfo{person}{Benjamin Livshits}.}
  \bibinfo{year}{2011}\natexlab{}.
\newblock \showarticletitle{{SCRIPTGARD:} automatic context-sensitive
  sanitization for large-scale legacy web applications}. In
  \bibinfo{booktitle}{\emph{ACM CCS}}.
\newblock


\bibitem[\protect\citeauthoryear{Scholte, Balzarotti, and Kirda}{Scholte
  et~al\mbox{.}}{2011}]%
        {scholte2011quo}
\bibfield{author}{\bibinfo{person}{Theodoor Scholte}, \bibinfo{person}{Davide
  Balzarotti}, {and} \bibinfo{person}{Engin Kirda}.}
  \bibinfo{year}{2011}\natexlab{}.
\newblock \showarticletitle{Quo Vadis? {A} Study of the Evolution of Input
  Validation Vulnerabilities in Web Applications}. In
  \bibinfo{booktitle}{\emph{Financial Cryptography}}.
\newblock


\bibitem[\protect\citeauthoryear{Sellars}{Sellars}{2018}]%
        {sellars2018twenty}
\bibfield{author}{\bibinfo{person}{Andrew Sellars}.}
  \bibinfo{year}{2018}\natexlab{}.
\newblock \showarticletitle{Twenty Years of Web Scraping and the Computer Fraud
  and Abuse Act}.
\newblock \bibinfo{journal}{\emph{BUJ Sci. \& Tech. L.}}  \bibinfo{volume}{24}
  (\bibinfo{year}{2018}), \bibinfo{pages}{372}.
\newblock


\bibitem[\protect\citeauthoryear{Steinfeld, Bull, and Zheng}{Steinfeld
  et~al\mbox{.}}{2001}]%
        {steinfeld2001content}
\bibfield{author}{\bibinfo{person}{Ron Steinfeld}, \bibinfo{person}{Laurence
  Bull}, {and} \bibinfo{person}{Yuliang Zheng}.}
  \bibinfo{year}{2001}\natexlab{}.
\newblock \showarticletitle{Content extraction signatures}. In
  \bibinfo{booktitle}{\emph{International Conference on Information Security
  and Cryptology}}.
\newblock


\bibitem[\protect\citeauthoryear{Sun, Edmundson, Vanbever, Li, Rexford, Chiang,
  and Mittal}{Sun et~al\mbox{.}}{2015}]%
        {bgp_raptor}
\bibfield{author}{\bibinfo{person}{Yixin Sun}, \bibinfo{person}{Anne
  Edmundson}, \bibinfo{person}{Laurent Vanbever}, \bibinfo{person}{Oscar Li},
  \bibinfo{person}{Jennifer Rexford}, \bibinfo{person}{Mung Chiang}, {and}
  \bibinfo{person}{Prateek Mittal}.} \bibinfo{year}{2015}\natexlab{}.
\newblock \showarticletitle{{RAPTOR:} Routing Attacks on Privacy in Tor}. In
  \bibinfo{booktitle}{\emph{{USENIX} Security}}.
\newblock


\bibitem[\protect\citeauthoryear{Useem}{Useem}{2017}]%
        {useem_2017}
\bibfield{author}{\bibinfo{person}{Jerry Useem}.}
  \bibinfo{year}{2017}\natexlab{}.
\newblock \showarticletitle{How Online Shopping Makes Suckers of Us All}.
\newblock \bibinfo{journal}{\emph{The Atlantic}} (\bibinfo{date}{Jul}
  \bibinfo{year}{2017}).
\newblock
\newblock
\shownote{\url{https://www.theatlantic.com/magazine/archive/2017/05/how-online-shopping-makes-suckers-of-us-all/521448/}.}


\bibitem[\protect\citeauthoryear{Wang, Asharov, Pass, Ristenpart, and
  Shelat}{Wang et~al\mbox{.}}{2019}]%
        {blindca}
\bibfield{author}{\bibinfo{person}{L. Wang}, \bibinfo{person}{G. Asharov},
  \bibinfo{person}{R. Pass}, \bibinfo{person}{T. Ristenpart}, {and}
  \bibinfo{person}{A. Shelat}.} \bibinfo{year}{2019}\natexlab{}.
\newblock \showarticletitle{Blind Certificate Authorities}. In
  \bibinfo{booktitle}{\emph{IEEE S\&P}}.
\newblock


\bibitem[\protect\citeauthoryear{Wang, Malozemoff, and Katz}{Wang
  et~al\mbox{.}}{2016}]%
        {emp-toolkit}
\bibfield{author}{\bibinfo{person}{Xiao Wang}, \bibinfo{person}{Alex~J.
  Malozemoff}, {and} \bibinfo{person}{Jonathan Katz}.}
  \bibinfo{year}{2016}\natexlab{}.
\newblock \bibinfo{title}{{EMP-toolkit: Efficient MultiParty computation
  toolkit}}.
\newblock \bibinfo{howpublished}{\url{https://github.com/emp-toolkit}}.
\newblock


\bibitem[\protect\citeauthoryear{Wang, Ranellucci, and Katz}{Wang
  et~al\mbox{.}}{2017}]%
        {DBLP:conf/ccs/WangRK17}
\bibfield{author}{\bibinfo{person}{Xiao Wang}, \bibinfo{person}{Samuel
  Ranellucci}, {and} \bibinfo{person}{Jonathan Katz}.}
  \bibinfo{year}{2017}\natexlab{}.
\newblock \showarticletitle{Authenticated Garbling and Efficient Maliciously
  Secure Two-Party Computation}. In \bibinfo{booktitle}{\emph{ACM CCS}}.
\newblock


\bibitem[\protect\citeauthoryear{Yao}{Yao}{1982}]%
        {yao1982protocols}
\bibfield{author}{\bibinfo{person}{Andrew Chi-Chih Yao}.}
  \bibinfo{year}{1982}\natexlab{}.
\newblock \showarticletitle{Protocols for secure computations}. In
  \bibinfo{booktitle}{\emph{FOCS}}.
\newblock


\bibitem[\protect\citeauthoryear{Yasskin}{Yasskin}{2019}]%
        {http-origin-signed-responses}
\bibfield{author}{\bibinfo{person}{Jeffrey Yasskin}.}
  \bibinfo{year}{2019}\natexlab{}.
\newblock \bibinfo{booktitle}{\emph{Signed {HTTP} Exchanges}}.
\newblock \bibinfo{type}{Internet-Draft}
  draft-yasskin-http-origin-signed-responses-05.
\newblock


\bibitem[\protect\citeauthoryear{Zhang, Cecchetti, Croman, Juels, and
  Shi}{Zhang et~al\mbox{.}}{2016}]%
        {zhang2016town}
\bibfield{author}{\bibinfo{person}{Fan Zhang}, \bibinfo{person}{Ethan
  Cecchetti}, \bibinfo{person}{Kyle Croman}, \bibinfo{person}{Ari Juels}, {and}
  \bibinfo{person}{Elaine Shi}.} \bibinfo{year}{2016}\natexlab{}.
\newblock \showarticletitle{{T}own {C}rier: An authenticated data feed for
  smart contracts}. In \bibinfo{booktitle}{\emph{ACM CCS}}.
\newblock


\bibitem[\protect\citeauthoryear{Zhang, Rexford, and Feigenbaum}{Zhang
  et~al\mbox{.}}{2005}]%
        {zhang2005learning}
\bibfield{author}{\bibinfo{person}{Jian Zhang}, \bibinfo{person}{Jennifer
  Rexford}, {and} \bibinfo{person}{Joan Feigenbaum}.}
  \bibinfo{year}{2005}\natexlab{}.
\newblock \showarticletitle{Learning-based anomaly detection in BGP updates}.
  In \bibinfo{booktitle}{\emph{Proceedings of the 2005 ACM SIGCOMM workshop on
  Mining network data}}. \bibinfo{pages}{219--220}.
\newblock


\bibitem[\protect\citeauthoryear{Zhang, Yen, Zhao, Massey, Wu, and Zhang}{Zhang
  et~al\mbox{.}}{2004}]%
        {zhang2004detection}
\bibfield{author}{\bibinfo{person}{Ke Zhang}, \bibinfo{person}{Amy Yen},
  \bibinfo{person}{Xiaoliang Zhao}, \bibinfo{person}{Dan Massey},
  \bibinfo{person}{S~Felix Wu}, {and} \bibinfo{person}{Lixia Zhang}.}
  \bibinfo{year}{2004}\natexlab{}.
\newblock \showarticletitle{On detection of anomalous routing dynamics in BGP}.
  In \bibinfo{booktitle}{\emph{International Conference on Research in
  Networking}}. Springer, \bibinfo{pages}{259--270}.
\newblock


\end{thebibliography}

\appendix

\section{Protocols details}
\label{sec:ectf}

\subsection{Formal specification}
\label{sec:formal protocols}

We gave a self-contained informal description of the three-party handshake protocol in~\cref{sec:handshake}.
The formal specification is given in~\cref{fig:3party-hs-protocol} along with its building block~$\protadd$ in~\cref{fig:ecadd}.
The post-handshake protocols for CBC-HMAC described in~\cref{sec:post handshake} is specified in~\cref{fig:2pc-hmac}.

\begin{figure}[t]
\begin{boxedminipage}{\columnwidth}
\protocolNoBox{The three-party handshake (3P-HS) protocol among $\prover$, $\verifier$ and $\tlsserver$}{
{\bf Public information:} Let $EC$ be the Elliptic Curve used in ECDHE over $\FF_p$ with order $p$, $G$ a parameter, and $Y_{\tlsserver}$ the server public key. \\
{\bf Output:} $\prover$ and $\verifier$ output $\mackeyP$ and $\mackeyV$ respectively, while the TLS server outputs $\mackey=\mackeyP + \mackeyV$. Besides,  both $\tlsserver$ and $\prover$ outputs $\enckey$.
\\[1mm][\hline]
\\[1mm]
\textbf{TLS server $\tlsserver$:} follow the standard TLS protocol. \\[1mm]
\textbf{Prover $\prover$:} \\
\oninit: $\prover$ samples $\randc \sample \bin^{256}$ and sends ClientHello($\randc$) to $\tlsserver$ to start a standard TLS handshake. \\
\onrecv ServerHello($\rands$), ServerKeyEx($Y,\sigma,\CERT$) from $\tlsserver$: \\
\begin{itemize}[leftmargin=*]
\item $\prover$ verifies that $\CERT$ is a valid certificate and that $\sigma$ is a valid signature over $(\randc, \rands,\pubkeyS)$ signed by a key contained in $\CERT$. $\prover$ sends $(\randc, \rands, \pubkeyS, \sigma, \CERT)$ to $\verifier$.
\item $\verifier$ checks $\CERT$ and $\sigma$ similarly. $\verifier$ then samples $\secretV \sample \FF_p$ and computes $Y_V =\secretV \cdot G$. Send $Y_V$ to $\prover$.
\item $\prover$ samples $\secretP \sample \FF_p$ and computes $Y_P=\secretP \cdot G$. Send ClientKeyEx($Y_P + Y_V$) to $\tlsserver$.
\item $\prover$ and $\verifier$ compute $Z_P = \secretP \cdot \pubkeyS$ and $Z_V = \secretV \cdot \pubkeyS$ respectively. They run $\protadd$ to compute a sharing of the $x$-coordinate of $Z=Z_P + Z_V$, denoted $z_P, z_V$.
\item $\prover$ and $\verifier$ send $z_P$ (and $z_V$) to $\idealtwopc^\text{hs}$ (specified below) to compute shares of session keys and the master secret. $\prover$ receives $(\enckey, \mackeyP, m_{\prover})$, while $\verifier$ receives $(\mackeyV, m_{\verifier})$.
\item $\prover$ computes a hash (denoted $h$) of the handshake messages sent and received thus far, and runs 2PC-PRF with $\verifier$ to compute $s=\prf(m_{\prover}\oplus m_{\verifier}, \text{``client finished''}, h)$ on the hash of the handshake messages and send a Finished($s$) to $\tlsserver$.
\end{itemize}
\onrecv other messages from $\tlsserver$: \\
\begin{itemize}[leftmargin=*]
    \item If it's Finished($s$), $\prover$ and $\verifier$ run a 2PC to check $s \stackrel{?}{=}\prf(m_{\prover}\oplus m_{\verifier}, \text{``server finished''}, h)$ and abort if not.
    \item Otherwise respond according to the standard TLS protocol.
\end{itemize}
}

\protocol{$\idealtwopc^\text{hs}$ with $\prover$ and $\verifier$}{
\textbf{Public Input:} nonce $\randc, \rands$ \\
\textbf{Private Input:} $z_P \in \FF_p$ from $\prover$; $z_V \in \FF_p^2$ from $\verifier$\\
\begin{itemize}
    \item $z:=z_P + z_v$
    \item $m:=\prf(z, \text{``master secret''}, \randc \| \rands)$ (truncate at $48$ bytes)
    \item $\mackey,\enckey:=\prf(m, \text{``key expansion''}, \rands \| \randc)$ \pccomment{key expansion}
    \item Sample $r_k, r_m \sample \FF_p$. Send $(\enckey, r_k, r_m)$ to $\prover$, and $(r_k \oplus \mackey, r_m \oplus m)$ to $\verifier$ privately.
\end{itemize}
}
\end{boxedminipage}
\caption{The protocol of three-party handshake.}
\label{fig:3party-hs-protocol}
\end{figure}

\begin{figure}
\centering
\protocol{$\protadd$ between $\prover$ and $\verifier$}
{
\textbf{Input}: $P_1=(x_1, y_1)\in EC(\FF_p)$ from $\prover$, $P_2=(x_2, y_2)\in EC(\FF_p)$ from $\verifier$. \\
\textbf{Output}: $\prover$ and $\verifier$ output $s_1$ and $s_2$ such that $s_1 + s_2=x$ where $(x,y)=P_1+P_2$ in $EC$.\\
\textbf{Protocol}:
\begin{itemize}
\item $\prover$ (and $\verifier$) sample $\rho_i \sample \ZZ_p$ for $i\in \set{1,2}$ respectively. $\prover$ and $\verifier$ run $\alpha_1,\alpha_2 :=\MTA((-x_1, \rho_1), (\rho_2, x_2))$.
\item $\prover$ computes $\delta_1 = -x_1 \rho_1 + \alpha_1$ and $\verifier$ computes $\delta_2 = x_2 \rho_2 + \alpha_2$.
\item $\prover$ (and $\verifier$) reveal $\delta_1$ (and $\delta_2$) to each other and compute $\delta=\delta_1 + \delta_2$.
\item $\prover$ (and $\verifier$) compute $\eta_i = \rho_i \cdot \delta^{-1}$ for $i\in \set{1,2}$ respectively.
\item $\prover$ and $\verifier$ run $\beta_1, \beta_2:=\MTA((-y_1, \eta_1), (\eta_2, y_2))$.
\item $\prover$ computes $\lambda_1 = -y_1 \cdot \eta_1 + \beta_1$ and $\verifier$ computes $\lambda_2 = y_2 \cdot \eta_2 + \beta_2$. They run $\gamma_1, \gamma_2 := \MTA(\lambda_1, \lambda_2)$.
\item $\prover$ (and $\verifier$) computes $s_i = 2\gamma_i + \lambda_i ^ 2 - x_i$ for $i\in\set{1,2}$ respectively.
\item $\prover$ outputs $s_1$ and $\verifier$ outputs $s_2$.
\end{itemize}
}
\caption{($\protadd$) A protocol for converting shares of EC points in $EC(\FF)$ to shares of coordinates in $\FF$.}
\label{fig:ecadd}
\end{figure}

\begin{figure}[t]
\centering
\protocol{2PC-HMAC between $\prover$ and $\verifier$}{%
\textbf{Input}: $\prover$ inputs $\mackeyP$, $m$ and $\verifier$ inputs $\mackeyV$. \\
\textbf{Output}: $\prover$ outputs $\hmac(\mackey, m)$ where $\mackey = \mackeyP \xor \mackeyV$. \\[1mm]
\textbf{One-time setup}: $\prover$ and $\verifier$ use 2PC to compute $\ipadhash = f(\initstate, \mackey \xor \ipad)$ and reveal $\ipadhash$ to $\prover$. \\
\textbf{To compute a tag for message $m$}:
\begin{itemize}
    \item $\prover$ computes inner hash $h_i = f(\ipadhash, m)$.
    \item $\prover$ inputs $\mackeyP$, $h_i$ and $\verifier$ inputs $\mackeyV$ to 2PC which reveals $\hash(\mackey \xor \opad \| h_i)$ to both parties.
\end{itemize}}
\caption{The 2PC-HMAC protocol. $f$ denotes the compression function of the hash function $\hash$ and $\initstate$ denotes the initial value.}\label{fig:2pc-hmac}
\end{figure}

\subsection{Selective opening (CBC-HMAC)}
\label{sec: cbc hmac tricks}

\boldhead{Redacting a suffix}
When a suffix $\VEC{B}_{i+}$ is to be redacted, $\prover$ computes $\pi = \ZKP{ \VEC{B}_{i+}, \enckey: f(s_{i}, \VEC{B}_{i+}) = ih \land H(\mackey \xor \opad || ih) = \sigma \land B_{1025}\| B_{1026} \| B_{1027} = \cbc(\enckey, \sigma) } $  and $s_{i}$ is the state after applying $f$ on $\VEC{B}_{i-} \| B_i$.  $\prover$ sends $(\pi, \VEC{B}_{i-} \| B_i)$ to $\verifier$. The verifier then 1) checks $s_{i-1}$ by applying $f$ on $\VEC{B}_{i-} \| B_i$, and 2) verifies $\pi$. Essentially, the security of this follows from pre-image resistance of $f$. Moreover, $\verifier$ doesn't learn the redacted suffix since $ih=f(s, \VEC{B}_{i+})$ is kept secret from $\verifier$. The total cost is $3$ AES and $256-i$ SHA-2 hashes in ZKP.

\boldhead{Redacting a prefix}
$\prover$ computes two ZKPs: 1) $\pi_1 = \ZKP{ \VEC{B}_{i-}, \allowbreak \mackey: H(\mackey \xor \ipad || \VEC{B}_{i-}) = s_{i-1}} $; 2) $\pi_2 = \ZKP{ \mackey, \enckey: H(\mackey \xor \opad || ih) = \sigma \land B_{1025}\| B_{1026} \| B_{1027} = \cbc(\enckey, \sigma) }$.
$\prover$ sends $(\pi_1, \pi_2, \allowbreak s_{i-1}, B_i \| \VEC{B}_{i+} )$ to $\verifier$. The verifier checks that 1) $s_{i-1}$ is correct using $\pi_1$ and then computes $f(s_{i-1}, B_i \| \VEC{B}_{i+})$ to obtain the inner hash $ih$, 2) $\pi_2$ is verified using the computed $ih$. The cost incurred is $3$ AES and $256-i$ SHA-2 hashes in ZKP.

Note that redacting a prefix/suffix only makes sense if the revealed portion does not contain any private user data. Otherwise, $\prover$ would have to find the smallest substring containing all the sensitive blocks and redact either the prefix/suffix similar to above.

\section{Protocols details for GCM}
\label{app:gcm}

\subsection{Preliminaries}

GCM is an authenticated encryption with additional data (AEAD) cipher.
To encrypt, the GCM cipher takes as inputs a tuple $(\key, IV, \VEC{M}, \VEC{A})$: a secret key, an initial vector, a plaintext of multiple AES blocks, and additional data to be included in the integrity protection; it outputs a ciphertext $\VEC{C}$ and a tag $T$.
Decryption reverses the process. The decryption cipher takes as input $(\key, IV, \VEC{C}, \VEC{A}, T)$ and first checks the integrity of the ciphertext by comparing a recomputed tag with $T$, then outputs the plaintext.

The ciphertext is computed in the counter mode: $C_i = \aes(\key,\allowbreak \INC^i(IV)) \xor M_i$ where $\INC^i$ denotes incrementing $IV$ for $i$ times (the exact format of $\INC$ is immaterial.)

The tag $\TAG(\key, IV, \VEC{C}, \VEC{A})$ is computed as follows.
Given a vector $\VEC{X} \in \gfgcm^m$, the associated GHASH polynomial $P_{\VEC{X}}: \gfgcm \to \gfgcm$ is defined as $P_{\VEC{X}}(h)=\sum_{i=1}^m X_i \cdot h^{m-i+1}$ with addition and multiplication done in $\gfgcm$.
Without loss of generality, suppose $\VEC{A}$ and $\VEC{C}$ are properly padded. Let $\ell_A$ and $\ell_C$ denote their length. A GCM tag is
\begin{equation}
\TAG(\key, IV, \VEC{C}, \VEC{A}):=\aes(\key, IV) \xor \GHASHPOLY(h)
\label{eqn:gcmtag}
\end{equation} where $h=\aes(\key, \VEC{0})$.

When GCM is used in TLS, each plaintext record $D$ is encrypted as follows. A unique nonce $n$ is chosen and the additional data $\kappa$ is computed as a concatenation of the sequence number, version, and length of $D$. GCM encryption is invoked to generate the payload record as $M = n \| \gcm(\key, n, D, \kappa)$. We refer readers to~\cite{dworkin2007sp} for a complete specification.

\subsection{\Posthandshake}
\label{app:gcm posths}

The 2PC protocols for verifying tags and decrypting records are specified in~\cref{fig:post-hs-gcm}.

\begin{figure}[t]
\begin{boxedminipage}{\columnwidth}
\protocolNoBox{Post-handshake protocols for GCM}{
{\bf Private input:} $\key_{\prover}$ and $\key_{\verifier}$ from $\prover$ and $\verifier$ respectively. $\key=\key_{\prover}+\key_{\verifier}$ is the encryption key. %
}
\protocolNoBox{Protocol for preprocessing}{
\oninit: $\prover$ (and $\verifier$) sends $\key_{\prover}$ and ($\key_{\verifier}$) to $\idealPPGCM$ and wait for output $\set{h_{\prover,i}}_i$ (and $\set{h_{\verifier,i}}_i$). \\
}%
\protocol{$\idealPPGCM$}{
After receiving $\key_1,\key_2$ from two parties, compute $h:=\aes(\key_1 + \key_2, \VEC{0})$. Sample $n$ random numbers $\set{r_i}_{i=1}^n$ and compute $\set{h^i}_{i=1}^n$ in $\gfgcm$. For $i\in [n]$, send $r_i$ to player $1$ and $r_i \oplus h^i$ to player $2$.
} %
\protocolNoBox{Protocol for decrypting TLS records}{
\textbf{Prover $\prover$:} \\
\onrecv a record $(IV, \VEC{C}, \VEC{A}, T)$ from $\tlsserver$: \\
\begin{itemize}
\item Let $\VEC{X}=\VEC{A} \| \VEC{C}  \| \ell_A \| \ell_C$.
\item Send $(\key_{\prover}, IV)$ to $\idealAesSameMsg$ and wait for output $c_{\prover}$.
\item Send $(IV,\VEC{X})$ to $\verifier$ and wait for the response $P$.
\item Compute $T' = P + c_{\prover} + \sum_i X_i \cdot h_{\prover,i}$ in $\gfgcm$.
\item Abort if $T' \neq T$. Otherwise, compute $\VEC{K}$ such that $K_i=\INC^i(IV)$ for $i\in[\ell_C]$. Send ($IV,\ell_C$, Decrypt) to $\verifier$.
\item Send $(\key_{\prover}, \VEC{K})$ to $\idealAesSameMsgOneOutput$ as party $1$ and wait for output $\VEC{K'}$.
\item Decrypt the message as $M_i = K'_i \xor C_i$.
\end{itemize} \\[1mm]
\textbf{Verifier $\verifier$:} \\
\onrecv ($IV, \VEC{X}$) from $\prover$: \\
\begin{itemize}
\item If $IV$ found in store, abort. Otherwise store $IV$ and proceed.
\item Send $(\key_{\verifier}, IV)$ to $\idealAesSameMsg$ and wait for output $c_{\verifier}$.
\item Compute $P=c_{\verifier} + \sum_i X_i \cdot h_{\verifier,i}$ in $\gfgcm$
\item Send $P$ to $\prover$.
\end{itemize}
\onrecv ($IV, n$, Decrypt) from $\prover$: \\
\begin{itemize}
    \item Compute $\VEC{K}$ such that $K_i=\INC^i(IV)$ for $i\in [n]$.
    \item Abort if any $K_i$ is found in store (as previously used IVs.)
    \item Send $(\key_{\verifier}, \VEC{K})$ to $\idealAesSameMsgOneOutput$ as party $2$.
\end{itemize}
}%
\protocol{$\idealAesSameMsg$}{
Wait for input $(\key_i, m_i)$ from party $i$ for $i\in \set{1,2}$. Abort if $m_1 \neq m_2$. Sample $r \sample \FF$. Compute $c=\aes(\key_1 \xor \key_2, m_1)$. Send $r$ to party $1$ and $c\xor r$ to party $2$.}
\protocol{$\idealAesSameMsgOneOutput$}{%
Wait for input $(\key_i, m_i)$ from party $i$ for $i\in \set{1,2}$. Abort if $m_1 \neq m_2$. Compute $c=\aes(\key_1 \xor \key_2, m_1)$. Send $c$ to party $1$ and $\bot$ to party $2$.
}
\end{boxedminipage}

\caption{The post-handshake protocols for AES-GCM.}
\label{fig:post-hs-gcm}
\end{figure}

\boldhead{Tag creation/verification}
Computing or verifying a GCM tag involves evaluating \cref{eqn:gcmtag} in 2PC.
A challenge is that \cref{eqn:gcmtag} involves both arithmetic computation (e.g., polynomial evaluation in $\gfgcm$) as well as binary computation (e.g., AES). Performing multiplication in a large field in a binary circuit is expensive, while computing AES (defined in $\gf(2^8)$) in $\gfgcm$ incurs high overhead. Even if the computation could somehow separated into two circuits, evaluating the polynomial alone---which takes approximately 1,000 multiplications in $\gfgcm$ for \emph{each} record---would be unduly expensive.

Our protocol removes the need for polynomial evaluation. The actual 2PC protocol involves only binary operations and thus can be done in a single circuit. Moreover, the per-record computation is reduced to only one invocation of 2PC-AES.

The idea is to compute shares of $\set{h^i}$ (in a 2PC protocol) in a preprocessing phase at the beginning of a session. The overhead of preprocessing is amortized over the session because the same $h$ used for all records that follow. With shares of $\set{h^i}$, $\prover$ and $\verifier$  can compute shares of a polynomial evaluation $\GHASHPOLY(h)$ locally. They also compute $\aes(\key, IV)$ in 2PC to get a share of $\TAG(\key, IV, \VEC{C}, \VEC{A})$. In total, only one invocation of 2PC-AES in needed to check the tag for each record.

It is critical that $\verifier$ never responds to the same $IV$ more than once; otherwise $\prover$ would learn $h$. Specifically, in each response, $\verifier$ reveals a blinded linear combination of her shares $\set{h_{\verifier, i}}$ in the form of $\mathcal{L}_{IV,X}=\aes(\key, IV) \oplus \sum_i X_i \cdot  h_{\verifier, i}$. It is important that the value is blinded by $\aes(\key, IV)$ because a single unblinded linear combination of $\set {h_{\verifier, i}}$ would allow $\prover$ to solve for $h$. Therefore, if $\verifier$ responds to the same $IV$ twice, the blinding can be removed by adding the two responses (in $\gfgcm$): $\mathcal{L}_{IV,X}\xor \mathcal{L}_{IV,X'}=\sum_i (X_i + X'_i)\cdot h_{\verifier, i}$. This follows from the nonce uniqueness requirement of GCM~\cite{gcm-in-tls}.

\boldhead{Encrypting/decrypting records}
Once tags are properly checked, decryption of records is straightforward. $\prover$ and $\verifier$ simply compute AES encryption of $\INC^i(IV)$ with 2PC-AES.
A subtlety to note is that $\verifier$ must check that the counters to be encrypted have \emph{not} been used as $IV$ previously. Otherwise $\prover$ would learn $h$ to $\prover$ in a manner like that outlined above.

\subsection{Proof Generation}
\label{sec:proof gen gcm}

\boldhead{Revealing a block}
$\prover$ wants to convince $\verifier$ that an AES block $B_i$ is the $i$th block in the encrypted record $\encrecord$. The proof strategy is as follows: 1) prove that AES block $B_i$ encrypts to the ciphertext block $\hat B_i$ and 2) prove that the tag is correct. Proving the correct encryption requires only $1$ AES in ZKP. Na\"ively done, proving the correct tag incurs evaluating the GHASH polynomial of degree $512$ and $2$ AES block encryptions in ZKP.

We manage to achieve a much more efficient proof by allowing $\prover$ to reveal two encrypted messages $\aes(\key, IV)$ and $\aes(\key, 0)$ to $\verifier$, thus allowing $\verifier$ to verify the tag (see~\cref{eqn:gcmtag}). $\prover$ only needs to prove the correctness of encryption in ZK and that the key used corresponds to the commitment, requiring $2$ AES and $1$ SHA-2 ($\prover$ commits to $\key_{\prover}$ by revealing a hash of the key). Thus, the total cost is $3$ AES and $1$ SHA-2 in ZKP.

\boldhead{Revealing a TLS record}
The proof techniques are a simple extension from the above case. $\prover$ reveals the entire record $\record$ and proves correct AES encryption of all the AES blocks, resulting in a total $514$ AES and $1$ SHA-2 in ZKP.

\boldhead{Revealing a TLS record except for a block}
Similar to the above case, $\prover$ proves encryption of all the blocks in the record except one, resulting in a total $513$ AES and $1$ SHA-2 in ZKP. %
\section{Protocol extensions}
\label{app:ext}

\subsection{Adapting to support TLS 1.3}
\label{app:tls13details}

To support TLS 1.3, the 3P-HS protocol must be adapted to a new handshake flow and a different key derivation circuit.
Notably, all handshake messages after the ServerHello are now {\em encrypted}. A naïve strategy would be to decrypt them in 2PC, which would be costly as certificates are usually large.
However, thanks to the key independence property of TLS 1.3~\cite{tls13handshake}, $\prover$ and $\verifier$ can securely reveal the handshake encryption keys without affecting the secrecy of final session keys~\cite{tls13handshake}.
Handshake integrity is preserved because the Finished message authenticates the handshake using yet another independent key.
(In fact~\cite[\S3.1]{tls13handshake} argues that the signatures already authenticate the handshake.)

Therefore the optimized 3P-HS work as follows. $\prover$ and $\verifier$ perform ECDHE the same as before. Then they derive handshake and application keys by executing 2PC-HKDF, and reveal the handshake keys to $\prover$, allowing $\prover$ to decrypt handshake messages locally (i.e., without 2PC). The 2PC circuit involves roughly 30 invocations of SHA-256, totaling to approximately 70k AND gates, comparable to that for TLS 1.2.
Finally, since CBC-HMAC is not supported by TLS 1.3, \systemname can only be used in GCM mode.

\subsection{Query construction is optional}

For applications that bind responses to queries, e.g., when a stock ticker is included with the quote, 2PC query construction protocols can be avoided altogether. Since TLS uses separate keys for each direction of communication, client-to-server keys can be revealed to $\prover$ after the handshake so that $\prover$ can query the server without interacting with $\verifier$.

\subsection{Supporting multi-round sessions}
\systemname can be extended to support multi-round sessions where $\prover$ sends further queries depending on previous responses. After each round, $\prover$ executes similar 2PC protocols as above to verify MAC tags of incoming responses, since MAC verification and creation is symmetric. However an additional commitment is required to prevent prevent $\prover$ from abusing MAC verification to forge tags.

In TLS, different MAC keys are used for server-to-client and client-to-server communication. To support multi-round sessions, $\prover$ and $\verifier$ run 2PC to verify tags for former, and create tags on fresh messages for latter. We've specified the protocols to create (and verify) MAC tags. Now we discuss additional security considerations for multi-round sessions.

When checking tags for server-to-client messages, we must ensure that $\prover$ cannot forge tags on messages that are not originally from the server. Suppose $\prover$ wishes to verify a tag $T$ on message $M$. The idea is to have $\prover$ first commit to $T$, then $\prover$ and $\verifier$ run a 2PC protocol to compute a tag $T'$ on message $M$. $\prover$ is asked to open the commitment to $\verifier$ and if $T\neq T'$, $\verifier$ aborts the protocol. Since $\prover$ doesn't know the MAC key, $\prover$ cannot compute and commit to a tag on a message that is not from the server.

When creating tags for client-to-server messages, $\verifier$ makes sure MAC tags are created on messages with increasing sequence numbers, as required by TLS.
This also prevents a malicious $\prover$ from creating two messages with the same sequence number, because there is no way for $\verifier$ to distinguish which one was sent to the server.

\subsection{An alternative \systemname protocol: Proxy mode}
\label{sec:alternative protocol}

As shown in~\cref{tab:2pc_costs},
the HMAC mode of \systemname is highly efficient and the runtime of creating and verifying HMAC tags in 2PC is independent of record size (cf. ~\cref{fig:2pc-hmac}).
The GCM mode is efficient for small requests with preprocessing, but can be expensive for large records. \ari{This seems too strong to me. We've shown that it can work reasonably well over a WAN, and with preprocessing, it definitely can.}\fanz{agree. revised.} We now present a highly efficient alternative that avoids post-handshake 2PC protocols altogether.

The idea is to have the verifier $\verifier$ act as a proxy between the prover $\prover$ and the TLS server $\tlsserver$, i.e., $\prover$ sends/receives messages to/from $\tlsserver$ through $\verifier$.
The modified flow of the \systemname protocol is as follows: after the three-party handshake, $\prover$ commits to her key share $\keyP$ then $\verifier$ reveals $\keyV$ to $\prover$. Therefore $\prover$ now has the entire session key $\key = \keyP+\keyV$. As $\prover$ uses $\key$ to continue the session with the server, $\verifier$ records the proxy traffic. After the session concludes, $\prover$ proves statements about the recorded session the same as before.

It's worth emphasizing that the three-party handshake is required for unforgeability. Unlike CBC-HMAC, GCM is not committing~\cite{DBLP:conf/crypto/GrubbsLR17}: for a given ciphertext and tag $(C,T)$ encrypted with key $\key$, one can find $\key'\neq \key$ that decrypts $C$ to a different plaintext while computing the same tag, as GCM MAC is not collision-resistant.
To prevent such attacks, the above protocol requires $\prover$ to commit to her key share before learning the session key.

\boldhead{Security properties and network assumptions}
The verifier-integrity and privacy properties are clear, as a malicious $\verifier$ cannot break the integrity and privacy of TLS (by assumption).

For prover integrity, though, we need to assume that the proxy can reliably connect to  $\tlsserver$ throughout the session.
First, we assume the proxy can ascertain that it indeed is connected with $\tlsserver$. Moreover, we assume messages sent between the proxy and $\tlsserver$ cannot be tampered with by $\prover$, who knows the session keys and thus could modify the session content.

Note that during the three-party handshake, $\verifier$ can ascertain the server's identity by checking the server's signature over a fresh nonce (in standard TLS). After the handshake, however, $\verifier$ has to rely on network-layer indicators, such as IP addresses.
In practice, $\verifier$ must therefore have correct, up-to-date DNS records, and that the network between $\verifier$ and the server (e.g., their ISP and the backbone network) must be properly secured against traffic injection, e.g., throught BGP attacks~\cite{bgp_raptor}. (Eavesdropping isn't problematic.)

\ari{In the following paragraph, I'd not couch things as if we've implemented them, and I'd just express this as an opportunity for future work. We can just note that: (1) BGP attacks are challenging to mount; (2) We can make them harder to mount by distributing verifier nodes geographically; (3) Various detection techniques have been proposed (cite them) that might be deployed by verifiers: (4) Often BGP attacks are documented after the fact (cite), and DECO can be enhanced to support credential revocation for affected sessions.}\fanz{changes here}
These assumptions have been embraced by other systems in a similar proxy setting (e.g.,~\cite{blindca}), as BGP attacks are challenging to mount in practice. We can further enhance our protocol against traffic interception by distributing verifiers nodes geographically.
Moreover, various detection techniques have been proposed~\cite{butler2010bgpsurvey,kruegel2003topology,zhang2005learning,deshpande2009online,zhang2004detection,bgpmon,ThousandEyes} that can be deployed by verifiers.
Often BGP attacks are documented after the fact (e.g., see~\cite{bgpstream}), therefore, when applicable, applications of \systemname can be enhanced to support revocation of affected sessions (for example, when \systemname is used to issue credentials in an identity system such as~\cite{candid}.)
We leave further exploration as future work.

This alternative protocol represents a different performance-security tradeoff. It's highly efficient because no intensive cryptography occurs after the handshake, but it requires additional assumptions about the network and therefore only withstands a weaker network adversary. %
\section{Security proofs}
\label{sec:securityproofs}
\label{sec:securityproofsUC}

\newcommand{\env}{\mathcal{Z}}

Recall Theorem \ref{thm:mainUC}. %
We now prove that the protocol in~\cref{fig:fullprotocolUC} securely realizes $\idealOracle$. Specifically, we show that for any real-world adversary $\adv$, we can construct an ideal world simulator $\simulator$, such that for all environments $\env$, the ideal execution with $\simulator$ is indistinguishable from the real execution with $\adv$. We refer readers to~\cite{simulation,uc} for simulation-based proof techniques.
\begin{proof}
Recall that we assume $\tlsserver$ is honest throughout the protocol. Hence, we only consider cases where $\adv$ maliciously corrupts either $\prover$ or $\verifier$. This means that we only need to construct ideal-world simulators for the views of $\prover$ and $\verifier$. %

\boldhead{Malicious $\prover$}
We wish to show the prover-integrity guarantee. Basically, if $\verifier$ receives $(b, \tlsserver)$, then $\prover$ must have input some $\PRI$ such that $\tlsserver(\query(\PRI))=R$ and $b=\PRED(R)$.

Given a real-world PPT adversary $\adv$, $\simulator$ proceeds as follows:

\begin{enumerate}[leftmargin=*]
\item $\simulator$ runs $\adv$, $\idealZK$ and $\idealTwoPC$ internally. $\simulator$ forwards any input $z$ from $\env$ to $\adv$ and records the traffic going to and from $\adv$.%
\item Upon request from $\adv$, $\simulator$ runs 3P-HS as $\verifier$ (using $\idealTwoPC$ as a sub-routine). During 3P-HS, when $\adv$ outputs a message $m$ intended for $\tlsserver$, $\simulator$ forwards it to $\idealOracle$ as $(\sid, \tlsserver, m)$ and forwards $(\sid, m)$ to $\adv$ if it receives any messages from $\idealOracle$. By the end, $\simulator$ learns $Y_P, s_V, \mackeyV$.
\item Upon request from $\adv$, $\simulator$ runs 2PC-HMAC as $\verifier$, using $\mackeyV$ as input. Again, $\simulator$ uses $\idealTwoPC$ as a sub-routine to run 2PC-HMAC and forwards messages to $\tlsserver$ as above and forwards the response from $\tlsserver$ to $\adv$. $\simulator$ records the messages between $\adv$ and $\tlsserver$ during this stage in $(\hat Q,\hat R)$. Note that these are ciphertext records.
\item When $\adv$ sends $(\sid, \hat Q, \hat R, \mackeyP)$, reply with $(\sid, \mackeyV)$.
\item\label{item:checkStep} Upon receiving $(\sid, \msgprove, x, w)$ (with $x=(\enckey, \PRI, Q, R)$ and $w=(\hat Q, \hat R, \mackey, b)$) from $\adv$, $\simulator$ checks that
\begin{align*}
    \hat Q &= \mathsf{CBC\_HMAC}(\enckey, \mackey, Q) \\
    \hat R &= \mathsf{CBC\_HMAC}(\enckey, \mackey, R) \\
    Q &= \query(\PRI)\,.
\end{align*}

\item\label{item:sendStep} If all of the above checks passed, $\simulator$ sends $\PRI$ to $\idealOracle$ and instructs $\idealOracle$ to send the output to $\verifier$. $\simulator$ outputs whatever $\adv$ outputs.
\end{enumerate}

Now we argue that the ideal execution with $\simulator$ is indistinguishable from the real execution with $\adv$.

{\bf Hybrid $\mathbf{H_1}$} is the real-world execution of $\decoPROT$.

{\bf Hybrid $\mathbf{H_2}$} is the same as $H_1$, except that $\simulator$ simulates $\adv$, $\idealZK$ and $\idealTwoPC$ internally. $\simulator$ records and forwards its private $\PRI$ input to $\adv$. For each step of $\decoPROT$, $\simulator$ forwards all messages between $\adv$ and $\verifier$ and $\adv$ and $\tlsserver$, as in the real execution. Since the simulation of ideal functionality is perfect, $H_1$ and $H_2$ are indistinguishable.

{\bf Hybrid $\mathbf{H_3}$} is the same as $H_2$, except that $\verifier$ sends input to $\idealOracle$, which sends it to $\simulator$ and $\simulator$ simulates $\verifier$ internally. Specifically, $\simulator$ samples $\hat s_V$ and uses $\hat s_V \cdot Y$ to derive a share of the MAC key $\hat K$, which it uses in the sequential 2PC-HMAC invocations. Upon receiving $(\sid, \hat Q, \hat R, \mackeyP)$, $\simulator$ sends $(\sid, \mackeyV)$ to $\adv$.  If $\simulator$ receives $(\sid, \msgprove, x, w)$, it internally forwards it to $\idealZK$, verifies its output as $\verifier$ and also, sends $\PRI$ to $\idealOracle$.
The indistinguishability between $H_2$ and $H_3$ is immediate because $\hat s_V$ is uniformly random.

{\bf Hybrid $\mathbf{H_4}$} is the same as $H_3$, except $\simulator$ adds the checks in Step \ref{item:checkStep}. The indistinguishability between $H_3$ and $H_4$ can be shown by checking that if any of the checks fails, $\verifier$ would abort the real-world execution as well. There are two reasons that $\simulator$ may abort: 1) $Q, R$ from $\adv$ is not originally from $\tlsserver$, or 2) $\enckey, \mackey$ from $\adv$ is not the same key as derived during the handshake. We now show that both conditions would trigger $\verifier$ to abort in $H_3$ as well except with negligible probability.
\begin{itemize}[leftmargin=*]
    \item Assuming DL is hard in the group used in the handshake, $\adv$ cannot learn $\hat s_V$. Furthermore, due to the security of 2PC, $\adv$ cannot learn the session MAC key $\mackey$. If $\adv$ maliciously selects ${\hat Y}_P$ correlated with ${\hat Y}_V$, it would have to find the discrete log of ${\hat{Y}_P - Y_V}$, denoted ${\hat s}_P$. Without such a ${\hat s}_P$, except with negligible probability, the output shares ${\hat K}^{\mathsf{MAC}}_{\verifier} $ and ${\hat K}^{\mathsf{MAC}}_{\prover}$ of 3P-HS would fail to verify a MAC from an honest server whose MAC key is derived using ${\hat Y}_P$ in 2PC-HMAC, later in the protocol.

    \item The unforgeability guarantee of HMAC ensures that without knowledge of $\mackey$, $\adv$ cannot forge tags that verifies against $\mackey$ (checked by $\verifier$ in the last step of $\decoPROT$).
    \item If $\adv$ sends a different $(\enckey, \mackey)$ pair than that derived during the handshake to $\simulator$ and the decryption and MAC check succeeds, then $\adv$ would have broken the receiver-binding property of CBC-HMAC~\cite{DBLP:conf/crypto/GrubbsLR17}.
\end{itemize}

It remains to show that $H_4$ is exactly the same the ideal execution. Due to Step \ref{item:checkStep} and \ref{item:sendStep}, $\idealOracle$ delivers $(\sid, \PRED(R), \tlsserver)$ to $\verifier$ only if $\exists \PRI$ from $\adv$ such that $R$ is the response from $\tlsserver$ to $\query(\PRI)$.

\boldhead{Malicious $\verifier$}
As the verifier is corrupt, we are interested in showing the verifier-integrity and privacy guarantees. $\simulator$ proceeds as follows:

\begin{enumerate}[leftmargin=*]
    \item $\simulator$ runs $\adv$, $\idealZK$ and $\idealTwoPC$ internally to simulate the real-world interaction with the prover $\prover$. Given input $z$ from the environment $\env$, $\simulator$ forwards it to $\adv$.
    \item Upon receipt of $\query$ and $\PRED$ from $\adv$, forward them to $\idealOracle$ and instruct it to send them to $\prover$.
    \item After $\prover$ sends $\PRI$ to $\idealOracle$, $\idealOracle$ sends the output $(\sid, Q, R)$ to $\prover$. $\simulator$ gets $(\sid, \PRED(R), \tlsserver)$ from $\idealOracle$ and learns the record sizes $|Q|$, $|R|$.
    \item Send $(\sid, \tlsserver, \mathsf{handshake})$ to $\idealOracle$, where $\mathsf{handshake}$ contains client handshake messages and receive certificate and signatures of $\tlsserver$ from $\idealOracle$. Note that at the end of the server handshake, $\prover$ receives and sends finished messages, which we denote $\msgfinS$ and $\msgfinP$. The finished messages include HMAC tags, which we denote $\tau_{\tlsserver}$ and $\tau_{\prover}$ (tags on $\tlsserver$ and $\prover$'s messages respectively).
    \item Upon request from $\adv$, $\simulator$ runs 3P-HS as $\prover$, using the server handshake messages received in the previous step, learning $s_P, Y_V$, $\enckey$, $\mackeyP$.
    \item $\simulator$ starts 2PC-HMAC as $\prover$ to compute a tag $\tau_q$ on a random $Q'\sample \{0, 1\}^{|Q|}$.
    \item $\simulator$ uses a random key $\hat k$ to compute a tag $\tau_r$ on a random $R' \sample \{0, 1\}^{|R|}$.
    \item\label{item:getStepV} Let $\hat Q = \CBC(\enckey, Q' \| \tau_q)$ and $\hat R = \CBC(\enckey, R' \| \tau_r)$. At the commit phase, $\simulator$ sends encrypted data $(\sid, \hat Q, \hat R, \mackeyP)$ to $\adv$ and receives $\mackeyV$ from $\adv$.
    \item\label{item:checkStepV1} $\simulator$ asserts that $\tau_{\tlsserver} = \mathsf{HMAC}(\mackey, \allowbreak \msgfinS)$ and that $\tau_{\prover} = \mathsf{HMAC}(\mackey, \msgfinP)$.
    \item\label{item:checkStepV2} $\simulator$ asserts that $\tau_q = \mathsf{HMAC}(\mackey, Q')$.
    \item To simulate the appropriate delay, $\simulator$ also runs a dummy computation $\mathsf{HMAC}(\mackey, R')$ in paralell with Step \ref{item:checkStepV1}.
    \item $\simulator$ sends $(\sid, \msgproof, 1, (\hat Q, \hat R, \mackey, \PRED(R)))$ to $\adv$ and outputs whatever $\adv$ outputs.
\end{enumerate}

We argue that the ideal execution with $\simulator$ is indistinguishable from the real execution with $\adv$ in a series hybrid worlds.

{\bf Hybrid $\mathbf{H_1}$} is the real-world execution of $\PROT_\text{\systemname}$.

{\bf Hybrid $\mathbf{H_2}$} is the same as $H_1$, except that $\simulator$ simulates $\idealZK$ and $\idealTwoPC$ internally. $\simulator$ also invokes $\idealOracle$ and gets  $(\sid, \PRED(R), \tlsserver)$, learns record sizes $|Q|$, $|R|$. Since the simulation of ideal functionality is perfect, $H_1$ and $H_2$ is indistinguishable.

{\bf Hybrid $\mathbf{H_3}$} is the same as $H_2$, except that $\simulator$ simulates $\prover$. Specifically, $\simulator$ samples $s_P$ and uses $s_P \cdot Y$ to derive a share of the MAC key $\mackeyP$. Then, $\simulator$ uses $\mackeyP$ and a random $Q' = \{0, 1\}^{|Q|}$ as inputs to 2PC-HMAC and receives the tag $\tau_q$. Then, $\simulator$ uses a random key $\hat k$, and a random $R' = \{0, 1\}^{|R|}$ to compute a dummy tag $\tau_r$. Afterwards, $\simulator$ commits, i.e., sends encryption of $Q'$ and $R'$ to $\adv$. $\simulator$ also adds the checks in Step \ref{item:checkStepV1} and \ref{item:checkStepV2}. To simulate the appropriate delay for checking a tag on $R'$, a plaintext of length $|R|$, $\simulator$ runs a dummy tag computation. Finally, $\simulator$ skips invoking $\idealZK$ and directly provides $\adv$ with the output obtained earlier from $\idealOracle$, i.e., $\PRED(R)$, alongwith $\mackey$, i.e. the tuple $(\sid, \msgproof, 1, (\hat Q, \hat R, \mackey, \PRED(R)))$. $\adv$ cannot distinguish between the real and ideal executions because:
\begin{enumerate}[leftmargin=*]
    \item Since input sizes are equal, the number of invocations of 2PC-HMAC is also equal.
    \item In each invocation of 2PC-HMAC and $\mathsf{HMAC}$, $\adv$ learns one SHA-2 hash of the input message which is like a random oracle.
    \item If the value of $\mackeyV$ provided by $\verifier$ is correct, in both the real and ideal world, all tags should verify and the protocol should proceed to the next step and the time to run the checks should be indistinguishable from the real world.
    \item $\adv$ can provide a malicious $\mackeyV$ in two ways:
    \begin{itemize}
        \item Malicious $\mackeyV$ is provided by $\verifier$ in Step \ref{item:getStepV}: $\tau_{\tlsserver}$ and $\tau_{\prover}$ will not verify in Step \ref{item:checkStepV1}. $\simulator$ will then abort with the same delay as in the real world.
        \item $\adv$ inputs a malicious $\mackeyV$ to the 2PC-HMAC: $\tau_q$ will fail to verify in \ref{item:checkStepV2} by the same argument as in the malicious $\prover$ case.
    \end{itemize}

    \item Since $|Q'| = |Q|$ and $|R'| = |R|$, their encryptions are also of equal size and indistinguishable.
    \item In the end, $\adv$ receives the same output as the real execution.
\end{enumerate}

\end{proof}
\section{Application Details}
\label{app: app details}

We provide the remaining application details omitted from~\cref{sec:applications} here.

\boldhead{Binary Option}\label{sec:binaryOpApp}
The user ($\prover$) also needs to reveal enough portion of the HTTP GET request to oracle ($\verifier$) in order to convince access to the correct API endpoint.
The GET request contains several parameters---some to be revealed like the API endpoint, and others with sensitive details like stock name and private API key.
$\prover$ redacts sensitive params using techniques from~\cref{sec:seletive opening} and reveals the rest to $\verifier$. The API key provides enough entropy preventing $\verifier$ from learning the sensitive params.
Without additional care though, a cheating $\prover$ can alter the semantics of the GET request and conceal the cheating by redacting extra parameters. To ensure this does not happen, $\prover$ needs to prove that the delimiter ``\&'' and separator ``='' do not appear in the redacted text. The security is argued below.

HTTP GET requests (and HTML) have a special restriction: the demarcation between a {\tt key} and a {\tt value} (i.e., \midd) and the start of a key-value pair (i.e., \start) are never substrings of a {\tt key} or a {\tt value}. This means that to redact more than a single contiguous {\tt key} or {\tt value}, $\prover$ must redact characters in \{\midd, \start\}. So we have $\cons_{\grammar, \grammar'}(R, R')$  check that: (1) $|R| = |R'|$; and (2) $\forall i\in |R'|$, either $R'[i] = D \land R[i] \notin \set{\text{\midd, \start}}$ or $R[i] = R'[i]$ ($D$ is a dummy character used to do in-place redaction). Checking $\ctx_\partgrammar$ is then unnecessary.

\boldhead{Age Proof}
\Cref{fig:age} shows the demographic details of a student stored on Univ. website such as the name, birth date, student ID among others. The prover parses 6-7 AES blocks that contain the birth date and proves her age is above 18 in ZK to the verifier. Like other examples, due to the unique HTML tags surrounding the birth date, this is also a key-value grammar with unique keys (see~\cref{subsec:two stage parsing}). Similar to application 1, this example requires additional string processing to parse the date and compute age.

\boldhead{Price discrimination}
\Cref{fig:shoppingorder} shows parts of an order invoice page on a shopping website (Amazon) with personal details such as the name and address of the buyer. The buyer wants to convince a third-party (verifier) about the charged price of a particular product on a particular date. In this example, we use AES-GCM ciphersuite and \REV{} mode. Only necessary details in the invoice like the item name, item price and order date are revealed, while hiding the rest. Number of AES blocks revealed from the response is 20 (thanks to a long product name). In addition, 4 AES blocks from the request are revealed to prove that the correct endpoint is accessed. Context integrity is guaranteed by revealing unique strings around, e.g., the string ``<tr>Order Total:'' near the item price appears only once in the entire response.
\deepak{Jasleen: it'd be good if you can check this argument once.}

\begin{figure}
\begin{lstlisting}[language=HTML]
<title>Demographic Data</title>
<span id='EMPLID'> 111111 </span>
<span id='NAME'> Alice </span>
|\colorbox{green!30}{<span id='BIRTHDATE'> 01/01/1990 </span>}|...
\end{lstlisting}
    \caption{The demographic details of a student displayed on a Univ. website. Highlighted text contains student age. \REV{} mode is used together with two-stage parsing.}
    \label{fig:age}
\end{figure}

\begin{figure}
\begin{lstlisting}[language=HTML]
<table>
|\colorbox{green!30}{<tr>Order Placed: November 23, 2018</tr>}|
|\colorbox{green!30}{<tr>Order Total: \$34.28</tr>}|
|\colorbox{green!30}{<tr>Items Ordered: Food Processor</tr>}|
</table>
...
<b> Shipping Address: </b>
<ul class="displayAddressUL">
<li class="FullName">Alice</li>
<li class="Address">Wonderland</li>
<li class="City">New York</li>
</ul>
\end{lstlisting}
\caption{The order invoice page on Amazon in HTML. \REV{} mode is used to reveal the necessary text, while sensitive text below is kept hidden.}
\label{fig:shoppingorder}
\end{figure}
\section{Key-Value Grammars and Two-Stage Parsing}
\label{app:keyvalue}
\subsection{Preliminaries and notation}
We denote context-free grammars as $\grammar=(V, \Sigma, P, S)$ where $V$ is a set of non-terminal symbols, $\Sigma$ a set of terminal symbols, $P: V\to (V\cup \Sigma)^*$ a set of productions or rules and $S\in V$ the start-symbol. We define production rules for CFGs in standard notation using `-' to denote a set minus and `..' to denote a range. For a string $w$, a parser determines if $w\in \grammar$ by constructing a parse tree for $w$. The parse tree represents a sequence of production rules which can then be used to extract semantics.

\subsection{Key-value grammars}
\label{sec:keyvalue grammar}
These are grammars with the notion of key-value pairs. These grammars are particularly interesting for \systemname since most API calls and responses are, in fact, key-value grammars.

\begin{definition}
$\grammar$ is said to be a key-value grammar if there exists a grammar $\mathcal{H}$, such that given any $s\in \grammar$, $s\in \mathcal{H}$, and $\mathcal{H}$ can be defined by the following rules:
\begin{spacing}{0.8}
\noindent
{\footnotesize
\texttt{\textbf{S} $\to$ object\\
\textbf{object} $\to$~noPairsString~open~pair~pairs~close \\
\textbf{pair} $\to$ \underline{start}~key~\underline{middle}~value~\underline{end}\\
\textbf{pairs} $\to$ pair pairs | ""\\
\textbf{key} $\to$ chars\\
\textbf{value} $\to$ chars | object\\
\textbf{chars} $\to$ char chars  | ""\\
\textbf{char} $\to$~Unicode~-~escaped~|~escape~escaped~|~addedChars\\
\textbf{special} $\to$ startSpecial | middleSpecial | endSpecial \\
\textbf{\underline{start}} $\to$ unescaped$_s$ startSpecial\\
\textbf{\underline{middle}} $\to$ unescaped$_m$ middleSpecial\\
\textbf{\underline{end}} $\to$ unescaped$_e$ endSpecial\\
\textbf{escaped} $\to$ special | escape | ...
}\par}
\end{spacing}
\label{def:kv}
\end{definition}

In~\cref{def:kv}, \texttt{S} is the start non-terminal (represents a sentence in $\mathcal{H}$), the non-terminals \texttt{open} and \texttt{close} demarcate the opening and closing of the set of key-value pairs and \texttt{\underline{start}, \underline{middle}, \underline{end}} are special strings demarcating the start of a key-value pair, separation between a key and a value and the end of the pair respectively.

In order to remove ambiguity in parsing special characters, i.e. characters which have special meaning in parsing a grammar, a special non-terminal, {\tt escape} is used. For example, in JSON, {\tt key}s are parsed when preceded by `whitespace double quotes' ({\tt ``}) and succeeded by double quotes. If a {\tt key} or {\tt value} expression itself must contain double quotes, they must be preceded by a backslash ({\tt \textbackslash}), i.e. escaped. In the above rules, the non-terminal {\tt unescaped} before special characters means that they can be parsed as special characters. So, moving forward, we can assume that the production of a key-value pair is unambigious. So, if a substring $R'$ of a string  $R$ in the key-value grammar $\grammar$  parses as a {\tt pair}, $R'$ must correspond to a {\tt pair} in the parse tree of $R$.

Note that in~\cref{def:kv}, \midd cannot derive an empty string, i.e. a non-empty string must mark \midd to allow parsing \texttt{key}s from \texttt{value}s. However, one of \start and \End can have an empty derivation, since they only demarcate the separation between \texttt{value} in one pair from \texttt{key} in the next. Finally, we note that in our discussion of two-stage parsing for key-value grammars, we only we consider permissible paths with the requirement that the selectively opened string, $\opening$ corresponds to a {\tt pair}.

\subsection{Two-stage parsing for a locally unique key}\label{sec:locallyUniqueKeyDef}
Many key-value grammars enforce key uniqueness within a scope. For example, in JSON, it can be assumed that keys are unique within a JSON {\tt object}, even though there might be duplicated keys across {\tt object}s.
The two-stage parsing for such grammars can be reduced to parsing a substring. Specifically, $\trans$ extracts from $R$ a continuous substring $R'$, such that the scope of a {\tt pair} can be correctly determined, even within $R'$. For instance, in JSON, if $\cons_{\grammar, \partgrammar}(R, R')$ returns $\true$ iff $R'$ is a prefix of $R$, then only parsing $R'$ as a JSON, up to generating the sub-tree yielding $\opening$ is sufficient for determining whether a string $\opening$ corresponds to the correct context in $R$. %

\subsection{Grammars with unique keys}
\label{sec:uniqueKeyDef}

Given a key-value grammar $\grammar$ we define a function which checks for uniqueness of keys, denoted $u_\grammar$. Given a string $s\in \grammar$ and another string $k$, $u_\grammar(s, k) = \text{\texttt{true}}$ iff there exists at most one substring of $s$ that can be parsed as \texttt{\underline{start}} $k$ \texttt{\underline{middle}}. Since $s\in \grammar$, this means, in any parse tree of $s$, there exists at most one branch with node \texttt{key} and derivation $k$. Let $\parser{\grammar}$ be a function that returns $\true$ if its input is in the grammar $\grammar$. We say a grammar $\grammar$ is a  \textit{key-value grammar with unique keys} if for all $s\in \grammar$ and all possible keys $k$, $u_\grammar(s, k)=\text{\texttt{true}}$, i.e. for all strings $R$, $C$:
\[
   \inference{\langle \parser{\mathcal{G}}, R\rangle \Rightarrow \true}{\langle u_\grammar, (R, C) \rangle \Rightarrow \true}.
\]

\newcommand{\genericLLParser}{\mathsf{ParsingAlgo}_\text{LL(1)}}
\newcommand{\fullParser}{\mathsf{ParsingAlgo}_\text{full}}
\newcommand{\partialParser}{\mathsf{ParsingAlgo}_\text{part}}
\newcommand{\ptable}{\mathsf{PT}}
\newcommand{\keyParser}{\mathsf{keyParser}}
\newcommand{\manageEscapes}{\mathsf{Escape}}
\newcommand{\partparser}[1]{\parser{#1}^{\text{part}}}
\newcommand{\genericparser}[1]{\parser{#1}^{\text{LL(1)}}}
\newcommand{\unique}{\mathcal{U}}
\newcommand{\pairGram}{\mathcal{P}}

\subsection{Concrete two-stage parsing for unique-key grammars}\label{sec:uniqueKeyProof}
Let $\unique$ be a unique-key grammar as given above. We assume that $\mathcal{U}$ is LL(1). This is the case for the grammars of interest in Section \ref{sec:applications}. See \cite{grune2007parsing} for a general LL(1) parsing algorithm.

We instantiate a context function, $\ctx_\unique$ for a set $T$, such that $T$ contains the permissible paths to a {\tt pair} for strings in $\unique$. We additionally allow $\ctx_\unique$ to take as input an auxiliary restriction, a {\tt key} ${\sf k}$ (the specified key in $\prover$'s output $\opening$). The tuple $(T, {\sf k})$ is denoted $S$ and $\ctx_\unique(S,\, \cdot\,,\, \cdot\,)$ as $\ctx_{\unique, S}$.

Let $\pairGram$ be a grammar given by the rule \texttt{\textbf{S$_\pairGram$} $\to$ pair}, where {\tt pair} is the non-terminal in the production rules for $\unique$ and {\tt S$_\pairGram$} is the start symbol in $\pairGram$. We define $\parser{\pairGram, {\sf k}}$ as a function that decides whether a string $s$ is in $\pairGram$ and if so, whether the {\tt key} in $s$ equals ${\sf k}$.
On input $R, \opening$, $\ctx_{\unique, S}$ checks that: (a) $\opening$ is a valid key-value pair with key ${\sf k}$ by running $\parser{\pairGram, {\sf k}}$ (b) $\opening$ parses as a key-value pair in $R$ by running an LL(1) parsing algorithm to  parse $R$.

To avoid expensive computation of $\ctx_{\unique, S}$ on a long string $R$, we introduce the transformation $\trans$, to extract the substring $R'$ of $R$, such that $R'=\opening$ as per the requirements.

For string $s, t$, we also define functions $substring(s, t)$, that returns $\true$ if $t$ is a substring of $s$ and $equal(s, t)$  which returns $\true$ if $s=t$. We define $\cons_{\unique, \pairGram}$ with the rule:
\[
    \inference{\langle substring(R, R') \rangle \Rightarrow \true {\, \, \, \,}\langle \parser{\pairGram, {\sf k}}, R'\rangle \Rightarrow \true}{\langle cons_{\unique, \pairGram}, (R, R')\rangle\Rightarrow \true}.
\]
and $S' = \{\text{\texttt{S}}_{\mathcal{P}}\}$. Meaning, $\ctx_{\pairGram}(S, R', \opening)=\true$ whenever $equal(R', \opening)$ and the rule
\begin{equation*}
\begin{gathered}
\inference{\langle equal, (R', \opening) \rangle \Rightarrow {\sf b}}{\langle \ctx_{\mathcal{P}}, (S', R', \opening) \rangle \Rightarrow {\sf b}} %
\end{gathered}
\end{equation*}
holds for all strings $R', \opening$.

\begin{claim}
$(\cons_{\unique, \pairGram}, S')$ are correct with respect to $S$.
\label{claim:unique key}
\end{claim}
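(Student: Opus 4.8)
The plan is to unfold \cref{def:correctness} for the concrete instantiation $(\grammar,\partgrammar)=(\unique,\pairGram)$ and verify the operational-semantics rule directly, by case analysis on the boolean ${\sf b}$ returned by $\ctx_\pairGram$. Concretely, I would fix any $R\in\unique$ together with arbitrary $R'$ and $\opening$, and assume the two premises $\langle \cons_{\unique,\pairGram},(R,R')\rangle\Rightarrow\true$ and $\langle \ctx_\pairGram,(S',R',\opening)\rangle\Rightarrow{\sf b}$. By the definition of $\cons_{\unique,\pairGram}$ the first premise unpacks into ``$R'$ is a substring of $R$'' together with $\parser{\pairGram,{\sf k}}(R')=\true$, i.e.\ $R'$ parses as a \texttt{pair} whose \texttt{key} equals ${\sf k}$; by the definition of $\ctx_\pairGram$ the second premise says exactly ${\sf b}=equal(R',\opening)$. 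The goal is to conclude $\langle \ctx_\unique,(S,R,\opening)\rangle\Rightarrow{\sf b}$, which by \cref{def:context integ} asks whether $T_R$ has a subtree rooted at a \texttt{pair} node with \texttt{key} ${\sf k}$, reachable by a permissible path in $T$, whose yield is $\opening$.

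Before splitting into cases I would isolate two structural facts and reuse both. First, the unambiguity/escaping property noted just after \cref{def:kv}: any substring of $R$ that parses as a \texttt{pair} must correspond to an actual \texttt{pair} node in the parse tree $T_R$. Second, the unique-key property $u_\unique(R,{\sf k})=\true$, which guarantees that $T_R$ contains \emph{at most one} \texttt{pair} node whose \texttt{key} derives ${\sf k}$. Combining the first fact with the first premise, the substring occurrence of $R'$ is realized by a genuine \texttt{pair} node $\nu\in T_R$ with key ${\sf k}$; combining this with the second fact, $\nu$ is the unique such node and its yield is exactly $R'$. Because $T$ is defined to contain every permissible path from $\rho_{T_R}$ to a \texttt{pair} node, the path to $\nu$ lies in $T$, so the permissible-path side condition of $\ctx_\unique$ is met automatically; this bookkeeping is the routine part.

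With $\nu$ in hand the two cases are immediate. When ${\sf b}=\true$ the second premise gives $\opening=R'$, so $\nu$ is itself a \texttt{pair} node with key ${\sf k}$, reachable by a path in $T$, yielding $\opening$; hence $\ctx_\unique(S,R,\opening)=\true$. When ${\sf b}=\false$ we have $R'\neq\opening$ and must show $\ctx_\unique(S,R,\opening)=\false$, which I would argue by contradiction: if it were $\true$, there would be a \texttt{pair} node with key ${\sf k}$ yielding $\opening$, but by uniqueness this node must coincide with $\nu$, whose yield is $R'$, forcing $\opening=R'$ and contradicting $R'\neq\opening$.

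The main obstacle I anticipate is justifying the first structural fact rigorously rather than merely citing the informal remark. I would argue from the production rules of \cref{def:kv}---specifically the \texttt{escape}/\texttt{unescaped} discipline guarding the \start, \midd, and \End delimiters---that a substring parsing as a \texttt{pair} cannot straddle, nor be spuriously embedded inside, the yield of some other \texttt{key} or \texttt{value}, so its boundaries necessarily align with a node boundary of $T_R$. Everything downstream (the case split, the uniqueness argument, and the permissible-path check) then follows without further effort.
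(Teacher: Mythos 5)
Your proposal is correct and takes essentially the same route as the paper's own (sketch-level) argument: both rest on exactly the two structural facts you isolate, namely that a substring of $R$ parsing as a \texttt{pair} must correspond to an actual \texttt{pair} node of $T_R$ (the escaping/unambiguity remark following \cref{def:kv}) and that key uniqueness makes that node the only one with key ${\sf k}$. If anything, your write-up is more complete than the paper's, which only argues the ${\sf b}=\true$ direction and explicitly defers a formal proof to a full version, whereas you also close the case where ${\sf b}$ is false via the uniqueness-based contradiction.
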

\begin{proof}
We defer a formal proof and pseudocode for $\ctx_{\unique, S}$ to a full version, but the intuition is that if $R'$ is substring of $R$, a key-value pair $\opening$ is parsed by $\parser{\pairGram}$, then the same pair must have been a substring of $\unique$. Due to global uniqueness of keys in $\unique$,  there exists only one such pair $\opening$ and $\ctx_\unique(S, R, \opening)$ must be $\true$.
\end{proof} 
\end{document}